\newcommand{\R}{{\mathbb R}}
\newcommand{ \bee}{\begin{eqnarray}}
\newcommand{ \eee}{\end{eqnarray}}
\newcommand{\p}{{\hbar}}
\newcommand{\oC}{{\mathbb C}}
\newcommand{\di}{{ d }}
\newcommand{\D}{{\cal D}}
\newcommand{\eE}{{\cal E}}
\newcommand{\G}{{\mathbb G}}
\newcommand{\Z}{{\mathbb Z}}
\newcommand{\K}{{\mathbb K}}
\newcommand{\be}{\begin{equation}}
\newcommand{\ee}{\end{equation}}
\newcounter{theorem}
\newcommand{\theorem}{\par\refstepcounter{theorem}
           {\bf Theorem \arabic{section}.\arabic{theorem}. }}
\renewcommand\thetheorem{\thesection.\arabic{theorem}}
\makeatletter \@addtoreset{theorem}{section}
\newcounter{lemma}
\newcommand{\lemma}{\par\refstepcounter{lemma}
           {\bf Lemma \arabic{section}.\arabic{lemma}. }}
\renewcommand\thelemma{\thesection.\arabic{lemma}}
\makeatletter \@addtoreset{lemma}{section}
\newcounter{proposition}
\newcommand{\proposition}{\par\refstepcounter{proposition}
           {\bf Proposition \arabic{section}.\arabic{proposition}. }}
\renewcommand\theproposition{\thesection.\arabic{proposition}}
\makeatletter \@addtoreset{proposition}{section}
\makeatletter \@addtoreset{equation}{section}
\def\theequation{\thesection.\arabic{equation}}
\newcounter{appen}
\newcommand{\appen}[1]{\par\refstepcounter{appen}
{\par\medskip\noindent\Large\bf Appendix \arabic{appen}.
\medskip
}{\Large\bf #1}}
\newenvironment{proof}[1][Proof]{\noindent\textsf{#1.\ }}{ \ \rule{0.5em}{0.5em}}
\newcounter{subappen}
\newcommand{\subappen}[1]{\par\refstepcounter{subappen}
{\par\vskip 1cm\noindent\large\bf  \arabic{appen}.\arabic{subappen}.
}{\large\bf #1}\vskip 0.5 cm}
\makeatletter \@addtoreset{subappen}{appen}
\newcounter{subsubappen}
\newcommand{\subsubappen}[1]{\par\refstepcounter{subsubappen}
{\par\medskip\noindent\bf \arabic{appen}.\arabic{subappen}.\arabic{subsubappen}.
\medskip
}{\bf #1}}
\makeatletter \@addtoreset{subsubappen}{subappen}
\font\frtnfr=eufm10   scaled\magstep1
\font\twlfr=eufm10
\font\tenfr=eufm10
\font\frtnopen=msbm10  scaled\magstep2
\font\twlopen=msbm10
\font\tenopen=msbm10
\font\frtnsf = cmss12 scaled\magstep1
\font\twlsf = cmss10
\font\tensf = cmss9
\begin{document}

%\mark{ Dhep-de1.tex \ddatt}

\sloppy \title
 {
The deformations of antibracket
with even and odd deformation parameters, defined on the space $DE_1$}
\author
 {
 S.E.Konstein\thanks{E-mail: konstein@lpi.ru}\ \ and
 I.V.Tyutin\thanks{E-mail: tyutin@lpi.ru}
 \thanks{
               This work was supported
               by the RFBR (grant No.~11-02-00685)
 } \\
               {\sf \small I.E.Tamm Department of
               Theoretical Physics,} \\ {\sf \small P. N. Lebedev Physical
               Institute,} \\ {\sf \small 119991, Leninsky Prospect 53,
               Moscow, Russia.} }
\date{
%\ddatt
}

\maketitle

\maketitle

\begin{abstract}
{ \footnotesize We consider antiPoisson superalgebra realized on the smooth
Grassmann-valued functions of the form $\xi f_0(x)+f_1(x)$,
where $f_0$ has compact support on $\R^1$,
 and with the parity opposite to
that of the Grassmann superalgebra realized on these functions.
The deformations with even and odd deformation parameters
 of this superalgebra
are found.
}
\end{abstract}

%%%%%%%%%%%%%%%%%%%%%%%%%%%%%%%%%%%%%%%%%%%%%%%%%%%
\section{Introduction}

Let $\G^n$ be a Grassmann algebra on $n$ indeterminates.
In \cite{apoi} we described the deformations of antiPoisson superalgebra
realized on the space $\mathbf D_{n}$ of smooth
$\G^n$-valued functions with compact support on $\R^n$ and show that there exists either
one deformation with one even deformation parameter, or one deformation with one odd parameter.
During the proof of this statement, the second cohomology space $H^2(\mathbf D_{n}, \mathbf E_{n})$
was calculated, where $\mathbf E_{n}$ is the space of smooth
$\G^n$-valued functions on $\R^n$,
 and it was shown that $dim H^2(\mathbf D_{n}, \mathbf E_{n})=1|1$ if $n\ge 2$
and $dim\,H^2(\mathbf D_{1}, \mathbf E_{1})=3|3$.

Let $\mathbf {DE}$ be the space of $\G^1$-valued functions of the
form $f(x,\xi)=\xi f_0(x)+f_1(x)$, where $f_0$ and $f_1$ are smooth
functions on $\R^1$, such that $f_0$ has compact support, and $\xi$
is the only generating element of the Grassmann algebra $\G^1$.

Here we explore the observation, that
$H^2(\mathbf D_{1}, \mathbf E_{1})=H^2(\mathbf {D}_{1}, \mathbf {DE})\subset
H^2( \mathbf {DE}, \mathbf  {DE})$, and that $ \mathbf {DE}$ may have a deformation
with several even and several odd deformation parameters.

In the present work, we found the deformations of Poisson antibracket
realized on $ \mathbf {DE}$.

Particularly, we found all the deformations with one even and at most three odd deformation parameters,
and all the deformation of some particular form with arbitrary number of odd parameters.

The text is organized as follows.
For background, see Section \ref{nota}.
Section \ref{cohom-res} contains calculation of
$H^2(\mathbf{DE},\mathbf{DE})$.
This space is much wider than $H^2(\mathbf D_{1}, \mathbf E_{1})$ and
the additional cocycles are parameterized by the elements of the quotient space
${\cal D}' / ({C^\infty})'$

Theorems \ref{-th1} and \ref{-th2} describing the deformations are formulated in Section \ref{def-res}.

Main Theorems (\ref{cc} and \ref{-th2}) are proved in Appendices \ref{-app1}-\ref{app6}.

\section{Preliminary and notation}\label{nota}

Let $\K$ be either $\R$ or $\oC$. We denote by ${\cal D}(\R^n)$ the space
of smooth $\K$-valued functions with compact supports on $\R^n$. This space
is endowed with its standard topology. We set
$$
 \mathbf D_{n}= {\cal
D}(\R^{n})\otimes \G^{n},\quad \mathbf E_{n}=
C^\infty(\R^{n})\otimes \G^{n},
%\quad \mathbf D^{\prime }_{n}={\cal D}'(\R^{n})\otimes \G^{n},
$$
The generators of
the Grassmann algebra (resp., the coordinates of the space $\R^{n}$) are
denoted by $\xi^\alpha$, $\alpha=1,\ldots,n$ (resp., $x^i$, $i=1,\ldots,
n$).
We also use collective variables $z^A$ which are equal to $x^A$
for $A=1,\ldots,n$ and are equal to $\xi^{A-n}$ for
$A=n+1,\ldots,2n$.

The spaces $\mathbf D_{n}$ and $\mathbf
E_{n}$  possess a natural parity
which is determined by that of the Grassmann algebra.
The Grassmann parity  of an
element $f$ of these spaces is denoted by $\varepsilon(f)$.

The spaces $\mathbf D_{n}$ and $\mathbf
E_{n}$ possess also another parity
$\epsilon$
 which is opposite to the $\varepsilon$-parity: $\epsilon=\varepsilon+1$.

We set
$\varepsilon_A=0$, $\epsilon_A=1$ for $A=1,\ldots, n$
and $\varepsilon_A=1$, $\epsilon_A=0$ for
$A=n+1,\ldots, 2n$.

It is well known, that the bracket
\bee\label{Sch}
[f,g](z)=\sum_{i=1}^n\left(f(z)\frac{\overleftarrow{\partial}}{\partial x^i}
\frac\partial{\partial\xi^i}g(z)-
f(z)\frac{\overleftarrow{\partial}}{\partial\xi^i}
\frac{\partial}{\partial x^i}g(z)\right),
\eee
which we call ''antibracket'', %
defines the structure of Lie superalgebra on the superspaces
$\mathbf D_{n}$ and
$\mathbf E_{n}$
with the $\epsilon$-parity.

%%%%%%%%%%%%%%%%%%%%%%%%%%%%%%%%%%%%

We call these Lie superalgebras \emph{antiPoisson
superalgebras}.\footnote{
We will also consider associative multiplication of the elements of considered antiPoisson
superalgebras with commutation relations $fg=(-1)^{\varepsilon(f)\varepsilon(g)}gf$.}

The integral on $\mathbf D_{n}$ is defined by the relation $ \int \di z\,
f(z)= \int_{\R^{n}}\di x\int \di\xi\, f(z), $ where the integral on the
Grassmann algebra is normed by the condition $\int \di\xi\,
\xi^1\ldots\xi^{n}=1$. We identify $\G^{n}$ with its dual space $\G^{\prime
n}$ setting $f(g)=\int\di\xi\, f(\xi)g(\xi)$, $f,g\in \G^{n}$.
Correspondingly, the space $\mathbf D^{\prime}_{n}$ of continuous linear
functionals on $\mathbf D_{n}$ is identified  with the space
$\D^\prime(\R^{n})\otimes \G^{n}$
and $\mathbf E^{\prime}_{n}$
is identified  with the space
$(C^\infty(\R^{n}))^\prime\otimes \G^{n}$.

The value $m(f)$ of a functional
$m\in\mathbf D^\prime_n$ on a function $f\in\mathbf D_n$ will be often
written in the integral form: $m(f)=\int\di z\,m(z)f(z)$.

Introduce the superalgebra $\mathbf {DE}_{n}$, $\mathbf D_{n}\subset \mathbf {DE}_{n} \subset\mathbf E_{n}$
by the relation
\be\label{den}
\mathbf {DE}_{n}=\{f\in\mathbf {E}_{n}:\ f-\int \di\xi\,
\xi^1\ldots\xi^{n} f \in  \mathbf D_{n} \}.
\ee
Clearly, if $f\in \mathbf {DE}_{n}$ and $g\in\mathbf {DE}_{n}$, then $[f,g]\in \mathbf {D}_{n}$.

Below we consider the case $n=1$ only. For simplicity we will denote
${\cal
D}(\R)$ as $D$,
$C^\infty(\R)$ as $E$,
$\mathbf {DE}_{1}$ as $ \mathbf {DE}$.
Besides we will denote $\mathbf {D}_{1}$ as $\mathbf {D}$, $\mathbf {E}_{1}$ as $\mathbf {E}$.

It follows from (\ref{den}) that $ \mathbf {DE}$ consists of the functions of the form $f=\xi f_0(x)+f_1(x)$
where $f_0\in D$ and $f_1\in E$.

%Let
%$D'$ be the space of
%continuous linear functionals on $D$, $E'$ be the space of
%continuous linear functionals on $E$.

\section{Cohomology of antibrackets }\label{cohom-res}

Let $\mathbf U$ be either $\mathbf D$ or $\mathbf {DE}$.
Let $\mathbf U   $ acts in a $\Z_2$-graded space $V$ (the action of $f\in
\mathbf U   $ on $v\in V$ will be denoted by $f\cdot v$). The space
$C_p(\mathbf U   ,V)$ of $p$-cochains consists of all multilinear
superantisymmetric mappings from $\mathbf U   ^p$ to $V$. Superantisymmetry
means, as usual, that $M_p(\,...\,,f_i,\,f_{i+1},...)=-
(-1)^{\epsilon(f_i)\epsilon(f_{i+1})}M_p(\,...,f_{i+1},f_{i},...)$. The space
$C_p(\mathbf U   , V)$ possesses a natural $\Z_2$-parity: by definition,
$M_p\in C_p(\mathbf U   ,V)$ has the definite parity
$\epsilon_{M_p}$ if
$$
\epsilon(M_p(f_1,\ldots,f_p))=
\epsilon_{M_p}+\epsilon(f_1)+\ldots+\epsilon(f_p)
$$
for any $f_j\in\mathbf U   $ with $\epsilon$-parities $\epsilon(f_j)$.
Sometimes we
will use the Grassmann $\varepsilon$-parity\footnote{If $V$ is the
space of Grassmann-valued functions on $\R^n$ then $\varepsilon$ defined in
such a way coincides with usual Grassmann parity. } of cochains:
$\varepsilon_{M_p}=\epsilon_{M_p}+p+1$.
The differential $\di_p^V$ is defined to be
the linear operator from $C_p(\mathbf U   , V)$ to $C_{p+1}(\mathbf U   , V)$
such that
\begin{eqnarray}
&&d_p^{V}M_p(f_1,...,f_{p+1})=
-\sum_{j=1}^{p+1}(-1)^{j+\epsilon(f_j)|\epsilon(f)|_{1,j-1}+
\epsilon(f_j)\epsilon_{M_p}}f_j\cdot
M_p(f_{1},...,\breve{f}_j,...,f_{p+1})- \nonumber \\
&&-\sum_{i<j}(-1)^{j+\epsilon(f_j)|\epsilon(f)|_{i+1,j-1}}
M_p(f_1,...f_{i-1},[f_i,f_j],f_{i+1},...,
\breve{f}_j,...,f_{p+1}),\label{diff}
\end{eqnarray}
for any $M_p\in C_p(\mathbf U   , V)$ and $f_1,\ldots,f_{p+1}\in\mathbf
U   $ having definite $\epsilon$-parities. Here the sign $\breve{}\ $  means
that the argument is omitted and the notation
$$
|\epsilon(f)|_{i,j}=\sum_{l=i}^j\epsilon(f_l)
$$
 has been used.
 The
differential $\di^V$ is nilpotent (see \cite{Schei97}), {\it i.e.},
$\di^V_{p+1}\di^V_p=0$ for any $p=0,1,\ldots$.
The $p$-th cohomology space of
the differential $\di_p^V$ will be denoted by $H^p_V$. The second cohomology
space $H^2_{\mathrm{ad}}$ in the adjoint representation is closely related to
the problem of finding formal deformations of the Lie bracket $[\cdot,\cdot]$
of the form $ [f,g]_*=[f,g]+\hbar[f,g]_1+\ldots$ up to
similarity transformations $[f,g]_T=T^{-1}[Tf,Tg]$ where continuous
linear operator $T$
from $V[[\p]]$ to $V[[\p]]$ has the form
$T=\mathsf {id}+\p T_1$.

The condition that
$[\cdot,\cdot]_1$ is a 2-cocycle is equivalent to the Jacobi identity for
$[\cdot,\cdot]_*$ modulo the $\hbar$-order terms.

In the present paper, similarly to \cite{SKT1}, we suppose that cochains are
separately continuous multilinear mappings.

We need the cohomologies of the antiPoisson algebra
$ \mathbf {DE}$ in the
adjoint representation:
 $V=\mathbf {DE}$ and $f\cdot g=[f,g]$ for any
$f,g\in\mathbf {DE}$. The space $C_p(\mathbf  {DE}, \mathbf  {DE})$ consists
of separately continuous superantisymmetric multilinear mappings from $(\mathbf
 {DE})^p$ to $\mathbf  {DE}$.  In fact, we need here the case $p=2$ only.

We call
$p$-cocycles $M_p^1,\ldots M_p^k$ independent cohomologies if they give rise to linearly
independent elements in $H^p$. For a multilinear form $M_p$ taking values in
$\mathbf D$, $\mathbf E$, or $\mathbf  {DE}$, we write
$M_p(z|f_1,\ldots,f_p)$ instead of more cumbersome $M_p(f_1,\ldots,f_p)(z)$.

The following theorem proved in \cite{JMP} and \cite{apoi} describes $p=2$ cohomology of antibracket

\theorem\label{th2}{}
{\it
Let the bilinear mappings $m_{2|1}$, $m_{2|2}$, $m_{2|3}$, $m_{2|4}$, $m_{2|5}$,
and $m_{2|6}$
from $(\mathbf D)^2$ to
$\mathbf E$ be defined by the relations
\begin{eqnarray}
&&m_{2|1}(z|f,g)=\int du \partial_\eta g(u)\partial^{3}_y f(u),\;\;
\epsilon_{m_{2|1}}=1, \label{5.2.7} \\
&&m_{2|2}(z|f,g)=\int du\theta(x-y)[\partial_\eta g(u)\partial_y^3f(u)-
\partial_\eta f(u)\partial_y^3 g(u)]+ \nonumber \\
&&+x[\{\partial_\xi\partial_{x}^{2}f(z)\}\partial_\xi\partial_{x}g(z)-
\{\partial_\xi\partial_xf(z)\}\partial_\xi
\partial_x^2g(z)],\;\; \epsilon_{m_{2|2}}=1,
\label{5.2.8} \\
&&m_{2|3}(z|f,g)=(-1)^{\varepsilon(f)}\{(1-N_{\xi})f(z)\}(1-N_{\xi})g(z), \;\;
\epsilon_{m_{2|3}}=1, \label{6.3a} \\
&&m_{2|4}(z|f,g)=(-1)^{\varepsilon (f)}\{\Delta f(z)\}\eE_z g(z)+
\{\eE_z f(z)\}\Delta g(z)\;\;  \epsilon_{m_{2|4}}=0. \label{6.3b}\\
&&m_{2|5}(z|f,g)=\int du (-1)^{\epsilon(f)}\partial_y f(u)\partial_y g(u),\;\;
\epsilon_{m_{2|5}}=0, \label{5.2.7-5} \\
&&m_{2|6}(z|f,g)=\int du\theta(x-y)(-1)^{\epsilon(f)}\partial_y f(u)\partial_y g(u),\;\;
\epsilon_{m_{2|6}}=0 \label{5.2.7-6}
\end{eqnarray}
where $z=(x,\xi)$, $u=(y,\eta)$, $N_\xi=\xi \partial_{\xi}$, and
\begin{equation}
\Delta =\partial_x \partial_\xi, \quad
\mathcal{E}_{z}=1-\frac 1 2(x\partial_x +\xi\partial_\xi)
\label{Delta}
\end{equation}

Then
$dim H^2({\mathbf D,\mathbf E})=3|3$ and the cochains
$m_{2|1}(z|f,g)$, $m_{2|2}(z|f,g)$, $m_{2|3}(z|f,g)$, $m_{2|4}(z|f,g)$, $m_{2|5}(z|f,g)$, and $m_{2|6}(z|f,g)$
are
independent nontrivial cocycles.
}

Observe that in fact all the forms $m_{2|i}$ ($i=1,...,6$)
take the value in $\mathbf {DE}$ and can be extended
from  $\mathbf D^2$ to $\mathbf {DE}^2$ taking the value in $\mathbf {DE}$.
So,
$H^2(\mathbf D, \mathbf E)=H^2(\mathbf {D}, \mathbf {DE})\subset
H^2(\mathbf {DE}, \mathbf {DE})$.

To find
$H^2(\mathbf {DE}, \mathbf {DE})$ we have to
determine whether $\mathbf {DE}$ has 2-cocycles different from
$m_{2|i}$ ($i=1,...,6$).

The following theorem answers this question.

\begin{theorem}\label{m7}
{\it
\begin{enumerate}
\item
Let $M\in D'$, $M\notin E'$ and let $M(f)=\int dx \mu(x)f(x)$ with
some distribution $\mu$.
Then bilinear form $m_{2|7}(M)$ defined as
\be
m_{2|7}(M|f,g)=\int dz \xi \mu(x)[f,g]
\ee
or, equivalently,
\be
m_{2|7}(M|f,g)=M\left(\int d\xi\,\, \xi [f,g]\right)
\ee
is representative of nontrivial element of $H^2(\mathbf {DE},\mathbf {DE})$.
\item
Up to exact form, each element in $H^2(\mathbf {DE},\mathbf {DE})$
has the form
\be\label{coc}
m_2=\sum_{i=1}^6 c_i m_{2|i}+m_{2|7}(M),
\ee
where $m_{2|i}$ are listed in (\ref{5.2.7})-(\ref{5.2.7-6}),
and
%$M\in D'/E'$.
$M\in
\!\!\!
{\raise2pt\hbox{
$D'$
}}
\!\!\!\!
\big/
\!\!\!
{\raise-2pt\hbox{
$E'$
}}$.

\end{enumerate}
}\end{theorem}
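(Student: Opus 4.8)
The plan is to establish the two assertions separately: first that each $m_{2|7}(M)$ is a nontrivial cocycle, and then that the classes $\sum_i c_i m_{2|i}+m_{2|7}(M)$ exhaust $H^2(\mathbf{DE},\mathbf{DE})$. For the cocycle property I would observe that $m_{2|7}(M)$ factors through the bracket: setting $\Phi(h)=M\big(\int d\xi\,\xi\,h\big)$ for $h\in\mathbf D$, one has $m_{2|7}(M|f,g)=\Phi([f,g])$, a \emph{constant}-valued form (recall $[\mathbf{DE},\mathbf{DE}]\subset\mathbf D$, so $\Phi$ is applied where $M\in D'$ is defined). Since constants are central for the antibracket, $[f,c]=0$, the first (action) sum in the coboundary formula (\ref{diff}) vanishes identically on such a form. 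The second sum collapses, by linearity of $\Phi$, to $\Phi$ applied to the signed sum $\sum_{i<j}\pm[[f_i,f_j],f_k]$, which is the graded Jacobiator and hence zero. Thus $d_2 m_{2|7}(M)=0$ for every $M\in D'$, with essentially no computation.

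For nontriviality I would use the explicit body of the bracket: a direct computation gives that $\int d\xi\,\xi[f,g]$ is the $\xi$-independent part of $[f,g]$, namely $f_1'g_0-f_0g_1'\in D$ (up to sign), for $f=\xi f_0+f_1$, $g=\xi g_0+g_1$. On the ideal $\mathbf D$ this form is manifestly exact: the constant-valued $1$-cochain $n_0(\xi k_0+k_1)=-M(k_1)$, well defined on $\mathbf D$ because there $k_1\in D$, satisfies $d_1 n_0=m_{2|7}(M)|_{\mathbf D^2}$. The essential point is that $n_0$ extends to a $1$-cochain on all of $\mathbf{DE}$ precisely when $M$ acts on $E$, i.e. when $M\in E'$. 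To make this a genuine proof I would assume $m_{2|7}(M)=d_1 n$ on $\mathbf{DE}$ and compare $n$ with $n_0$ on $\mathbf D$: the difference $n|_{\mathbf D}-n_0$ is a $1$-cocycle of $\mathbf D$ valued in $\mathbf{DE}$, whose class lives in $H^1(\mathbf D,\mathbf{DE})$; pushing that class back up to $\mathbf{DE}$ would force an extension of $n_0$ and hence $M\in E'$, contradicting $M\notin E'$.

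For completeness I would use the ideal structure $\mathbf D\triangleleft\mathbf{DE}$ to define a restriction map $\rho:H^2(\mathbf{DE},\mathbf{DE})\to H^2(\mathbf D,\mathbf{DE})$, which is well defined since $\mathbf{DE}$ is a $\mathbf D$-module under the bracket. By the remark following Theorem \ref{th2}, $H^2(\mathbf D,\mathbf{DE})=H^2(\mathbf D,\mathbf E)$ is spanned by $m_{2|1},\dots,m_{2|6}$, so for a given cocycle $m_2$ I can subtract a suitable combination $\sum_{i=1}^6 c_i m_{2|i}$ to arrange $\rho(m_2)=0$, i.e. $m_2|_{\mathbf D^2}=d_1 n_1$ for some $n_1:\mathbf D\to\mathbf{DE}$. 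After subtracting the coboundary of an extension of $n_1$ to $\mathbf{DE}$, I reduce to a cocycle $\hat m_2$ with $\hat m_2|_{\mathbf D^2}=0$. The remaining task is to show $\hat m_2$ is cohomologous to $m_{2|7}(M)$ for some $M$ determined modulo $E'$. Here I would feed triples $(f,g,h)$ with $f,g\in\mathbf D$, $h\in\mathbf{DE}$ into the identity $d_2\hat m_2=0$; because $\hat m_2$ and all brackets of the arguments lie in $\mathbf D$ where $\hat m_2$ vanishes, the identity reduces to invariance conditions on the values $\hat m_2(f,h)$. Solving these and using $\mathbf{DE}/\mathbf D\cong E/D$ should pin $\hat m_2$ down to a single continuous linear functional on $E/D$, i.e. an element of $D'/E'$, reproducing $m_{2|7}(M)$.

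The main obstacle is the extension and bookkeeping step in completeness: guaranteeing that the trivializing cochain $n_1$ on the ideal $\mathbf D$ extends to $\mathbf{DE}$ modulo a controllable remainder, and that the residual cocycle depends only on the germ of its arguments at infinity. Concretely, the argument must separate the genuinely local (bidifferential) content, already captured by $m_{2|1},\dots,m_{2|6}$, from the nonlocal ``boundary at infinity'' content, and then show that the latter is parameterized by $D'/E'$ and nothing larger. This is exactly where the support hypotheses ($f_0$ compactly supported, $f_1$ merely smooth) and the separate continuity of the cochains enter, forcing the surviving functional to be a distribution defined modulo compactly supported ones. Since the nontriviality argument rests on the same extension obstruction, I expect $H^1(\mathbf D,\mathbf{DE})$ and the behaviour of cochains under the filtration by support to be the technical crux, and presumably the content of the appendices.
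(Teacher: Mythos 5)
Your verification that $m_{2|7}(M)$ is a cocycle is correct, and is in fact cleaner than anything the paper writes down: since $m_{2|7}(M)=\Phi([\cdot,\cdot])$ with $\Phi$ constant-valued and constants central for the antibracket, $d_2m_{2|7}(M)=0$ follows from the Jacobi identity alone. The genuine problem is in your completeness argument. You subtract ``the coboundary of an extension of $n_1$ to $\mathbf{DE}$'' so as to reach a cocycle $\hat m_2$ with $\hat m_2|_{\mathbf{D}^2}=0$, and then plan to identify $\hat m_2$ with some $m_{2|7}(M)$. But $\mathbf{D}$ is dense in $\mathbf{DE}$ (because $D$ is dense in $E$), and the paper's cochains are by assumption separately continuous, so any cochain vanishing on $\mathbf{D}^2$ vanishes identically: your $\hat m_2$ would be $0$, and you would have proved that every class in $H^2(\mathbf{DE},\mathbf{DE})$ is a combination of $m_{2|1},\dots,m_{2|6}$ --- contradicting part 1, which you yourself established. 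The resolution is that the extension of $n_1$ you invoke does not exist in general; its failure is exactly where the classes $m_{2|7}(M)$ come from, so the reduction is circular. The step the paper actually performs (tersely, as a ``straightforward calculation'') is the classification of those $n_1\in {\cal C}^1(\mathbf{D},\mathbf{DE})$ whose coboundary $d_1n_1$ extends to a separately continuous form on $\mathbf{DE}^2$ even though $n_1$ itself does not extend; the answer, modulo cochains that do extend, is $n_1(f)=M\left(\int d\xi\,\xi f\right)$ with $M\in D'$, and since restriction to the dense subalgebra $\mathbf{D}$ is injective on separately continuous cochains, this classification is equivalent to the classification of $H^2(\mathbf{DE},\mathbf{DE})$. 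That computation is absent from your proposal, and the ``feeding triples'' analysis you sketch cannot replace it, because it starts from the vacuous configuration $\hat m_2|_{\mathbf{D}^2}=0$.

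A smaller gap of the same nature sits in your nontriviality argument: you need that the $1$-cocycle $w=n|_{\mathbf{D}}-n_0\in Z^1(\mathbf{D},\mathbf{DE})$ extends to a $1$-cochain on $\mathbf{DE}$, so that $n_0=n|_{\mathbf{D}}-w$ extends and $M\in E'$ follows by density. This extension property is not automatic; it needs the structure of $H^1(\mathbf{D},\mathbf{DE})$ from the earlier papers (its representatives are differential operators, which extend, and coboundaries $f\mapsto[f,v]$ extend as well) --- an input you name as the ``technical crux'' but do not supply. So your outline correctly isolates the governing obstruction (whether $M$ extends to $E'$), agrees with the paper's restriction-to-$\mathbf{D}$ skeleton, but leaves unproved --- and in the completeness part, misformulates --- precisely the computations that constitute the proof.
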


Note, that if $M\in E'$, then $m_{2|7}(M)=d_1 M$.

\begin{proof}

Let $m_2$ be nontrivial cocycle in ${\cal C}^2(\mathbf {DE}, \mathbf {DE})$,
i.e. $d_2 m_2=0$ and there does  not exist such $m_1\in \cal C^1(\mathbf {DE}, \mathbf {DE})$
that $m_2=d_1m_1$.

Consider the restriction $m_2|_{\mathbf {D}^2}$ of $m_2$ to $\mathbf {D}^2$.
As this restriction is a cocycle in ${\cal C}^2(\mathbf {D}, \mathbf {DE})$,
this restriction has the form
$$
m_2|_{\mathbf {D}^2}=\sum_1^6 c_i m_{2|i} +d_1m_1
$$
with some $m_1\in {\cal C}^1(\mathbf {D}, \mathbf {DE})$.
Denoting $(m_2-\sum_1^6 c_i m_{2|i})$ as $n_2$ we obtain
$$
n_2|_{\mathbf {D}^2}=d_1m_1.
$$

Let us look for such $m_1\in {\cal C}^1(\mathbf {D}, \mathbf {DE})$
that $d_1 m_1$ can be extended to ${\cal C}^2(\mathbf {DE}, \mathbf {DE})$
and $m_1$ can not be extended to ${\cal C}^1(\mathbf {DE}, \mathbf {DE})$.

Straightforward calculation gives that
$m_1(f)=M(\int d\xi\, \xi f)$, where $M\in D'$, $M\notin E'$.

Indeed, $d_1m_1(f,g)=M(\int d\xi\, \xi [f,g])\in {\cal C}^2(\mathbf {DE}, \mathbf {DE})$
because $[f,g]\in \mathbf D$ if $f,g\in \mathbf {DE}$.
\end{proof}

In Appendices \ref{-app1} - \ref{app6} we use
the notation $\omega$ both for the linear functional $M\in D'/E'$ defining
cocycle $m_{2|7}$ in Theorem \ref{m7} and for the kernel of this functional.

\section{Deformation with even and odd parameters. Preliminary.}\label{def-prel}

It occurred, that there exist odd second cohomologies with
coefficients in adjoint representation. It is natural to look for
the deformations associated with these odd cohomologies and having
the odd deformation parameter \cite{Leites}, \cite{leit}.

Below we consider the case, where the functions and multilinear forms may
depend on outer odd parameters $\theta_i$, where $\theta$-s belong to some supercommutative
associative superalgebra ${\cal A}$.
Thus we consider a colored algebras
$\mathbf {DE}\otimes {\cal A}$,  ${\cal C}^p(\mathbf {DE},\mathbf {DE})\otimes {\cal A}$,...
with $(\Z_2)^2$ grading, namely
the grading of element $\theta\otimes f$ is $(\varepsilon_1 (\theta),
\epsilon_2 (f))$.

We preserve the notation $\mathbf {DE}$ for $\mathbf {DE}\otimes {\cal A}$.

We can consider $\mathbf {DE}$ as a Lie superalgebra with the parity $\epsilon=\varepsilon_1
+\epsilon_2$. One can easily check that such consideration is selfconsistent
(see also \cite{Sche1} and discussion on Necludova and Sheunert theorems in \cite{Sosrus1}.

As the second cohomology space with
coefficients in adjoint representation has the dimension $3|3$, it is natural to suppose
that ${\cal A}=\G^3$. For this case all the deformations are found
(see Theorem \ref{-th1} and Theorem \ref{-th2}) and listed in Subsection \ref{listdef}.

Nevertheless, it is possible to consider other superalgebras ${\cal A}$.
In what follows, we consider the deformation of the form
\be
C=\sum_{l=0}^\infty \hbar^l C_l(\theta)
\ee
where $\sum_{l=0}^k \hbar^l C_l(\theta)\in {\cal C}^2(\mathbf {DE},\mathbf {DE})\otimes {\G^{s_k}}$
and $s_0\le s_1\le s_2 \, ...\,$.

During the proofs, we assume that all the $\theta$-s has the same order that $\hbar$,
and sometimes we stress this by writing $C(\hbar, \hbar\theta_i)$.

\section{Algebra of Jacobiators}\label{jac-res}

Before formulate (and prove) the main theorem of present work, which describes
the deformations of antibracket on $\mathbf {DE}$, we have to calculate
all the Jacobiators $J_{i,j}=J(m_{2|i},m_{2|j})$ for $i,j=1,...,7)$.

As it was mentioned above, we define Jacobiator of bilinear forms by nonlinear way.
Let $m$ and $n$ be bilinear forms. Then
\bee
J(m, n)(f,g,h)& = &(-1)^{\epsilon(f)\epsilon(h)}
\left(m(n(f,g),h)+(-1)^{\epsilon(m)\epsilon (n)}n(m(f,g),h)\right)+\notag\\
&& +cycle(f,g,h) \quad \mbox{if $m\ne n$}\\
J(m)(f,g,h)& = &\left\{
{\begin{array}{l}
 {\frac 1  2} J(m, m)(f,g,h) \quad\mbox{if $\epsilon(m)=0$,}\\
 0 \quad\mbox{if $\epsilon(m)=1$.}
\end{array}}
\right.
\eee

Jacobiator is connected with the differential by the relation
\be
d_2 m(f,g,h)=-(-1)^{\epsilon(f)\epsilon(h)}J(m_{2|0}, m)(f,g,h),
\ee
where $m_{2|0}(f,g)=[f,\,g]$

The Jacobiators of each pair of cocycles $J_{i,j}=J(m_{2|i}, m_{2|j})$ are described here
for $i,j=1,...,7$

In the next table stars denote all the pairs $i,j$ such that $J_{i,j}\ne 0$
and has not the form $J(m_{2|0}, n_2)$ with some 2-form $n$.
The sign "X" denote all the pairs $i,j$ such that we don't need $J_{i,j}$
for deriving the deformations of $\mathbf {DE}$:

\vskip 3mm

\begin{tabular}{r ccccccc}
$i\setminus j$ \vline & 1 & 2 & 3 & 4 & 5 & 6 & 7\\
\hline
1 \vline  & 0 & 0 & * & * & 0 & 0 & 0\\
2 \vline  & 0 & 0 & * & * & 0 & $J(m_{2|0}, m_{2|9})$ & $J(m_{2|0}, m_{2|10}(M))$\\
3 \vline  & * & * & 0 & * & * & * & *\\
4 \vline  & * & * & * & X & * & * & *\\
5 \vline  & 0 & 0 & * & * & 0 & 0 & 0\\
6 \vline  & 0 & $J(m_{2|0}, m_{2|9})$ & * & * & 0 & 0 & $J(m_{2|0}, m_{2|8}(M))$\\
7 \vline  & 0 & $J(m_{2|0}, m_{2|10}(M))$ & * & * & 0 & $J(m_{2|0}, m_{2|8}(M))$ & 0
\end{tabular}
\vskip 3mm

Expressions for needed nonzero $J_{i,j}$ and for the forms $m_{2|8}$, $m_{2|9}$, $m_{2|10}$ are listed below.

We use in what follows the notation $\widetilde{\theta }$ for
 smooth function such that $\widetilde{\theta }(x)-{\theta }(x)$
has compact support.

\subsubsection{$J_{4,4}$ }
%\QTP{Body Math}
We have no needs to calculate $J_{4,4}$ because we know from \cite{Leites} and \cite{JMP}
that $m_{2|4}$ generates the deformations

\begin{equation*}
{\ [f,\,g]}_{\ast }{\ \!=\![f,g]+(-1)}^{\varepsilon (f)}{\ \{}%
\frac{ c_4}{1+ c_4 N_{z}/2}{\ \Delta f\}}\mathcal{E}_{z}{\ g+\{}%
\mathcal{E}_{z}{\ f\}}\frac{ c_4}{1+ c_4 N_{z}/2}{\ \Delta g}
\end{equation*}%
where $c_4$ is a series in $\hbar$ and $\hbar^2\theta_i\theta_j$ such that $c_4|_{\hbar=0}=0$.

\subsubsection{$J_{4,5}$ }
%\QTP{Body Math}
\begin{eqnarray*}
{\ J}_{4,5} &=&{\ }{\partial }_{x}g_{0}(x)\int dy{\ \partial }_{y}h_{0}{\
\partial }_{y}f_{1}-{\ \partial }_{x}h_{0}(x)\int dy{\ \partial }_{y}g_{0}{\
\partial }_{y}f_{1}+ \\
&&+\int dyf_{1}({\ \partial }_{y}g_{0}{\ \partial }_{y}^{2}h_{0}-{\ \partial
}_{y}h_{0}{\ \partial }_{y}^{2}g_{0})+ \\
&&+\frac{1}{2}\int dyf_{1}({\ \partial }_{y}^{3}g_{0}(1-y{\ \partial }%
_{y})h_{0}-{\ \partial }_{y}^{3}h_{0}(1-y{\ \partial }_{y})g_{0})+ \\
&&+\text{cycles}
\end{eqnarray*}

\subsubsection{$J_{4,6}$ }
%\QTP{Body Math}

\begin{eqnarray*}
J_{46}=&&\!\!\!\!\!\!\!\!\!\! -\frac{1}{2}\int du\theta (x-y)\eta y(-\partial _{y}^{2}\Delta g(u)\Delta
f(u)+\partial _{y}^{2}\Delta f(u)\Delta g(u))h+ \\
&&\!\!\!\!\!\!\!\!\!\!
+\frac{1}{2}x(-\partial _{x}\Delta g(z)\Delta f(z)+\partial _{x}\Delta
f(z)\Delta g(z))(1-\xi \partial _{\xi })h(z)- \\
&&\!\!\!\!\!\!\!\!\!\!
-\Delta f(z)\int du\theta (x-y)\eta \partial _{\eta }\partial
_{y}^{2}g(u)h(u)+\Delta g(z)\int du\theta (x-y)\eta h(u)\partial _{\eta
}\partial _{y}^{2}f(u)+cycles
\end{eqnarray*}

\subsubsection{$J_{4,7}$ }
%\QTP{Body Math}
\begin{eqnarray*}
J_{47} &=&\int du\eta \mu (y)[(\Delta f\partial _{\eta }g-\partial _{\eta
}f\Delta g)(-\partial _{y}+\frac{1}{2}y\partial _{y}^{2})h+(\partial _{\eta
}f\partial _{y}\Delta g-\partial _{y}\Delta f\partial _{\eta }g)(1-\frac{1}{2%
}y\partial _{y})h] \\
&&+{\Delta f(z)}\int du\eta \mu (y)(\partial _{\eta }g\partial
_{y}h-\partial _{\gamma }g\partial _{\eta }h)+cycles
\end{eqnarray*}

%\subsubsection{$J_{5,5}$ }
%\QTP{Body Math}
%$\ J_{55}=0$

%\subsubsection{$J_{5,6}$ }
%\QTP{Body Math}
%$J_{56}=0$

%\subsubsection{$J_{5,7}$ }
%\QTP{Body Math}
%$J_{57}=0$

%\subsubsection{$J_{6,6}$ }
%\QTP{Body Math}
%$J_{66}=0$

\subsubsection{$J_{6,7}$ }
%\QTP{Body Math}
$J_{67} =J(m_{0},m_{8})$, where
%\begin{eqnarray}
%m_{8} =\left( \int dz\xi \mu (x)\left(
%\int du(-1)^{\epsilon (f)}\theta
%(x-y)\partial _{y}f(u)\partial _{y}g(u)-\widetilde{\theta }(x)
%\int
%du(-1)^{\epsilon (f)}\partial _{y}f(u)\partial _{y}g(u)\right) \right)\notag\\
%\end{eqnarray}

\begin{eqnarray}
m_{8} =
\int du
\widetilde \mu(y)
(-1)^{\epsilon (f)}\partial _{y}f(u)\partial _{y}g(u)
\label{m8}
\end{eqnarray}
and
\begin{eqnarray}
\widetilde \mu(x)= \int dy \mu (y)\left(
\theta
(y-x)-\widetilde{\theta }(y)
 \right)
\label{mu}
\end{eqnarray}

%\subsubsection{$J_{7,7}$ }
%\QTP{Body Math}
%${J}_{77}{=0}$

%\subsubsection{$J_{1,2}$ }
%\QTP{Body Math}
%$J_{12}=0$

\subsubsection{$J_{1,3}$ }
%\QTP{Body Math}
\begin{eqnarray*}
J_{13}
&=&-(1-N_{\xi })h(z)\int du\partial _{\eta
}f(u)\partial _{y}^{3}g(u)+cycle
\end{eqnarray*}

\subsubsection{$J_{2,3}$ }
%\QTP{Body Math}
\begin{eqnarray*}
J_{2,3} &=&(1-N_{\xi })h(z) \int du\theta (x-y)[\partial _{\eta }g(u)\partial
_{y}^{3}f(u)-\partial _{\eta }f(u)\partial _{y}^{3}g(u)]+
\\
&&  +(1-N_{\xi })h(z) \left[ x\{\partial _{\xi
}\partial _{x}^{2}f(z)\}\partial _{\xi }\partial _{x}g(z)-x\{\partial _{\xi
}\partial _{x}f(z)\}\partial _{\xi }\partial _{x}^{2}g(z)
\right]
+cycle
\end{eqnarray*}

\subsubsection{$J_{1,4}$ }
%\QTP{Body Math}
\begin{eqnarray*}
J_{14}=\Delta h(z)\;\int du\eta \partial
_{\eta }g(u)\partial _{y}^{3}\partial _{\eta }f(u)+cycle
\end{eqnarray*}

%\subsubsection{$J_{1,5}$ }
%\QTP{Body Math}
%$J_{15}=0$

%\subsubsection{$J_{1,6}$ }
%\QTP{Body Math}
%$J_{16}=0$

%\subsubsection{$J_{1,7}$ }
%\QTP{Body Math}
%$J_{17}=0$

\subsubsection{$J_{2,4}$ }
%\QTP{Body Math}
\begin{eqnarray*}
J_{24}=2\Delta h(z)\;\int du\eta \theta(x-y)\partial
_{\eta }g(u)\partial _{y}^{3}\partial _{\eta }f(u)+loc+cycle
\end{eqnarray*}

Here we denote as $loc$ some local 2-form. We don't need its specific form
for deriving the deformation.

%\subsubsection{$J_{2,5}$ }
%\QTP{Body Math}
%$J_{25}=0$

\subsubsection{$J_{2,6}$ }
%\QTP{Body Math}
$J_{26} =J(m_{0},m_{9})$ where
\begin{equation}\label{m9}
m_{9}(f,g) =\partial _{x}\partial _{\xi }\left[ f,g\right]
\end{equation}

\subsubsection{$J_{2,7}$ }
%\QTP{Body Math}
$J_{27} =J(m_{0},m_{10})$ where
\begin{equation}\label{m10}
m_{10} =\int dz\xi \mu (x)\left( {m}_{2|2}{\ (z|f,g)}-\widetilde{\theta }%
(x)\int du{[\partial }_{\eta }{\ g(u)\partial }_{y}^{3}{\ f(u)-\partial }%
_{\eta }{\ f(u)\partial }_{y}^{3}{\ g(u)]}\right)
\end{equation}

\subsubsection{$J_{3,4}$ }
%\QTP{Body Math}
\begin{eqnarray*}
J_{34}&=&(1-\frac{1}{2}x\partial
_{x})[(1-N_{\xi })f(z)(1-N_{\xi })g(z)]\Delta h(z)+\\
&&(1-N_{\xi
})h\{-\Delta f(z)(1-N_{\xi })\mathcal{E}_{z}g-\Delta g(z)(1-N_{\xi })%
\mathcal{E}_{z}f(z)\}+cycle
\end{eqnarray*}

\subsubsection{$J_{3,5}$ }
%\QTP{Body Math}
\begin{eqnarray*}
J_{35}&=&(1-N_{\xi
})h(z)\int du\partial _{y}f(u)\partial _{y}g(u)+\\
&&\int du\eta
(\{(1-N_{\xi })f\}(1-N_{\xi })g)\partial _{\eta }\partial _{y}^{2}h+cycle
\end{eqnarray*}

\subsubsection{$J_{3,6}$ }
%\QTP{Body Math}
\begin{eqnarray*}
J_{36}&=&\{(1-N_{\xi })h\}\int du\theta
(x-y)\partial _{y}f(u)\partial _{y}g(u)-\\
&&-\int du\theta (x-y)\eta \partial
_{y}(\{(1-N_{\xi })f\}(1-N_{\xi })g)\partial _{y}\partial _{\eta }h+cycle
\end{eqnarray*}

\subsubsection{$J_{3,7}$ }
%\QTP{Body Math}
\begin{eqnarray*}
J_{37}
&=&-\int du\eta \mu (y)\partial _{y}(f\cdot g)\partial _{\eta }h+\\
&&(1-N_{\xi })h(z)\int du\eta
\mu (y)(\partial _{\eta }f\partial _{y}g+\partial _{\eta }g\partial
_{y}f)+cycles
\end{eqnarray*}

\subsection{Relations between constants}

From these values of Jacobiators the next theorem follows
\begin{theorem}\label{cc}
{\it
Let $V_2=m_{2|0}+\mathop{\sum}\limits_{i=1}^6 c_i m_{2|i}+m_{2|7}(M)$.
Then the equation for
2-form $X_2$
\be
J(m_{2|0},X_2)=J(V_2, V_2)
\ee
has the solution if and only if
\begin{eqnarray}
&&c_4 c_i=0, \ \ \ i\ne 4,\label{4}\notag\\
&&c_4 M=0,  \notag\\
&&c_3 c_i=0,\notag\\
&&c_3 M=0.   \notag\label{3}
\end{eqnarray}
}
\end{theorem}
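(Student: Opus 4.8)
The plan is to read the statement as the second-order obstruction equation of deformation theory: the equation $J(m_{2|0},X_2)=J(V_2,V_2)$ is solvable for $X_2$ exactly when $J(V_2,V_2)$ lies in the image of $J(m_{2|0},\cdot)$, which by the relation $d_2 m=-(-1)^{\epsilon(f)\epsilon(h)}J(m_{2|0},m)$ is the same as saying that $J(V_2,V_2)$ is a $3$-coboundary. So the whole task is to compute the class of $J(V_2,V_2)$ in $H^3$. First I would expand $J(V_2,V_2)$ by bilinearity. The Jacobi identity for the antibracket gives $J(m_{2|0},m_{2|0})=0$, and since $m_{2|1},\dots,m_{2|6}$ and $m_{2|7}(M)$ are cocycles the same relation $d_2 m=-(-1)^{\epsilon(f)\epsilon(h)}J(m_{2|0},m)$ forces $J(m_{2|0},m_{2|i})=0$ and $J(m_{2|0},m_{2|7}(M))=0$. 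Hence every cross term containing a single factor $m_{2|0}$ drops out and
\[
J(V_2,V_2)=\sum_{i,j=1}^{7}c_i c_j\,J_{i,j},
\]
with the convention that the index $7$ carries the coefficient $M$.

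Now I would invoke the table. The entries equal to $0$ contribute nothing; the entries $J_{2,6}=J(m_{2|0},m_{9})$, $J_{2,7}=J(m_{2|0},m_{10})$, $J_{6,7}=J(m_{2|0},m_{8})$ and the local part $loc$ of $J_{2,4}$ are already of the form $J(m_{2|0},\cdot)$ and may be absorbed into $X_2$; the diagonal term $c_4^2 J_{4,4}$ is accounted for by the all-orders deformation generated by $m_{2|4}$ (which is why it is marked $\mathrm{X}$, so that $c_4^2$ is left unconstrained), and $J_{3,3}=0$. What remains to analyze is the sum $O$ of the starred Jacobiators, and every starred entry carries the index $3$ or the index $4$:
\[
O=c_1c_3J_{1,3}+c_2c_3J_{2,3}+c_3c_4J_{3,4}+c_3c_5J_{3,5}+c_3c_6J_{3,6}+c_3M\,J_{3,7}+c_1c_4J_{1,4}+c_2c_4J_{2,4}+c_4c_5J_{4,5}+c_4c_6J_{4,6}+c_4M\,J_{4,7}
\]
(up to the symmetry factors relating $J_{i,j}$ and $J_{j,i}$).

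Sufficiency is then immediate: if $c_3c_i=0$, $c_3M=0$, $c_4c_i=0$ $(i\neq4)$ and $c_4M=0$, every monomial in $O$ carries a vanishing factor, so $O=0$, and the $X_2$ assembled from $m_8,m_9,m_{10}$ and $loc$ already solves the equation. Note that $c_3^2=0$ is automatic because $m_{2|3}$ is odd ($\epsilon_{m_{2|3}}=1$), consistently with the absence of a $c_3^2$ constraint and with $J_{3,3}=0$.

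The hard direction, and the main obstacle, is necessity: assuming $O$ is a coboundary one must deduce that each product $c_3c_i$, $c_3M$, $c_4c_i$, $c_4M$ vanishes on its own. The plan is to show that the explicit $3$-cochains $J_{3,i}$ and $J_{4,i}$ displayed above represent nontrivial and mutually independent elements of $H^3(\mathbf{DE},\mathbf{DE})$, so that no cancellation among them, nor against any coboundary $d_2 m_1$, is possible. Concretely I would evaluate $O$ on well-chosen test triples $(f,g,h)$ and separate the output according to its structural type --- the order of the surviving $y$-derivatives, the presence or absence of the step kernel $\theta(x-y)$, the $(1-N_\xi)$ versus $\Delta$ dependence that distinguishes the index-$3$ from the index-$4$ contributions, and the dependence on the distribution $\mu$ defining $M=\omega$. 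Since a coboundary $d_2 m_1$ cannot simultaneously reproduce these inequivalent local, non-local and distributional structures, matching the pieces one by one isolates and kills each product coefficient. This separation argument is what is carried out in Appendices \ref{-app1}--\ref{app6}.
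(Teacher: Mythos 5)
Your reduction of the statement is sound and matches the paper's organization in spirit: solvability of $J(m_{2|0},X_2)=J(V_2,V_2)$ is indeed equivalent to the obstruction $J(V_2,V_2)$ being exact, the cross terms with $m_{2|0}$ drop out because the $m_{2|i}$ and $m_{2|7}(M)$ are cocycles, the entries $J_{2,6}$, $J_{2,7}$, $J_{6,7}$ are absorbable into $X_2$, and $c_4^2J_{4,4}$ is unobstructed because ${\cal M}_{c_4}$ extends to an all-orders deformation; your sufficiency argument is essentially the paper's. The genuine gap is in necessity, which you yourself call the main obstacle but then only sketch, and the sketch's central claim fails exactly for the constraints that make this theorem specific to $\mathbf{DE}$. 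You propose to show that the starred Jacobiators represent nontrivial, mutually independent classes in $H^3(\mathbf{DE},\mathbf{DE})$ by evaluating on test triples and arguing that ``a coboundary $d_2m_1$ cannot simultaneously reproduce these inequivalent local, non-local and distributional structures.'' But for the $M$-dependent obstructions this cannot work: $J(m_{2|4},m_{2|7}(M))$ and $J(m_{2|3},m_{2|7}(M))$, restricted to $\mathbf{D}$, \emph{are} coboundaries --- the paper exhibits explicit primitives, $C_{1|47}(z|f,g)=\omega(f_0)\Delta g(z)-\Delta f(z)\omega(g_0)$ in eq.~(\ref{3.2.6}) and $C_{37|1}$ in eq.~(\ref{4.3.1.3}). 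Consequently no inspection of the values of these Jacobiators on test functions, however cleverly chosen, can distinguish them from coboundaries; their obstruction character is invisible at that level.

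What actually forces $c_4M=0$ and $c_3M=0$ is an extension argument, not a separation-of-structures argument: one first writes the \emph{general} solution of the inhomogeneous equation over $\mathbf{D}$ (this needs the classification of cocycles of Theorem \ref{th2}, i.e.\ input from \cite{JMP}, \cite{apoi}), and then observes that every primitive necessarily involves the functional $\omega=M$ applied to the component of the argument which in $\mathbf{DE}$ is merely smooth rather than compactly supported; solvability over $\mathbf{DE}$ would therefore force $c_4\omega$ (resp.\ $c_3\omega$) to define an element of $E'$, which is impossible for a nonzero class in $D'/E'$ --- this is the content of the discussion following eq.~(\ref{3.2.8}) and of eq.~(\ref{4.3.1.4}). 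This mechanism, combined with the support-separation and test-function analysis that yields $c_4c_5=c_4c_6=0$, $c_3c_5=c_3c_6=0$, and the remaining products, is what Appendices \ref{-app1}--\ref{app6} actually do; pointing to those appendices as carrying out your separation plan does not close the gap, because they carry out a different and, for the $M$-terms, unavoidable argument.
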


In addition, as $c_1$, $c_2$, $c_3$ are odd Grassmannian elements,
we have $c_i c_i=0$ for $i=1,2,3$.

It follows from this theorem that we have no needs to calculate the Jacobiators
that have the products $c_4 c_i$ ($i\ne4$) or $c_3 c_i)$ as a coefficients.
Namely we have not to calculate
$J_{4,8}$, $J_{4,9}$, $J_{4,10}$, $J_{3,8}$, $J_{3,9}$ and $J_{3,10}$.
As well, because $c_2c_2=0$ we have not to calculate
$J_{2,9}$, $J_{2,10}$, $J_{9,9}$, $J_{9,10}$, $J_{10,10}$.

\subsection{Jacobiators $J_{8,i}, \ J_{9,i}, \ J_{10,i}$.}

The Jacobiators of each pair of 2-forms $J_{i,j}=J(m_{2|i}, m_{2|j})$ ($i=8,9,10$, $j=1,...,10$) are described
in the following table.
\vskip 2mm

\begin{tabular}{r cccccccccc}
$i\setminus j$ \vline
           & 1 & 2 & 3 & 4 & 5 & 6 & 7 & 8 & 9                      & 10\\
\hline
8  \vline  & 0 & a & X & X & 0 & 0 & 0 & 0 & $J(m_{2|0}, m_{2|11}(M))$ & 0\\
9  \vline  & 0 & X & X & X & 0 & 0 & -a & $J(m_{2|0}, m_{2|11}(M))$ & X & X\\
10 \vline  & 0 & X & X & X & 0 & 0 & 0 & 0 & X & X
\end{tabular}
\vskip 3mm

The expression for $a$ can be written, but we don't need it here.
An important relation can be easily verified
\begin{equation}
J_{8,2}+J_{9,7}+J_{10,6}=0
\end{equation}

The 2-form $m_{2|11}$ has the form

\begin{eqnarray}
m_{2|11}(z|f,g)=-\int du\eta \widetilde{\mu}(y)[\partial _{\eta
}f(u)\partial _{\eta }g^{\prime \prime \prime }(u)-\partial _{\eta
}f^{\prime \prime \prime }(u)\partial _{\eta }g(u)].
\end{eqnarray}

\subsection{$J_{i,11}$}
At last,
$J_{i,11}$ are presented in the following table
\vskip 2mm

\begin{tabular}{r ccccccccccc}
$i\setminus j$ \vline
           & 1 & 2 & 3 & 4 & 5 & 6 & 7 & 8 & 9                      & 10 & 11\\
\hline
11 \vline  & 0 & X & X & X & 0 & 0 & 0 & 0 & X & X & X
\end{tabular}
\vskip 3mm

\section{Deformations of antibrackets (Results)}\label{def-res}

\begin{theorem}
\label{-th1}
{\it

Let
\begin{equation}\label{calm}
{\cal M}_{c_4}(f,g)=(-1)^{\varepsilon (f)}{\ \{}%
\frac{ c_4}{1+ c_4 N_{z}/2}{\ \Delta f\}}\mathcal{E}_{z}{\ g+\{}%
\mathcal{E}_{z}{\ f\}}\frac{ c_4}{1+ c_4 N_{z}/2}{\ \Delta g}
\end{equation}%
where $c_4$ is a series in $\hbar$ and $\hbar^2 \theta_i\theta_j$ such that $c_4|_{\hbar=0}=0$.

Let $\widetilde{\theta }$ be smooth function such that $\widetilde{\theta }(x)-{\theta }(x)$
has compact support.

Let $M\in
%D'/E'
%
\!\!\!
{\raise1pt\hbox{
$D'$
}}
\!\!\!\!
\big/
\!\!\!
{\raise-3pt\hbox{
$E'$
}}
\!\!
[[\hbar, \hbar\theta_k]]$ be even in $\theta$s and $M|_{\hbar=0}=0$
and let $\mu$ be the kernel of some representative of $M$.

%Let $M\in D'/E'$ and let the representative of $M$ has the form $M(f)=\int dy \mu(y)f(y)$ with
%some distribution $\mu$.

Let
\begin{eqnarray}
&&\ m_{2|0}(z|f,g)=[f,g],\\
&&{\ m}_{2|1}{\ (z|f,g)=}\int {\ du\partial }_{\eta }{\ g(u)\partial }%
_{y}^{3}{\ f(u),\;\;
\epsilon }_{m_{2|1}}{\ =1,}
\label{-5.2.7.1} \\
&&{\ m}_{2|2}{\ (z|f,g)=}\int {\ du\theta (x-y)[\partial }_{\eta }{\
g(u)\partial }_{y}^{3}{\ f(u)-\partial }_{\eta }{\ f(u)\partial }_{y}^{3}{\
g(u)]+}  \notag \\
&&{\ +x[\{\partial }_{\xi }{\ \partial }_{x}^{2}{\ f(z)\}\partial }_{\xi }{\
\partial }_{x}{\ g(z)-\{\partial }_{\xi }{\ \partial }_{x}{\ f(z)\}\partial }%
_{\xi }{\ \partial }_{x}^{2}{\ g(z)],\;\;\epsilon }_{m_{2|2}}{\ =1,}
\label{-5.2.8-1}
\\
&&{\ m}_{2|3}{\ (z|f,g)=(-1)}^{\varepsilon (f)}{\ \{(1-N}_{\xi }{\
)f(z)\}(1-N}_{\xi }{\ )g(z),\;\;\epsilon }_{m_{2|3}}{\ =1,}  \label{-6.3a.1}
\\
&&{\ m}_{2|5}{\ (z|f,g)=}\int {\ du(-1)}^{\epsilon (f)}{\ \partial }_{y}{\
f(u)\partial }_{y}{\ g(u),\;\;\epsilon }_{m_{2|5}}{\ =0,}  \label{-5.2.7-5.1}
\\
&&{\ m}_{2|6}{\ (z|f,g)=}\int {\ du\theta (x-y)(-1)}^{\epsilon (f)}
{\partial }_{y}{\ f(u)\partial }_{y}{\ g(u),\;\;\epsilon }_{m_{2|6}}{\ =0,}
\label{-5.2.7-6.1} \\
&&{\ m}_{2|7}{(M|f,g)}{=}\int {\ du}{\, \eta }\, \mu (y)%
\,[ f(u),g(u)],\;\;\epsilon_{m_{2|7}}=0,
\\
&&\ m_{2|8}(M|f,g) =
\int du
\widetilde \mu(y)
(-1)^{\epsilon (f)}\partial _{y}f(u)\partial _{y}g(u)
\\
&&\ m_{2|9}(z|f,g) =\partial _{x}\partial _{\xi }\left[ f,g\right],
\\
&&\ m_{2|10}(M|f,g) =\int dz\xi \mu (x)\left({\phantom {\int}}\!\!\!\!\!\!\!
{m}_{2|2}{\ (z|f,g)}-
\right. \notag\\
&& \ \ \ \ \ \ \ \ \ \ \ \ \ \ \ \ \  \left.-\widetilde{\theta }%
(x)\int du{[\partial }_{\eta }{\ g(u)\partial }_{y}^{3}{\ f(u)-\partial }%
_{\eta }{\ f(u)\partial }_{y}^{3}{\ g(u)]}\right),
\\
&&\ m_{2|11}(M|f,g) =\int du
\widetilde \mu(y)
\left(
\partial_{\eta } \ g(u)\partial_{y}^{3}\,f(u)-
\partial_{\eta }\, f(u)\partial_{y}^{3}\, g(u)
\right).
\end{eqnarray}

Let $c_i\in\K[[\hbar, \hbar\theta_k]]$ ($i=1,...,6$) be formal series on even parameter $\hbar$
and Grassmannian parameters $\theta_k$; $c_1$, $c_2$, $c_3$ are odd and other $c_i$ are even;
$c_i|_{\hbar=0}=0$.
Let $c_i$ satisfy the relations
\begin{eqnarray}
&&c_4 c_i=0, \ \ \ i\ne 4,\label{-4}\\
&&c_4 M=0,\\
&&c_3 c_i=0,\\
&&c_3 M=0. \label{-3}
\end{eqnarray}

Then the bilinear form
\begin{equation}\label{defform}
C=m_{2|0}+{\cal M}_{c_4}+\sum_{i=1,2,3,5,6}c_i m_{2|i} + m_{2|7}(M)
- c_6  m_{2|8}(M)- c_2 c_6 m_{2|9}- c_2  m_{2|10}(M)-  c_2 c_6^2 m_{2|11}(M)
\end{equation}
is a deformation of antibracket $m_{2|0}$ on $\mathbf {DE}$.
}\end{theorem}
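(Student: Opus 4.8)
The plan is to prove that $C$ is a deformation by verifying the graded Jacobi identity directly, i.e.\ by showing that the Jacobiator $J(C)=\tfrac12 J(C,C)$ (which is legitimate since every summand of $C$ is $\epsilon$-even) vanishes identically on $\mathbf{DE}^3$. Writing $C=m_{2|0}+R$ with $R$ of positive order in the combined grading (in which $\hbar$ and each $\theta_k$ carry degree one), bilinearity of the Jacobiator gives $J(C,C)=J(m_{2|0},m_{2|0})+2J(m_{2|0},R)+J(R,R)$, and I would organize the cancellation order by order in this grading.

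At zeroth order $J(m_{2|0},m_{2|0})=0$ because the antibracket (\ref{Sch}) is already a Lie superalgebra. At first order the coefficient of each generator is $J(m_{2|0},m_{2|i})$ or $J(m_{2|0},m_{2|7}(M))$; by the relation $d_2 m(f,g,h)=-(-1)^{\epsilon(f)\epsilon(h)}J(m_{2|0},m)(f,g,h)$ each of these vanishes exactly when the corresponding $d_2 m_{2|i}$, resp.\ $d_2 m_{2|7}(M)$, vanishes, which holds because Theorems \ref{th2} and \ref{m7} guarantee that the $m_{2|i}$ and $m_{2|7}(M)$ are $2$-cocycles. The block ${\cal M}_{c_4}$ is the exact deformation generated by $m_{2|4}$ recalled in Section \ref{jac-res}, so it satisfies the Jacobi identity on its own; since the relations $c_4 c_i=0$ and $c_4 M=0$ decouple it from every other generator, it can be treated separately and contributes nothing to the mixed terms. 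The same decoupling applies to $c_3$ via $c_3 c_i=0$, $c_3 M=0$.

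The crux is the second order, where the genuinely nonlinear part $J(R,R)$ enters through the Jacobiators $J_{i,j}$ computed in Section \ref{jac-res}. Imposing the hypotheses (\ref{-4})--(\ref{-3}) together with $c_i c_i=0$ for the odd $c_1,c_2,c_3$ annihilates all products except those surviving in the tables; concretely the relevant residue consists of the entries $J_{2,6}=J(m_{2|0},m_{2|9})$, $J_{2,7}=J(m_{2|0},m_{2|10}(M))$ and $J_{6,7}=J(m_{2|0},m_{2|8}(M))$. Each of these is a coboundary, so the correction terms $-c_2c_6\,m_{2|9}$, $-c_2\,m_{2|10}(M)$ and $-c_6\,m_{2|8}(M)$ in (\ref{defform}) are chosen precisely so that $J(m_{2|0},\cdot)$ applied to them cancels each residue; this is the solvability statement of Theorem \ref{cc}, which is exactly why the relations (\ref{-4})--(\ref{-3}) are both necessary and sufficient.

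The remaining work, and the main obstacle, is to close the recursion and show that the corrections do not generate an endless tower. Introducing $m_{2|8},m_{2|9},m_{2|10}$ produces the new Jacobiators collected in the tables of Section \ref{jac-res}: most entries $J_{8,j},J_{9,j},J_{10,j}$ vanish or are again coboundaries of the single new form $m_{2|11}$, and the identity $J_{8,2}+J_{9,7}+J_{10,6}=0$ is the key consistency relation that makes the obstruction at this order a pure coboundary. The last term $-c_2c_6^2\,m_{2|11}(M)$ absorbs it, and since the final table shows all relevant $J_{\,\cdot\,,11}=0$, no further correction is required. Because the odd, nilpotent character of $c_2,c_3$ together with the relations (\ref{-4})--(\ref{-3}) force every admissible monomial in the $c_i$ and $M$ to terminate (the highest surviving one being $c_2c_6^2$), the process stops after finitely many steps and $J(C)=0$ holds identically, so $C$ is a deformation. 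The most delicate bookkeeping is keeping track of the signs and of the $\widetilde\theta$-dependent pieces in $m_{2|8},m_{2|10},m_{2|11}$ so that these cancellations are exact rather than merely coboundary-valued.
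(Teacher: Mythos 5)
Your proposal is correct and takes essentially the same route as the paper: the paper's entire proof of Theorem \ref{-th1} is the single remark that it suffices to check the Jacobi identity for $C$ using the previously computed Jacobiators, and your order-by-order verification (cocycle property at first order, decoupling of ${\cal M}_{c_4}$ and $c_3$ via (\ref{-4})--(\ref{-3}), cancellation of the coboundary residues $J_{2,6}$, $J_{2,7}$, $J_{6,7}$ by the $m_{2|9}$, $m_{2|10}$, $m_{2|8}$ corrections, the consistency identity $J_{8,2}+J_{9,7}+J_{10,6}=0$ together with the $m_{2|11}$ term at the next order, and termination by nilpotency of the odd coefficients) is exactly that check written out in detail.
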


To prove this theorem it is sufficient to check Jacoby identity
for $C$ using previous results for Jacobiators.

The solution (\ref{defform}) is complete in the cases listed in the following theorem.

\begin{theorem}\label{-th2}
{\it
Let $C$ be some deformation of antibracket $m_{2|0}$  on $\mathbf {DE}$
depending on even parameter $\hbar$ and on $\hbar \theta_j$ with
Grassmannian generating elements $\theta_j$.
So, $C=\sum_0^\infty \hbar^p C_p(\hbar\theta_k)$ and $C|_{\hbar=0}=m_{2|0}$.
Let $\sum_1^s \hbar^p C_p(\hbar\theta_k)$
depends on generating elements
of $\G^{n_s}$ ($n_1 \le n_2 \le n_3\le ...$).

Then
\begin{enumerate}
\item\label{i1}
If $C|_{\theta_k=0}\ne m_{2|0}$,
then $C$ has the form (\ref{defform}), and $c_i$ and $M$ satisfy (\ref{-4})-(\ref{-3}).
\item\label{i2}
If $C|_{\theta_k=0}= m_{2|0}$ and $n_s\le 3$ for all $s$,
then $C$ has the form (\ref{defform}), and $c_i$ and $M$ satisfy (\ref{-4})-(\ref{-3}).
\end{enumerate}
}\end{theorem}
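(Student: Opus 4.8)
The plan is to analyze the Jacobi (Maurer--Cartan) equation $J(C,C)=0$ for $C=\sum_{p\ge 0}\hbar^p C_p$ order by order in the combined filtration given by the power of $\hbar$ together with the Grassmann degree in the $\theta_k$ (recalling the convention that every $\theta_k$ carries the same order as $\hbar$). At the lowest nontrivial order the equation reduces to $J(m_{2|0},C_1)=0$, i.e.\ $d_2 C_1=0$, so $C_1$ is a $2$-cocycle. By Theorem \ref{m7}, modulo coboundaries $C_1=\sum_{i=1}^6 c_i^{(1)}m_{2|i}+m_{2|7}(M^{(1)})$. The coboundary ambiguity is exactly the freedom of the similarity transformations $[f,g]_T=T^{-1}[Tf,Tg]$ with $T=\mathsf{id}+\hbar T_1+\cdots$; I would exploit this freedom systematically, order by order, to strip off all exact pieces and thereby bring $C$ into the normal form in which only the distinguished representatives $m_{2|i}$, $m_{2|7}(M)$ and the correction forms $m_{2|8},\dots,m_{2|11}$ occur. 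Collecting all orders, the cocycle part is then $V_2=m_{2|0}+\sum_{i=1}^6 c_i m_{2|i}+m_{2|7}(M)$, where now $c_i\in\K[[\hbar,\hbar\theta_k]]$ and $M$ is a formal series in $\hbar,\hbar\theta_k$ with coefficients in $D'/E'$.

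Next I would impose the second-order part of the Jacobi identity, which demands that the obstruction $J(V_2,V_2)$ lie in the image of $d_2=-J(m_{2|0},\cdot)$. This is precisely the hypothesis of Theorem \ref{cc}, whose conclusion forces the relations (\ref{-4})--(\ref{-3}): $c_4 c_i=0$ for $i\ne 4$, $c_4 M=0$, $c_3 c_i=0$, $c_3 M=0$, together with $c_i c_i=0$ for the odd parameters $c_1,c_2,c_3$. Hence any deformation must obey these constraints, and the even parameter $c_4$ (resp.\ the odd $c_3$) decouples completely from the remaining data. With the constraints in force, the residual higher-order obstruction is assembled from the explicit Jacobiators tabulated in Section \ref{jac-res}: each surviving $J_{i,j}$ that is not already of the form $J(m_{2|0},n_2)$ is cancelled by introducing exactly one correction term built from $m_{2|8}(M)$, $m_{2|9}$, $m_{2|10}(M)$, $m_{2|11}(M)$, with the coefficients $-c_6$, $-c_2 c_6$, $-c_2$, $-c_2 c_6^2$ appearing in (\ref{defform}). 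The self-consistency relation $J_{8,2}+J_{9,7}+J_{10,6}=0$ and the vanishing of the higher Jacobiators $J_{i,11}$ (table in Section \ref{def-res}) guarantee that no further correction is generated, so the inductive construction closes after finitely many steps and reproduces exactly (\ref{defform}).

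For the two completeness statements I would argue by induction on the filtration. In case \ref{i1} the nonvanishing of $C|_{\theta_k=0}$ means that some $\theta$-independent $c_i$ or $M$ is already nonzero at $\hbar$-order one; the constraints (\ref{-4})--(\ref{-3}) then select which combinations can survive, and the inductive step uses the Jacobiator tables to show that each higher-order contribution is forced, up to an exact form removed by $T$, to coincide with the corresponding term of (\ref{defform}). In case \ref{i2} the deformation is purely odd in the $\theta_k$, so one must track antisymmetric products of the odd generators carefully; the hypothesis $n_s\le 3$ is what keeps this bookkeeping finite, since with at most three odd generators the relevant products are exhausted and the odd part of $H^2$ is three-dimensional, spanned by $m_{2|1},m_{2|2},m_{2|3}$.

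The hard part will be the completeness argument, and within it the purely-odd case \ref{i2}: one must verify that the similarity freedom is rich enough to eliminate every coboundary at every order without regenerating terms, and that this control can genuinely fail once $n_s>3$, which is why the restriction is imposed explicitly. Proving that the specific coefficients $-c_6,-c_2 c_6,-c_2,-c_2 c_6^2$ are the \emph{unique} choices making all tabulated Jacobiators cancel---rather than merely a sufficient one---is the technical crux, and it is here that the detailed computations of Appendices \ref{-app1}--\ref{app6} become unavoidable.
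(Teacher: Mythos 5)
Your overall strategy coincides with the paper's (order-by-order analysis of the Jacobi identity, Theorem \ref{m7} at lowest order, Theorem \ref{cc} for the constraints, the correction forms $m_{2|8}$--$m_{2|11}$ for the exact obstructions), but the completeness step --- which is the entire content of Theorem \ref{-th2} --- is asserted rather than proved, and the assertion hides exactly the places where the paper does real work. First, you cannot simply ``strip off all exact pieces'' by similarity transformations: a coboundary $d_1 M_1$ over $\mathbf{D}$ whose kernel lies in $D'\setminus E'$ is \emph{not} a coboundary over $\mathbf{DE}$ (this is precisely the origin of $m_{2|7}$), so at every order the paper first solves the restricted equation on $\mathbf{D}^2$ (quoting \cite{JMP}, \cite{apoi}) and then runs an extension argument on $\mathbf{DE}^2$ to decide which pieces survive; it is these extension arguments (e.g.\ showing that $c_{4|1}\omega_1(f_0)$ would have to extend to all $f_0\in E$, hence $c_{4|1}\omega_1=0$) that yield the mixed-order constraints such as $c_{4|2}c_{5|1}=c_{4|2}c_{6|1}=c_{4|2}\omega_1=0$. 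A single appeal to Theorem \ref{cc} does not give these: to apply it to the full series $c_i$, $M$ you would already need to know that $C$ decomposes as $V_2$ plus a correction solving $J(m_{2|0},X_2)=J(V_2,V_2)$, which is part of what is being proved. (Also, your cancellation criterion is inverted: the corrections $m_{2|8},\dots,m_{2|11}$ cancel precisely those Jacobiators that \emph{are} of the form $J(m_{2|0},n_2)$; the ones that are not of this form cannot be cancelled at all and instead force coefficients to vanish.)

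Second, the induction itself is not set up correctly. Under item \ref{i1} the hypothesis only says that the $\theta$-independent sector is a nontrivial deformation, whose first nonzero term sits at some order $\hbar^{k_0}$ with arbitrary $k_0\ge 1$, not at order one as you claim; the paper's proof accordingly splits into Case I ($c_4$ leading), Case II (one of $c_5$, $c_6$, $M$ leading) and Case III (trivial $\theta^0$ sector), each with its own structure of the subsequent $\theta^1$, $\theta^2$, $\theta^3$ analysis. Moreover the hypothesis $n_s\le 3$ of item \ref{i2} is not mere ``bookkeeping'': it enters operationally in the paper's Case III, where $C$ is expanded exactly and finitely as $m_{2|0}+\theta_p c_{p|p'}m_{2|p'}+\gamma^i C^{(2)}_i+\beta C^{(3)}$ with $\gamma^i=\frac12\varepsilon^{ijk}\theta_j\theta_k$, $\beta=\theta_1\theta_2\theta_3$, and the Jacobi identity is solved within this finite expansion, producing the relations $\varepsilon^{ijk}c_{j|q}c_{k|3}=0$, $c_{i|q}d_{i|4}=0$, $c_{i|3}d_{i|a}=c_{i|3}\omega_i=0$ that reassemble into (\ref{-4})--(\ref{-3}). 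None of this mechanism appears in your proposal, and your closing paragraph concedes that the completeness and uniqueness-of-coefficients arguments are deferred to the appendices --- but those arguments \emph{are} the proof of Theorem \ref{-th2}.
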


{\bf Corollary}\label{cor}
{\it
If $n_s\le 3$ for all $s$,
then $C$ has the form (\ref{defform}), and $c_i$ and $M$ satisfy (\ref{-4})-(\ref{-3}).
}

The proof of these theorems can be found in the Appendices.

%\newpage

\subsection{List of possible deformations}
\label{listdef}

The relations (\ref{-4})-(\ref{-3}) can be solved under the conditions of each item
of Theorem \ref{-th2}.

Item (\ref{i1}) of this Theorem
splits onto 2 possibilities
\begin{enumerate}
\item{$c_4|_{\theta=0}\ne 0$.}

In this case
 $c_1=c_2=c_3=0$, $c_5=c_6=0$, $M=0$ and
\begin{equation}\label{defform1}
C=m_{2|0}+{\cal M}_{c_4}
\end{equation}
\item{At least one of $c_5|_{\theta=0}$, $c_6|_{\theta=0}$, $M|_{\theta=0}$ is not zero.}

In this case $c_3=0$, $c_4=0$ and
\begin{equation}\label{defform2}
C=m_{2|0}+\sum_{i=1,2,5,6}c_i m_{2|i} + m_{2|7}(M)
- c_6  m_{2|8}(M)- c_2 c_6 m_{2|9}- c_2  m_{2|10}(M)-  c_2 c_6^2 m_{2|11}(M)
\end{equation}
\end{enumerate}

Now, let us look at the possibilities, that follows from
Item (\ref{i2}) of Theorem \ref{-th2}.

In this case $c_4=\hbar^2 (c_{4,1} \theta_2 \theta_3+ c_{4,2} \theta_3 \theta_1 + c_{4,3}\theta_1 \theta_2)$,
where $c_{4,k} \in \K[[\hbar]]$. Clearly, it is possible to choose a new generators of $\G^3$,
\be\label{newtheta3}
\theta_k^\prime=\sum_{l=1,2,3}a_{k,l}(\hbar)\theta_l
\ee
in such a way that
$c_4=\hbar^2 c_{4,3}^\prime (\hbar)\theta_1^\prime \theta_2^\prime$.

So, it is sufficiently to consider the coefficient $c_4$ in the form
$c_4=\hbar^2 \alpha_4 \theta_1 \theta_2$ where $d_4\in \hbar^2K[[\hbar]]$.

This form of $c_4$ is preserved under the following changes of generators of $\G^3$:
\be\label{newtheta2}
\left\{
\begin{array}{l}
\theta_k^\prime=\sum_{l=1,2}a_{k,l}(\hbar)\theta_l\quad k=1,2\\
\theta_3^\prime=\theta_3.
\end{array}
\right.
\ee

Up to the transformation of the form (\ref{newtheta3}), Item (\ref{i2}) of Theorem \ref{th2}
gives two possibilities:

\begin{enumerate}

\item

{$c_i=\hbar\sum_{j=1,2}\beta_i^j\theta_j+\hbar^3\gamma_i \theta_1\theta_2\theta_3$,
($i=1,2$), $c_3=\hbar^3\gamma_3\theta_1\theta_2\theta_3$,
\\
$c_{4}=\hbar^2 \alpha_{4}\theta_1\theta_2$,
$c_{5,6}=\hbar^2 \sum_{k,l,m=1,2,3}\alpha^k_{5,6}\varepsilon_{klm}\theta_l\theta_m$,
$M=\hbar^2 \sum_{k,l,m=1,2,3}M_k\varepsilon_{klm}\theta_l\theta_m$,}

where $\alpha_4,\ \alpha_k^i,\ \beta_i, \gamma_i\in \K[[\hbar]]$,
$M_k\in
%D'/E'
%
\!\!\!
{\raise1pt\hbox{
$D'$
}}
\!\!\!\!
\big/
\!\!\!
{\raise-3pt\hbox{
$E'$
}}
\!\!
[[\hbar]]$

In this case
\begin{equation}\label{defform4}
C=m_{2|0}+\sum_{i=1,2,3,4,5,6}c_i m_{2|i} + m_{2|7}(M)
%- c_6  m_{2|8}(M)
- c_2 c_6 m_{2|9}- c_2  m_{2|10}(M)
\end{equation}

\item{$c_i=\theta_1 \tilde c_i$, $M=\theta_1 \tilde M$}.

In this case
\begin{equation}\label{defform3}
C=m_{2|0}+\theta_1\left(
\sum_{i=1,2,3,4,5,6}\tilde c_i m_{2|i} + m_{2|7}(\tilde M)
\right).
\end{equation}

\end{enumerate}

%%%%%%%%%%%%-----------------------------------------------

%%%%%%%%%%%%-----------------------------------------------

%%%%%%%%%%%%-----------------------------------------------

%%%%%%%%%%%%-----------------------------------------------

\setcounter{MaxMatrixCols}{10}

\def\theequation{A\arabic{appen}.\arabic{equation}}

\renewcommand{\theorem}{\par\refstepcounter{theorem}
{\bf Theorem A\arabic{appen}.\arabic{theorem}. }}
\renewcommand{\lemma}{\par\refstepcounter{lemma}
{\bf Lemma A\arabic{appen}.\arabic{lemma}. }}
\renewcommand{\proposition}{\par\refstepcounter{proposition}
{\bf Proposition A\arabic{appen}.\arabic{proposition}. }}
\makeatletter \@addtoreset{theorem}{appen}
\makeatletter \@addtoreset{lemma}{appen}
\makeatletter \@addtoreset{proposition}{appen}
\makeatletter \@addtoreset{equation}{appen}
\renewcommand\thetheorem{A\theappen.\arabic{theorem}}
\renewcommand\thelemma{A\theappen.\arabic{lemma}}
\renewcommand\theproposition{A\theappen.\arabic{proposition}}
%====================================
%====================================
%====================================
%====================================
%====================================

\appen{Jacobiator of cocycle and coboard}\label{app1}

%\subappen{General solution}

Here we formulate and prove Lemma, which helps to
calculate Jacobiators $J_{i,7}$ in Section \ref{jac-res}.

\begin{lemma}
{\it
Let $\epsilon $ be some parity, preserving by $m_{2|0}=[f,g],$ i.e. $%
\epsilon ([f,g])=\epsilon (f)+\epsilon (g)$

Let 2-form $m_{2}(f,g)$ be cocycle, i.e.

\begin{eqnarray*}
&&\,{(-1)}^{\epsilon (f)\epsilon (h)}m_{ 2  }([f,g],h)+\mathrm{cycle}{%
(f,g,h)}= \\
&&\,=-{(-1)}^{\epsilon (f)\epsilon (h)}[m_{ 2  }(f,g),h]+\mathrm{cycle}{%
(f,g,h)},
\end{eqnarray*}

and let $n{     }(f,g)=d_{   1  }M_{1}(f,g)$, $\epsilon _{n{     }}=\epsilon _{M_{1}}$.
Then%
\[
J(m_{ 2  },d_{  1   }M_{1})=J(m_{2|0},U_{2})
\]

where%
\begin{eqnarray*}
&&U_{2}={(-1)}^{\epsilon _{m_{ 2  }}\epsilon
_{M_{1}}}M_{1}(m_{ 2  }(f,g))-m_{ 2  }(M_{1}(f),g)+ \\
&&\,+{(-1)}^{\epsilon (f)\epsilon (g)}m_{ 2  }(M_{1}(g),f),\;\epsilon
_{U_{2}}=\epsilon _{m_{ 2  }}+\epsilon _{M_{1}}.
\end{eqnarray*}
}
\end{lemma}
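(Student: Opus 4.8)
The conceptual content here is the graded Jacobi identity for the Nijenhuis--Richardson bracket $\langle\cdot,\cdot\rangle$ on the cochain complex, for which $J(m_{2|0},\cdot)$ is the differential (up to the sign recorded in the relation $d_2m(f,g,h)=-(-1)^{\epsilon(f)\epsilon(h)}J(m_{2|0},m)(f,g,h)$). I would, however, carry out a direct verification rather than reprove that identity abstractly. First I would write $n=d_1M_1$ explicitly: evaluating (\ref{diff}) at $p=1$ in the adjoint representation $f\cdot v=[f,v]$ gives
\[
d_1M_1(f,g)=(-1)^{\epsilon(f)\epsilon_{M_1}}[f,M_1(g)]-(-1)^{\epsilon(f)\epsilon(g)+\epsilon(g)\epsilon_{M_1}}[g,M_1(f)]-M_1([f,g]),
\]
which I would then substitute into the defining expression for $J(m_2,n)(f,g,h)$.

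The resulting terms split into two families. Once inserted into $m_2$ or composed with $m_2$, the pieces of $n$ produce on one hand terms in which $M_1$ is the outermost operation, applied to an argument built from $m_2$ and $[\cdot,\cdot]$, and on the other hand terms in which $m_2$ is outermost, applied to an argument built from $M_1$ and $[\cdot,\cdot]$. I would collect these two families separately, keeping all cyclic images over $(f,g,h)$.

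Two inputs then finish the argument. For the $m_2$-outermost family, the cocycle hypothesis in its stated cyclic form lets me replace $m_2([f,g],h)+\mathrm{cycle}$ by $-[m_2(f,g),h]+\mathrm{cycle}$, turning those terms into brackets $[\,\cdot\,,\cdot\,]$ of $m_2$-built arguments; these are exactly what $J(m_{2|0},\cdot)$ generates from the $-m_2(M_1(f),g)+(-1)^{\epsilon(f)\epsilon(g)}m_2(M_1(g),f)$ part of $U_2$. For the terms carrying nested brackets I would apply the super Jacobi identity for $[\cdot,\cdot]$ to collapse the $[f,[g,\cdot]]$-type cyclic sums. Finally I would expand $J(m_{2|0},U_2)(f,g,h)$ directly, using $\epsilon_{U_2}=\epsilon_{m_2}+\epsilon_{M_1}$ together with the three constituent terms of $U_2$ --- in particular the outer term $(-1)^{\epsilon_{m_2}\epsilon_{M_1}}M_1(m_2(f,g))$ reproduces the $M_1$-outermost family --- and match term by term against the reorganized expansion of $J(m_2,d_1M_1)$.

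The main obstacle is purely the sign and relabeling bookkeeping: every monomial carries a Koszul sign built from $\epsilon(f),\epsilon(g),\epsilon(h),\epsilon_{m_2},\epsilon_{M_1}$, and since all sums are cyclic in $(f,g,h)$ one must repeatedly relabel the cyclic variables and recompute the prefactor in order to merge a term produced by $J(m_2,d_1M_1)$ with its counterpart from $J(m_{2|0},U_2)$. Conceptually nothing further is at stake: the statement is the instance $a=m_{2|0}$, $b=m_2$, $c=M_1$ of the graded Jacobi identity, with the term $\langle\langle m_{2|0},m_2\rangle,M_1\rangle$ killed by $J(m_{2|0},m_2)=0$ (the cocycle condition), so that the surviving combination is $J(m_{2|0},U_2)$ with $U_2=\langle m_2,M_1\rangle$ and exactly the signs listed.
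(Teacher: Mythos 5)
Your overall route --- write out $d_1M_1$ explicitly, substitute into $J(m_2,d_1M_1)$, expand $J(m_{2|0},U_2)$, and cancel using the cocycle hypothesis --- is the same direct verification the paper performs (the paper computes the difference $J-R$ and shows it vanishes), and your Nijenhuis--Richardson framing correctly identifies why the statement must hold. However, two of the concrete claims in your roadmap are false, and they sit exactly where the paper's proof does its real work, so as written the matching you describe would not close.

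First, the super Jacobi identity for $[\cdot,\cdot]$ is never needed: every monomial occurring in $J(m_2,d_1M_1)$ or in $J(m_{2|0},U_2)$ contains at most one bracket (composed with one $m_2$ and at most one $M_1$), so no nested terms of type $[f,[g,h]]$ ever appear; indeed the lemma holds, and the paper's proof goes through, without ever assuming that $[\cdot,\cdot]$ satisfies the Jacobi identity. Second, your accounting of where the cocycle hypothesis acts is wrong for both families. The $M_1$-outermost terms do \emph{not} match automatically: the left side produces $M_1([m_2(f,g),h])$ while $J(m_{2|0},U_2)$ produces $M_1(m_2([f,g],h))$, and these agree only after summing cyclically and applying the cocycle identity \emph{inside} the argument of $M_1$ (this is relation (A1.1) in the paper's proof). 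Meanwhile the $m_2$-outermost terms of $J(m_2,d_1M_1)$ have the form $m_2([M_1(f),g],h)$, so the cocycle condition ``in its stated cyclic form'' at arguments $(f,g,h)$ simply does not apply to them: one must first relabel the cyclic sums so that $M_1$ sits on a fixed argument and then invoke the cocycle identity at the shifted arguments $(f,g,M_1(h))$. Moreover, that shifted identity ties together leftover terms coming from \emph{both} Jacobiators at once --- for instance $m_2(M_1(h),[f,g])$ arises from $U_2$, while $[m_2(f,g),M_1(h)]$ arises from $d_1M_1$ --- so it is not a replacement performed within $J(m_2,d_1M_1)$ alone, as your text suggests. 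This relabel-then-shift step is precisely the paper's closing argument (``after proper cycling \dots\ the last expression is zero before cycling''), and it is the one idea missing from your outline.
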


\begin{proof}
Let, for shortness, $R=J(m_{2|0},U_{2})$,
$J=J(m_{ 2  },d_{  1   }M_{1})$,
$\sigma (f,h)={(-1)}%
^{\epsilon (f)\epsilon (h)}$.

Evaluate the difference $J-R$:
\begin{eqnarray*}
{\ } &&J-R{=}\sigma (f,h)\{m_{ 2  }([M_{1}(f),g],h)-\sigma
(g,f)m_{ 2  }([M_{1}(g),f],h)- \\
&&-{       }{m}_{2}{       }{(M}_{1}{       }{([f,g])\},h)+(-1)}^{\epsilon
_{m_{ 2  }}\epsilon _{M_{1}}}{       }{[M}_{1}{       }{(m}_{2}{       }{(f,g)),h]}+ \\
&&+(-1)^{\epsilon _{M_{1}}[\epsilon (f)+\epsilon (g)]}[m_{ 2  }(f,g),M_{1}(h)]-{%
(-1)}^{\epsilon _{m_{ 2  }}\epsilon _{M_{1}}}M_{1}([m_{ 2  }(f,g),h])- \\
&&-{       }{(-1)}^{\epsilon _{m_{ 2  }}\epsilon _{M_{1}}}{       }{[M}_{1}{       }{(m}_{2}%
{       }{(f,g)),h]+[}m_{ 2  }((M_{1}(f),g),h]- \\
&&-\sigma (g,f){       }{[}m_{ 2  }((M_{1}(g),f),h]- \\
&&-(-1)^{\epsilon _{m_{ 2  }}\epsilon _{M_{1}}}M_{1}(m_{ 2  }([f,g],h))+{       }{m}%
_{2}{       }{(M}_{1}{       }{([f,g]),h)-} \\
&&\,-\sigma (f,h)\sigma (g,h)m_{ 2  }(M_{1}(h),[f,g])\}+\mathrm{cycle}%
\;\Longrightarrow
\end{eqnarray*}%

so
\begin{eqnarray*}
{\ } &&J-R{=}\sigma (f,h)\{m_{ 2  }([M_{1}(f),g],h)-\sigma
(g,f)m_{ 2  }([M_{1}(g),f],h)+ \\
&&+(-1)^{\epsilon _{M_{1}}[\epsilon (f)+\epsilon (g)]}[m_{ 2  }(f,g),M_{1}(h)]-%
{       }{(-1)}^{\epsilon _{m_{ 2  }}\epsilon _{M_{1}}}{       }{M}_{1}{       }{([m}_{2}%
{       }{(f,g),h])}+ \\
&&+[m_{ 2  }((M_{1}(f),g),h]-\sigma (g,f)[m_{ 2  }((M_{1}(g),f),h]- \\
&&{       }{-(-1)}^{\epsilon _{m_{ 2  }}\epsilon _{M_{1}}}{       }{M}_{1}{       }{(m}_{2}%
{       }{([f,g],h))-}\sigma (f,h)\sigma (g,h)m_{ 2  }(M_{1}(h),[f,g])\}+ \\
&&\,+\mathrm{cycle}
\end{eqnarray*}%
and
\begin{eqnarray*}
{\ } &&J-R{=}\sigma (f,h)\{m_{ 2  }([M_{1}(f),g],h)-\sigma
(g,f)m_{ 2  }([M_{1}(g),f],h)+ \\
&&+(-1)^{\epsilon _{M_{1}}[\epsilon (f)+\epsilon
(g)]}[m_{ 2  }(f,g),M_{1}(h)]+[m_{ 2  }((M_{1}(f),g),h]- \\
&&-\sigma (g,f)[m_{ 2  }((M_{1}(g),f),h]-\sigma (f,h)\sigma
(g,h)m_{ 2  }(M_{1}(h),[f,g])\} \\
&&+\mathrm{cycle},
\end{eqnarray*}%
where we use the relation%
\begin{equation}
{\sigma (f,h)\{}m_{ 2  }([f,g],h)+[m_{ 2  }(f,g),h]\}+\mathrm{cycle}=0.  \label{1}
\end{equation}

After proper cycling we obtain
\begin{eqnarray*}
&&J-R{=}\sigma (g,h)m_{ 2  }([M_{1}(h),f]),g)-\sigma (f,g)\sigma
(g,h)m_{ 2  }([M_{1}(h),g],f)+ \\
&&+\sigma (f,h){(-1)}^{\epsilon _{M_{1}}[\epsilon (f)+\epsilon
(g)]}[m_{ 2  }(f,g),M_{1}(h)]+\sigma (g,h){       }{[}m_{ 2  }(M_{1}(h),f),g]- \\
&&-\sigma (f,g)\sigma (g,h){       }{[}m_{ 2  }(M_{1}(h),g),f]-\sigma
(g,h)m_{ 2  }(M_{1}(h),[f,g])+\mathrm{cycle}= \\
&&\,=(-1)^{\epsilon _{M_{1}}\epsilon (g)}\{\sigma
(g,M_{ 1 }(h))m_{ 2  }([M_{ 1 }(h),f]),g)+\sigma (f,g)m_{ 2  }([M_{ 1 }(h),g],f)+ \\
&&+\sigma (f,M_{ 1 }(h))[m_{ 2  }(f,g),M_{ 1 }(h)]+\sigma (g,M_{ 1 }(h)){       }{[}%
m_{ 2  }(M_{ 1 }(h),f),g]+ \\
&&+\sigma (f,g){       }{[}m_{ 2  }(g,M_{ 1 }(h)),f]+\sigma
(f,M_{ 1 }(h))m_{ 2  }([f,g],M_{ 1 }(h))\}+ \\
&&\,+\mathrm{cycle},
%\;M_{ 1 }(h)=M_{1}(h),\;\epsilon (M_{1}(h))=\epsilon
%(h)+\epsilon _{M_{1}},
\end{eqnarray*}%
Because $m$ is a cocycle, the last expession is zero before cycling:
\bee
&&(-1)^{\epsilon _{M_{1}}\epsilon (g)}\{\sigma
(g,M_{ 1 }(h))m_{ 2  }([M_{ 1 }(h),f]),g)+\sigma (f,g)m_{ 2  }([M_{ 1 }(h),g],f)+ \notag\\
&&+\sigma (f,M_{ 1 }(h))[m_{ 2  }(f,g),M_{ 1 }(h)]+\sigma (g,M_{ 1 }(h)){       }{[}%
m_{ 2  }(M_{ 1 }(h),f),g]+ \notag\\
&&+\sigma (f,g){       }{[}m_{ 2  }(g,M_{ 1 }(h)),f]+\sigma
(f,M_{ 1 }(h))m_{ 2  }([f,g],M_{ 1 }(h))\} =0 \notag
\eee
So, $J-R=0$.
\end{proof}

%====================================
%====================================
%====================================
%====================================
%====================================
%====================================
%====================================

\appen{Jacoby Identity}\label{-app1}

%\subappen{General solution}

We consider the deformation $C$ as a formal series on deformation parameter $\hbar$,
each term of this series depends in its turn on generators of some Grassmann algebra $\G^{s_k}$.

\qquad
\begin{eqnarray*}
&&C(z|f,g)=C^{(0)}(z|f,g)+\hbar C^{(1)}(z|f,g)+\hbar
^{2}C^{(2)}(z|f,g)+\hbar^{3}C^{(3)}(z|f,g)+..., \\
&&C^{(k)}(z|f,g)=\theta ^{k}\widetilde C^{(k)}(z|f,g),
\end{eqnarray*}%
The notation $\theta^k$ means $\theta_{i_1}\theta_{i_2}...\theta_{i_k}$,
The forms $\widetilde C^{(k)}$ depend on $\hbar$ and don't depend on $\theta$
\begin{eqnarray*}
&&\varepsilon _{C}=1,\;\varepsilon _{C^{(0)}}=\varepsilon
_{C^{(p)}}=1,\;\varepsilon _{A_{\ast }^{(p)}}=1+p, \\
&&C(z|g,f)=(-1)^{\varepsilon (f)\varepsilon (g)+\varepsilon (f)+\varepsilon
(g)}C(z|f,g),\; \\
&&C^{(0)}(z|g,f)=(-1)^{\varepsilon (f)\varepsilon (g)+\varepsilon
(f)+\varepsilon (g)}C^{(0)}(z|f,g), \\
&&C^{(p)}(z|g,f)=(-1)^{\varepsilon (f)\varepsilon (g)+\varepsilon
(f)+\varepsilon (g)}C^{(p)}(z|f,g),
\end{eqnarray*}

Further,
$$C^{(k)}(z|f,g)=\sum_l \hbar^l C^{(k)}_l(z|f,g).$$

Jacoby identity
\begin{equation}
(-1)^{\varepsilon (f)\varepsilon (g)+\varepsilon (f)+\varepsilon
(g)}C(z|C(|f,g),h)+\mathrm{cycle}(f,g,h)=0  \label{2.1}
\end{equation}%
leads to
the equations:%
%
%or%
\begin{eqnarray}
&&\!\!\!\!\!(-1)^{\varepsilon (f)\varepsilon (h)+\varepsilon (f)+\varepsilon
(h)}C^{(0)}(z|C^{(0)}(|f,g),h)+\mathrm{cycle}(f,g,h)=0,  \label{1.1a} \\
&&\!\!\!\!\!(-1)^{\varepsilon (f)\varepsilon (h)+\varepsilon (f)+\varepsilon (h)}
\left[ C^{(0)}(z|C^{(1)}(|f,g),h)+C^{(1)}(z|C^{(0)}(|f,g),h)\right] +\mathrm{%
cycle}=0,  \label{1.1b}
\end{eqnarray}%
and so on.

\appen{Solution of eq. (\protect\ref{1.1a})}

In what follows we use the notation $\omega$ both for the linear functional $M\in D'/E'$ defining
cocycle $mn_{2|7}$ in Theorem \ref{m7} and for the kernel of this functional.

\subappen{First order}

Represent $C^{(0)}(z|f,g)$ in the form%
\begin{equation*}
C^{(0)}(z|f,g)=m_{2|0}(z|f,g)+\hbar C_{1}^{(0)}(z|f,g)+O(\hbar
^{2}),\;\varepsilon _{C_{1}^{(0)}}=1.
\end{equation*}%
It follows from eq. (\ref{1.1a}) that%
\begin{equation*}
J(m_{2|0},C_{1}^{(0)};\,z|f,g,h)=0.
\end{equation*}%
This equation was solved \ in Section \ref{cohom-res} and its general solution in the
sector $\varepsilon _{C_{1}^{(0)}}=1$ is (up to exact forms)%
\begin{eqnarray*}
&&C_{1}^{(0)}(\,z|f,g)=c_{a|1}m_{2|a}(\,z|f,g)+m_{2|7|1}(\,z|f,g),\;a=4,5,6,
\\
&&m_{2|7|k}(\,z|f,g)=\left. m_{2|7}(\,z|f,g)\right| _{\omega \rightarrow
\omega _{k}}.
\end{eqnarray*}

The function $C^{(0)}(z|f,g)$ can be represented in the form%
\begin{eqnarray*}
&&C^{(0)}(z|f,g)=\mathcal{N}_{c_{4|1}}(z|f,g)+\hbar
c_{b|1}m_{2|b}(z|f,g)+\hbar m_{2|7|1}(\,z|f,g)+ \\
&&\,+\hbar ^{2}C_{2}^{(0)}(z|f,g)+O(\hbar ^{3}),\;b=5,6,\;\varepsilon
_{C_{2}^{(0)}}=1,
\end{eqnarray*}
where
\be
\mathcal{N}_c= m_{2|0}+{\cal M}_c,
\ee
and ${\cal M}_c$ is defined by (\ref{calm}).

\subappen{ Second order}

It follows from eq. (\ref{1.1a}) that%
\begin{eqnarray}
&&J(m_{2|0},\tilde{C}%
_{2}^{(0)})+c_{4|1}c_{b|1}J(m_{2|4},m_{2|b})+c_{4|1}J_{1}(m_{2|4},m_{2|7|1})=0,
\label{3.2.0} \\
&&b=5,6,\;\tilde{C}_{2}^{(0)}=C_{2}^{(0)}+c_{6|1}C_{67|1}+c_{4|1}C_{47|1},
\notag \\
&&C_{67|k}=\left. C_{67}\right| _{\omega \rightarrow \omega
_{k}},\;C_{47|k}=\left. C_{47}\right| _{\omega \rightarrow \omega _{k}}.
\notag
\end{eqnarray}%
Let us remind that {\large
\begin{eqnarray*}
&&J(m_{2|4},m_{2|a};\,z|\hat{f}_{0},\hat{g}_{0},\hat{h}%
_{0})=J(m_{2|4},m_{2|a};\,z|\hat{f}_{0},\hat{g}_{0},\hat{h}_{1})= \\
&&\,=J(m_{2|4},m_{2|a};\,z|\hat{f}_{1},\hat{g}_{1},\hat{h}_{1})\equiv
0,\;a=5,6,7, \\
&&C_{67}(z|\hat{f}_{0},\hat{g}_{0})=C_{67}(z|\hat{f}_{1},\hat{g}%
_{1})=C_{47|2}(\hat{f}_{0},\hat{g}_{0})=C_{47|2}(\hat{f}_{1},\hat{g}%
_{1})\equiv 0.
\end{eqnarray*}%
}

Represent the form $M(z|f)$, $\varepsilon _{M}=0$, in the form%
\begin{eqnarray*}
M(z|f) &=&\int du[\eta M_{(0,1)}(x|y)-\xi M_{(1,0)}(x|y)]f(u), \\
M(z|\hat{f}_{0}) &=&\int dyM_{(0,1)}(x|y)f_{0}(y)\equiv M_{(0,1)}(x|f_{0}),\;
\\
M(z|\hat{f}_{1}) &=&\xi \int dyM_{(1,0)}(x|y)f_{1}(y)\equiv \xi
M_{(1,0)}(x|f_{1}).
\end{eqnarray*}

Represent the form $C(z|f,g)=-(-1)^{(\varepsilon (f)+1)(\varepsilon
(g)+1)}C(z|g,f)$, $\varepsilon _{C}=1$, $)$ in the form%
\begin{eqnarray*}
&&C(z|f,g)=\int dvdu[-\xi C_{(1)}(x|y_{u},y_{v})+\eta
_{u}C_{(2)}(x|y_{u},y_{v})-\eta _{v}C_{(2)}(x|y_{v},y_{u})+ \\
&&\,+\xi \eta _{u}\eta _{v}C_{(3)}(x|y_{u},y_{v})](-1)^{\varepsilon
(f)}f(u)g(v),\;C_{(3)}(x|y_{v},y_{u})=C_{(3)}(x|y_{u},y_{v}), \\
&&C_{(1)}(x|y_{v},y_{u})=-C_{(1)}(x|y_{u},y_{v}),
\end{eqnarray*}%
\begin{eqnarray*}
&&C(z|\hat{f}_{0},\hat{g}_{0})=\xi C_{(3)}(x|f_{0},g_{0}),\;C(z|\hat{f}_{0},%
\hat{g}_{1})=C_{(2)}(x|f_{0},g_{1}) \\
&&C(z|\hat{f}_{1},\hat{g}_{0})=-C_{(2)}(x|g_{0},f_{1}),\;C(z|\hat{f}_{1},%
\hat{g}_{1})=\xi C_{(1)}(x|f_{1},g_{1}),
\end{eqnarray*}%
\begin{eqnarray*}
&&C_{67}(z|\hat{f}_{0},\hat{g}_{0})=0\;\Longrightarrow
\;C_{67(3)}(x|f_{0},g_{0})=0, \\
&&C_{67}(z|\hat{f}_{1},\hat{g}_{1})=0\;\Longrightarrow
\;C_{67(1)}(x|f_{1},g_{1})=0, \\
&&C_{67}(z|\hat{f}_{0},\hat{g}_{1})=C_{67(2)}(x|f_{0},g_{1})=\int dy\tilde{%
\omega}(y)f_{0}^{\prime }(y)g_{1}^{\prime }(y), \\
&&C_{67}(z|\hat{f}_{1},\hat{g}_{0})=-C_{67(2)}(x|g_{0},f_{1})..
\end{eqnarray*}

We have for $M_{d_{1}}(z|f,g)=d_{1}^{\mathrm{ad}}M(f,g)$:%
\begin{eqnarray}
&&M_{d_{1}(3)}(z|f_{0},g_{0})=0,  \label{5.2.1a} \\
&&M_{d_{1}(2)}(x|f_{0},g_{1})=\{\partial
_{x}M_{(0,1)}(x|f_{0})\}g_{1}(x)+M_{(1,0)}(x|g_{1})f_{0}^{\prime
}(x)-M_{(0,1)}(x|f_{0}^{\prime }g_{1}),  \label{5.2.1b} \\
&&M_{d_{1}(1)}(x|f_{1},g_{1})=\{\partial
_{x}M_{(1,0)}(x|f_{1})\}g_{1}(x)-\{\partial
_{x}M_{(1,0)}(x|g_{1})\}f_{1}(x)+M_{(1,0)}(x|g_{1})f_{1}^{\prime }(x)-
\notag \\
&&-M_{(1,0)}(x|f_{1})\}g_{1}^{\prime }(x)+M_{(1,0)}(x|f_{1}g_{1}^{\prime
}-f_{1}^{\prime }g_{1}),  \notag
\end{eqnarray}

It follows from $J(m_{2|0},\tilde{C}_{2}^{(0)};z|\hat{f}_{0},\hat{g}_{0},%
\hat{h}_{0})=0$
\begin{equation*}
C_{2(3)}^{(0)}(x|f_{0},g_{0})h_{0}^{\prime }(x)+\mathrm{cycle}%
(f_{0},g_{0},h_{0})=0,
\end{equation*}%
This equation was solved in \cite{JMP} and \cite{apoi}  and we have%
\begin{equation*}
C_{2(3)}^{(0)}(x|f_{0},g_{0})=0.
\end{equation*}

After this, we have $J(m_{2|0},\tilde{C}_{2}^{(0)};z|\hat{f}_{0},\hat{g}_{0},%
\hat{h}_{1})\equiv 0$.

It follows from $J(m_{2|0},\tilde{C}_{2}^{(0)};z|\hat{f}_{1},\hat{g}_{1},%
\hat{h}_{1})=0$%
\begin{eqnarray}
&&C_{2(1)}^{(0)}(x|f_{1},g_{1})h_{1}^{\prime }(x)-\{\partial
_{x}C_{2(1)}^{(0)}(x|f_{1},g_{1})\}h_{1}(x)+  \notag \\
&&\,+C_{2(1)}^{(0)}(x|f_{1}g_{1}^{\prime }-f_{1}g_{1}^{\prime },h_{1})+%
\mathrm{cycle}(f_{1},g_{1},h_{1})=0,  \label{3.2.1c}
\end{eqnarray}%
Here we use the property $C_{67(3)}=C_{67(1)}=M_{47|2(3)}=M_{47|2(1)}=0$,
such that we have $\tilde{C}_{2(3)}^{(0)}=C_{2(3)}^{(0)}$, $\tilde{C}%
_{2(1)}^{(0)}=C_{2(1)}^{(0)}$. Eq. (\ref{3.2.1c}). was also solved in \cite{JMP} and \cite{apoi}
and we have%
\begin{equation*}
C_{2(1)}^{(0)}(x|f_{1},g_{1})=c_{4|2}m_{2|4(1)}(x|f_{1},g_{1})+M_{d_{1}(1)}(x|f_{1},g_{1}),
\end{equation*}%
where $M(z|f)$ is an one-form.

It follows from eq. (\ref{3.2.0}) for $f=\hat{f}_{0}$, $g=\hat{g}_{1}$, $h=%
\hat{h}_{1}$%
\begin{eqnarray*}
&&\{\partial _{x}\tilde{C}_{2(2)}(x|f_{0},g_{1})\}h_{1}(x)+\tilde{C}%
_{2(2)}(x|f_{0}^{\prime }g_{1},h_{1})+\tilde{C}_{2(2)}(x|f_{0},g_{1}h_{1}^{%
\prime })- \\
&&\;-(g_{1}\leftrightarrow h_{1})-C_{2(1)}(x|g_{1},h_{1})f_{0}^{\prime
}(x)=c_{4|1}c_{b|1}J(m_{2|4},m_{2|b};z|\hat{f}_{0},\hat{g}_{1},\hat{h}_{1})+
\\
&&\,+c_{4|1}J_{1}(m_{2|4},m_{2|7};z|\hat{f}_{0},\hat{g}_{1},\hat{h}%
_{1}),\;b=5,6,
\end{eqnarray*}

or%
\begin{eqnarray*}
&&\{\partial _{x}D_{2(2)}(x|f_{0},g_{1})\}h_{1}(x)+D_{2(2)}(x|f_{0}^{\prime
}g_{1},h_{1})+D_{2(2)}(x|f_{0},g_{1}h_{1}^{\prime })- \\
&&\;-(g_{1}\leftrightarrow
h_{1})=c_{4|2}m_{2|4(1)}(x|g_{1},h_{1})f_{0}^{\prime
}(x)+c_{4|1}c_{b|1}J(m_{2|4},m_{2|b};z|\hat{f}_{0},\hat{g}_{1},\hat{h}_{1})+
\\
&&+c_{4|1}J_{1}(m_{2|4},m_{2|7};z|\hat{f}_{0},\hat{g}_{1},\hat{h}_{1}),
\end{eqnarray*}%
where $D_{2(2)}(x|f_{0},g_{1})=\tilde{C}%
_{2(2)}(x|f_{0},g_{1})-M_{(1,0)}(x|g_{1})f_{0}^{\prime }(x)$

Consider the domain $[x\cup \mathrm{supp}f_{0}]\cap \lbrack \mathrm{supp}%
g_{1}\cup \mathrm{supp}h_{1}]=\varnothing $. In this domain we have $%
J(m_{2|4},m_{2|b})=J_{1}(m_{2|4},m_{2|7})=0$ and we find%
\begin{equation*}
\hat{D}_{2(2)}(x|f_{0},g_{1}h_{1}^{\prime }-g_{1}^{\prime }h_{1})=0.
\end{equation*}

Let $f_{0}\in D$. Then we have
\begin{equation*}
D_{2(2)}(x|f_{0},g_{1})=\sum_{q=0}^{Q}T_{1}^{q}(x|f_{0})\partial
_{x}^{q}g_{1}(x)+\sum_{q=0}^{Q}T_{2}^{q}(x|g_{1}\partial
^{q}f_{0}),\;f_{0}\in D.
\end{equation*}

Consider the domain $\mathrm{supp}f_{0}\cap \lbrack x\cup \mathrm{supp}%
g_{1}\cup \mathrm{supp}h_{1}]=\varnothing $. In this domain we have $%
J(m_{2|4},m_{2|b}^{(0)})=J_{1}(m_{2|4},m_{2|7})=0$ and we find%
\begin{equation*}
\{\partial
_{x}D_{2(2)}(x|f_{0},g_{1})\}h_{1}(x)+D_{2(2)}(x|f_{0},g_{1}h_{1}^{\prime
})-(g_{1}\leftrightarrow h_{1})=0.
\end{equation*}

Let $f_{0}\in D$. Then we have
\begin{eqnarray}
&&\sum_{q=0}^{Q}\{\partial _{x}\hat{T}_{1}^{q}(x|f_{0})\}[\partial
_{x}^{q}g_{1}(x)h_{1}(x)-g_{1}(x)\partial _{x}^{q}h_{1}(x)]+  \notag \\
&&+\sum_{q=0}^{Q}\hat{T}_{1}^{q}(x|f_{0})\}[\partial
_{x}^{q+1}g_{1}(x)h_{1}(x)-g_{1}(x)\partial _{x}^{q+1}h_{1}(x)]+  \notag \\
&&\,+\sum_{q=0}^{Q}\hat{T}_{1}^{q}(x|f_{0})\partial
_{x}^{q}[g_{1}(x)h_{1}^{\prime }(x)-g_{1}^{\prime }(x)h(x)]=0.  \label{3.2.2}
\end{eqnarray}

Let $g_{1}(x)=e^{px}$, $h_{1}(x)=e^{kx}$. Then we have
\begin{equation}
\sum_{q=0}^{Q}\{\partial _{x}(p^{q}-k^{q})+\hat{T}%
_{1}^{q}(x|f_{0})[p^{q+1}-k^{q+1}-(p+k)^{q}(p-k)]\}=0  \label{3.2.2a}
\end{equation}

Considering in eq. (\ref{3.2.2a}) the terms of higher ($Q+1$) order in $p$, $%
k$, we obtain
\begin{equation*}
\hat{T}_{1}^{q}=0,\;T_{1}^{q}=\mathrm{loc},\;q\geq 2,\;f_{0}\in D,
\end{equation*}%
and eq. (\ref{3.2.2}) reduces to the form%
\begin{equation*}
\partial _{x}\hat{T}_{1}^{q}(x|f_{0})=0,\;\partial _{x}T_{1}^{q}(x|f_{0})=%
\mathrm{loc},\;f_{0}\in D\;\Longrightarrow
\end{equation*}%
\begin{eqnarray*}
&&D_{2(2)}(x|f_{0},g_{1})=T_{1}^{0}(x|f_{0})g_{1}(x)+[t_{1|1}^{1}(f_{0})+t_{1|2}^{1}(x|f_{0})]g_{1}^{\prime }(x)+
\\
&&\,+\sum_{q=0}^{Q}T_{2}^{q}(x|g_{1}\partial ^{q}f_{0})+\mathrm{loc}, \\
&&t_{1|1}^{1}(f_{0})=\int
dyt_{1|1}^{1}(y)f_{0}(y),\;t_{1|2}^{1}(x|f_{0})=\int dy\theta
(x-y)t_{1|2}^{1}(y)f_{0}(y),\mathrm{\;}f_{0}\in D.
\end{eqnarray*}

Consider the domain $[x\cup \mathrm{supp}h_{1}]\cap \lbrack \mathrm{supp}%
f_{0}\cup \mathrm{supp}g_{1}]=\varnothing $. In this domain we find%
\begin{eqnarray*}
&&\{\partial _{x}\hat{D}_{2(2)}(x|f_{0},g_{1})\}h_{1}(x)+\hat{D}%
_{2(2)}(x|f_{0}^{\prime }g_{1},h_{1})= \\
&&\,=[c_{4|1}c_{5|1}\overline{f_{0}g_{1}^{\prime \prime }}+c_{4|1}c_{6|1}%
\overline{\theta _{x}f_{0}g_{1}^{\prime \prime }}-c_{4|1}\omega
_{1}(f_{0}^{\prime }g_{1})]h_{1}^{\prime }(x).
\end{eqnarray*}

Let $f_{0}\in D$. Then we find%
\begin{eqnarray}
&&\sum_{q=0}^{Q}\{\partial _{x}\hat{T}_{2}^{q}(x|g_{1}\partial
^{q}f_{0})\}h_{1}(x)+\hat{T}_{1}^{0}(x|f_{0}^{\prime
}g_{1})h_{1}(x)+[t_{1|1}^{1}(f_{0}^{\prime }g_{1})+  \notag \\
&&\,+\hat{t}_{1|2}^{1}(x|f_{0}^{\prime }g_{1})]h_{1}^{\prime
}(x)=[c_{4|1}c_{5|1}\overline{f_{0}g_{1}^{\prime \prime }}+c_{4|1}c_{6|1}%
\overline{\theta _{x}f_{0}g_{1}^{\prime \prime }}-  \notag \\
&&\,-c_{4|1}\omega _{1}(f_{0}^{\prime }g_{1})]h_{1}^{\prime
}(x)\;\Longrightarrow  \label{3.2.3}
\end{eqnarray}

It follows from eq. (\ref{3.2.3})%
\begin{eqnarray}
&&\,\partial _{x}\hat{T}_{2}^{q}(x|g_{1})=0,\;\forall q,\;q\neq 1,  \notag \\
&&\hat{T}_{1}^{0}(x|g_{1})=-\partial _{x}\hat{T}_{2}^{1}(x|g_{1})\;%
\Longrightarrow \;T_{1}^{0}(x|g_{1})=-\partial _{x}T_{2}^{1}(x|g_{1})+%
\mathrm{loc},  \notag \\
&&t_{1|1}^{1}(f_{0}^{\prime }g_{1})+\hat{t}_{1|2}^{1}(x|f_{0}^{\prime
}g_{1})=c_{4|1}c_{5|1}\overline{f_{0}g_{1}^{\prime \prime }}+c_{4|1}c_{6|1}%
\overline{\theta _{x}f_{0}g_{1}^{\prime \prime }}-c_{4|1}\omega
_{1}(f_{0}^{\prime }g_{1}).  \label{3.2.4}
\end{eqnarray}

i) Using $\mathrm{supp}g_{1}>x$, we obtain
\begin{equation*}
t_{1|1}^{1}(f_{0}^{\prime }g_{1})=c_{4|1}c_{5|1}\overline{f_{0}g_{1}^{\prime
\prime }}-c_{4|1}\omega _{1}(f_{0}^{\prime }g_{1})=0
\end{equation*}%
and then%
\begin{equation*}
\hat{t}_{1|2}^{1}(x|f_{0}^{\prime }g_{1})=c_{4|1}c_{6|1}\overline{\theta
_{x}f_{0}g_{1}^{\prime \prime }}.
\end{equation*}

ii) Using $f_{0}(y)=y$ for $y\in \mathrm{supp}g_{1}$, we find
\begin{eqnarray*}
&&\hat{t}_{1|2}^{1}(x|g_{1})=0\;\Longrightarrow \;t_{1|2}^{1}(x|g_{1})=%
\mathrm{loc}\;\Longrightarrow \;t_{1|2}^{1}(x|g_{1})=0, \\
&&c_{4|1}c_{5|1}=c_{4|1}c_{6|1}=0,\;t_{1|1}^{1}(g_{1})=-c_{4|1}\omega
_{1}(g_{1}).
\end{eqnarray*}

Note that%
\begin{eqnarray*}
&&T_{1}^{0}(x|f_{0})g_{1}(x)+T_{2}^{1}(x|g_{1}f_{0}^{\prime })=-\partial
_{x}T_{2}^{1}(x|f_{0})g_{1}(x)+T_{2}^{1}(x|g_{1}f_{0}^{\prime })+ \\
&&\,+\mathrm{loc}=\tilde{M}_{d_{1}(2)}(x|f_{0},g_{1})+\mathrm{loc},\;\tilde{M%
}(x|f_{0})=-\int du\eta T_{2}^{1}(x|y)f(u), \\
&&\tilde{M}_{d_{1}}(x|\hat{f}_{0},\hat{g}_{0})=\tilde{M}_{d_{1}}(x|\hat{f}%
_{1},\hat{g}_{1})=0.
\end{eqnarray*}%
Rewrite the equation for $C_{2}^{(0)}$ just obtained%
\begin{equation}
J(m_{2|0},\tilde{C}_{2}^{(0)})+c_{4|1}J_{1}(m_{2|4},m_{2|7})=0.
\label{3.2.5}
\end{equation}

Consider eq. (\ref{3.2.5}) in more details. Let the functions $f_{0}$, $%
g_{0} $, $h_{0}$ have compact supports. Then eq. (\ref{3.2.5}) reduces to
the form%
\begin{eqnarray}
&&J(m_{2|0},P_{2}^{(0)})=0,\;P_{2}^{(0)}=\tilde{C}%
_{2}^{(0)}+c_{4|1}C_{1|47|1},\;C_{1|47|k}=\left. M_{47}\right| _{\omega
\rightarrow \omega _{k}},  \label{3.2.6} \\
&&C_{1|47}(z|f,g)=\omega (f_{0})\Delta g(z)-\Delta f(z)\omega (g_{0}).
\notag
\end{eqnarray}

The general solution of eq. (\ref{3.2.6}) is ($\varepsilon _{P_{2}^{(0)}}=1$)%
\begin{eqnarray}
&&\tilde{C}_{2}^{(0)}=c_{a|2}m_{2|a}+d_{1}^{\mathrm{ad}%
}M_{1}-c_{4|1}C_{1|47|1},\;a=4,5,6,  \label{3.2.7} \\
&&M_{1}(z|f)=M_{(0,1)}(x|f_{0})+\xi M_{(1,0)}(x|f_{1})  \notag \\
&&d_{1}^{\mathrm{ad}%
}M_{1}(z|f,g)=[M_{1}(z|g),f(z)]-[M_{1}(z|f),g(z)]-M_{1}(z|[f,g]),  \notag
\end{eqnarray}

L.h.s. of eq. (\ref{3.2.7}) can be extended to the case $f,g,h\in DE$. We
will study the possibility of the similar extension of r.h.s. of eq. (\ref%
{3.2.7}). Rewrite eq. (\ref{3.2.7}) in the form%
\begin{eqnarray}
&&\!\!\!\!\!\!\!\!\!\!\!\!\!\!\!
\tilde{C}_{2}^{(0)}(z|f,g)-c_{a|2}m_{2|a}(z|f,g)+M_{1}(z|[f,g])+c_{4|1}%
\omega ([f,g])\equiv \tilde{P}_{2}^{(0)}(z|f,g)=  \notag \\
&&\!\!\!\!\!\!\!\!\!\!\!\!\!\!\!
\,=[M_{1}(z|g),f(z)]-[M_{1}(z|f),g(z)]-c_{4|1}[\omega _{1}(f_{0})\Delta
g(z)+(-1)^{\varepsilon (f)}\Delta f(z)\omega _{1}(g_{0})].  \label{3.2.8}
\end{eqnarray}%
L.h.s. of eq. (\ref{3.2.8}) can be as before extended to the case $f,g,h\in
DE$. Consider eq. (\ref{3.2.8}) for the functions $\hat{f}_{0}$, $\hat{g}%
_{1} $. We have%
\begin{equation*}
\tilde{P}_{2}^{(0)}(z|\hat{f}_{0},\hat{g}_{1})-M_{(1,0)}(x|g_{1})f_{0}^{%
\prime }(x)=M_{(0,1)}^{\prime }(x|f_{0})g_{1}(x)-c_{4|1}\omega
_{1}(f_{0})g_{1}^{\prime }(x).
\end{equation*}

Let $g_{1}(x)$ is fixed and is equal to $1$ in a neighborhood $U$ of a point
$x_{0}$. For $x\in U$, we have%
\begin{equation*}
M_{(0,1)}^{\prime }(x|f_{0})=\tilde{P}_{2}^{(0)}(z|\hat{f}_{0},\hat{g}%
_{1})-M_{(1,0)}(x|g_{1})f_{0}^{\prime }(x)
\end{equation*}%
and we can consider this formula as an extension of $M_{(0,1)}^{\prime
}(x|f_{0})\equiv \tilde{M}_{(0,1)}(x|f_{0})$ to $f_{0}\in E$. Let now
another $g_{1}(x)$ is fixed and is equal to $x$ in a neighborhood $V$ of a
point $x_{0}$. For $x\in V$, we have%
\begin{equation*}
c_{4|1}\omega _{1}(f_{0})=\tilde{M}%
_{(0,1)}(x|f_{0})x+M_{(1,0)}(x|g_{1})f_{0}^{\prime }(x)-\tilde{P}%
_{2}^{(0)}(z|\hat{f}_{0},\hat{g}_{1}).
\end{equation*}%
This relation means that the form $c_{4|1}\omega _{1}(f_{0})$ can be extended
to $f_{0}\in E$, what is possible only if%
\begin{equation*}
c_{4|1}\omega _{1}(f_{0})=0.
\end{equation*}

Thus we obtained%
\begin{equation*}
c_{4|1}c_{5|1}=c_{4|1}c_{6|1}=c_{4|1}\omega _{1}=0.
\end{equation*}

\subappen{A. $c_{4|1}\neq 0$}

First we consider the case $c_{4|1}\neq 0$. Then $c_{5|1}=c_{6|1}=\omega
_{1}=0$.

We have%
\begin{equation*}
C_{1}^{(0)}(\,z|f,g)=c_{4|1}m_{2|4}(\,z|f,g).
\end{equation*}

The function $C^{(0)}(z|f,g)$ can be represented in the form%
\begin{equation*}
C^{(0)}(z|f,g)=N_{c_{4|1}}(z|f,g)+\hbar ^{2}C_{2}^{(0)}(z|f,g)+O(\hbar
^{3}),\;\varepsilon _{C_{2}^{(0)}}=1.
\end{equation*}

\subsubappen{ Second order}

It follows from eq. (\ref{1.1a}) that%
\begin{eqnarray*}
J(m_{2|0},C_{2}^{(0)}) &=&0\;\Longrightarrow . \\
C_{2}^{(0)}(\,z|f,g) &=&c_{a|2}m_{2|a}(\,z|f,g)+m_{2|7|2}(\,z|f,g),\;a=4,5,6,
\end{eqnarray*}%
The function $C^{(0)}(z|f,g)$ can be represented in the form%
\begin{eqnarray*}
&&C^{(0)}(z|f,g)=N_{c_{4|[2]}}(z|f,g)+\hbar ^{2}c_{a|2}m_{2|a}(z|f,g)+\hbar
^{2}m_{2|7|2}(\,z|f,g)+ \\
&&\,+\hbar ^{3}C_{3}^{(0)}(z|f,g)+O(\hbar ^{4}),\;a=5,6,\;\varepsilon
_{C_{3}^{(0)}}=1,\;c_{b|[k]}=\sum_{l=1}^{k}\hbar ^{l-1}c_{b|l},\;b=4,5,6.
\end{eqnarray*}

\subsubappen{ Third order}

It follows from eq. (\ref{1.1a}) that%
\begin{eqnarray}
&&J(m_{2|0},\tilde{C}%
_{2}^{(0)})+c_{4|1}c_{a|2}J(m_{2|4},m_{2|a})+c_{4|1}J_{1}(m_{2|4},m_{2|7|2})=0,
\label{3.3.2.1} \\
&&a=5,6,\;\tilde{C}_{2}^{(0)}=C_{2}^{(0)}+c_{4|1}C_{47|2}  \notag
\end{eqnarray}

It follows from eq. (\ref{3.3.2.1}), similarly to solving eq. (\ref{3.2.0}),
that%
\begin{equation*}
c_{5|2}=c_{6|2}=\omega _{2}=0,
\end{equation*}

and so on.

Thus, we find: if $c_{4|1}\neq 0$, then the solution of eq. (\ref{1.1a}) has
the form%
\begin{equation*}
C^{(0)}(z|f,g)=\mathcal{N}_{c_{4|[\infty ]}}(z|f,g).
\end{equation*}

\subappen{B. $c_{4|1}=0$}

Now we consider the case when $c_{4|1}=0$ and at least one of $c_{5|1}$, $%
c_{6|1}$, $\omega _{1}$ does not equal to zero.

We have%
\begin{equation*}
C_{1}^{(0)}(\,z|f,g)=c_{a|1}m_{2|a}(\,z|f,g)+m_{2|7|1}(\,z|f,g),\;a=5,6.
\end{equation*}

The function $C^{(0)}(z|f,g)$ can be represented in the form%
\begin{equation*}
C^{(0)}=m_{2|0}+\hbar c_{a|1}m_{2|a}+\hbar m_{2|7|1}+\hbar
^{2}C_{2}^{(0)}+O(\hbar ^{3}),\;a=5,6,\;\varepsilon _{C_{2}^{(0)}}=1.
\end{equation*}

\subsubappen{ Second order}

It follows from eq. (\ref{1.1a}) that

\begin{eqnarray*}
&&J(m_{2|0},\tilde{C}_{2}^{(0)})=0,\;\tilde{C}%
_{2}^{(0)}=C_{2}^{(0)}+c_{6|1}C_{67|1}\;\Longrightarrow \\
&&C_{2}^{(0)}=c_{a|2}m_{2|a}+m_{2|7|2}-c_{6|1}C_{67|1},\;a=4,5,6.
\end{eqnarray*}

The function $C^{(0)}(z|f,g)$ can be represented in the form%
\begin{eqnarray*}
&&C^{(0)}(z|f,g)=\mathcal{N}_{c_{4|[2]}}(z|f,g)+\hbar
c_{a|[2]}m_{2|a}(z|f,g)+\hbar m_{2|7|[2]}(\,z|f,g)-\hbar ^{2}c_{6|1}C_{67|1}
\\
&&\,+\hbar ^{3}C_{3}^{(0)}(z|f,g)+O(\hbar ^{4}),\;a=5,6,\;\varepsilon
_{C_{3}^{(0)}}=1,\;m_{2|7|[k]}=\sum_{l=1}^{k}\hbar ^{l-1}m_{2|7|[l]}
\end{eqnarray*}%
\

\subsubappen{Third order}

It follows from eq. (\ref{1.1a}) (with equalities%
\begin{eqnarray}
&&J(m_{2|a},m_{2|b})=J(m_{2|5},m_{2|7})=J(m_{2|7})=J(m_{2|c},C_{67})=0,
\notag \\
&&a,b=5,6,c=5,6,7,  \label{3.4.2.1}
\end{eqnarray}%
taken into account) that%
\begin{eqnarray}
&&J(m_{2|0},\tilde{C}%
_{3}^{(0)})+c_{4|2}c_{b|1}J(m_{2|4},m_{2|b})+c_{4|2}J_{1}(m_{2|4},m_{2|7|1})=0,\;b=5,6,
\label{3.4.2.2} \\
&&\tilde{C}_{3}^{(0)}=C_{3}^{(0)}+c_{6|i}C_{67|i}+c_{4|2}C_{47|1},\;i,j=1,2,%
\;i+j=3.  \notag
\end{eqnarray}

It follows from eq. (\ref{3.4.2.2}), similarly to solving eq. (\ref{3.2.0}),
that%
\begin{eqnarray*}
c_{4|2}c_{5|1} &=&c_{4|2}c_{6|1}=c_{4|2}\omega _{1}=0\;\Longrightarrow
\;c_{4|2}=0\;\Longrightarrow \\
C_{3}^{(0)} &=&c_{a|3}m_{2|a}+m_{2|7|3}-c_{6|i}C_{67|i},\;a=4,5,6.
\end{eqnarray*}

The function $C^{(0)}(z|f,g)$ can be represented in the form%
\begin{eqnarray*}
&&C^{(0)}(z|f,g)=\mathcal{N}_{c_{4|[3]}}(z|f,g)+\hbar
c_{a|[3]}m_{2|a}(z|f,g)+\hbar m_{2|7|[3]}(\,z|f,g)- \\
&&\,-\hbar ^{2}c_{6|[2]}C_{67|[2]}+\hbar ^{4}C_{4}^{(0)}(z|f,g)+O(\hbar
^{5}),\;a=5,6,\;\varepsilon _{C_{3}^{(0)}}=1,
\end{eqnarray*}%
where%
\begin{equation*}
C_{67|[k]}=\sum_{l=1}^{k}\hbar ^{l-1}C_{67|l},\;[\sum_{l=0}^{K}\hbar
^{l}A_{l}]_{k}=\sum_{l=0}^{k}\hbar ^{l}A_{l}.
\end{equation*}

\subsubappen{Fourth order}

It follows from eq. (\ref{1.1a}) (with equalities (\ref{3.4.2.1}) and
\begin{equation}
J(C_{67})=0  \label{3.4.3.0}
\end{equation}
taken into account) that%
\begin{eqnarray}
&&J(m_{2|0},\tilde{C}%
_{4}^{(0)})+c_{4|3}c_{a|1}J(m_{2|4},m_{2|a})+c_{4|3}J_{1}(m_{2|4},m_{2|7|1})=0,\;a=5,6,
\label{3.4.3.1} \\
&&\tilde{C}_{4}^{(0)}=C_{4}^{(0)}+c_{6|3}C_{67|1}+c_{4|3}C_{47|1}.  \notag
\end{eqnarray}

It follows from eq. (\ref{3.4.3.1}), similarly to solving eq. (\ref{3.2.0}),
that%
\begin{eqnarray*}
c_{4|3}c_{5|1} &=&c_{4|3}c_{6|1}=c_{4|3}\omega _{1}=0\;\Longrightarrow
\;c_{4|3}=0\;\Longrightarrow \\
C_{4}^{(0)} &=&c_{a|4}m_{2|a}+m_{2|7|4}-c_{6|3}C_{67|1},\;a=4,5,6,
\end{eqnarray*}%
and so on.

Thus, we find: in the case when at least one of $c_{5|1}$, $c_{6|1}$, $%
\omega _{1}$ does not equal to zero, the solution of eq. (\ref{1.1a}) has
the form%
\begin{equation*}
C^{(0)}(z|f,g)=m_{2\backslash 0}(z|f,g)+\hbar c_{a|[\infty
]}m_{2|a}(z|f,g)+\hbar m_{2|7|[\infty ]}(\,z|f,g)-\hbar ^{2}c_{6|[\infty
]}C_{67|[\infty ]}.
\end{equation*}

\subappen{C. $c_{4|1}=c_{5|1}=c_{6|1}=\protect\omega _{1}=0$}

Representing $C^{(0)}$ in the form%
\begin{equation*}
C^{(0)}=m_{2\backslash 0}+\hbar ^{2}C_{2}^{(0)}+O(\hbar ^{3}),
\end{equation*}%
we obtain%
\begin{equation*}
J(C_{2}^{(0)};z|f,g,h)=0.
\end{equation*}%
Let in an $\hbar ^{k_{0}}$-order we have%
\begin{eqnarray*}
C^{(0)} &=&m_{2\backslash 0}+\hbar ^{^{k_{0}}}C_{^{k_{0}}}^{(0)}+O(\hbar
^{^{k_{0}}+1}), \\
J(m_{2|0},C_{^{k_{0}}}^{(0)}) &=&0\;\Longrightarrow
C_{^{k_{0}}}^{(0)}=c_{a|k_{0}}m_{2|a}(\,z|f,g)+m_{2|7|k_{0}}(\,z|f,g),
\end{eqnarray*}%
and at least one of the quantities $c_{4|k_{0}}$, $c_{5|k_{0}}$, $%
c_{6|k_{0}} $, $\omega _{k_{0}}$ does not equal to zero. For the folowing $%
C_{^{k_{0}+l}}^{(0)}$, $l=1,...,k_{0}-1$, we obtain as well%
\begin{equation*}
J(m_{2|0},C_{^{k_{0}+l}}^{(0)})=0\;\Longrightarrow
C_{^{k_{0}+l}}^{(0)}=c_{a|k_{0}+l}m_{2|a}(\,z|f,g)+m_{2|7|k_{0}+l}(\,z|f,g),
\end{equation*}%
such that we can represent $C^{(0)}$ in the form%
\begin{equation*}
C^{(0)}=m_{2\backslash
0}+c_{a|[2k_{0}-1]}m_{2|a}(\,z|f,g)+m_{2|7|[2k_{0}-1]}(\,z|f,g)+\hbar
^{^{2k_{0}}}C_{^{2k_{0}}}^{(0)}+O(\hbar ^{^{2k_{0}}+1}).
\end{equation*}%
It follows from eq. (\ref{1.1a}) the equation for $C_{^{2k_{0}}}^{(0)}$,%
\begin{eqnarray*}
&&J(m_{2|0},\tilde{C}%
_{2k_{0}}^{(0)})+c_{4|k_{0}}c_{a|k_{0}}J(m_{2|4},m_{2|a})+c_{4|k_{0}}J_{1}(m_{2|4},m_{2|7|k_{0}})=0,\;a=5,6,
\\
&&\tilde{C}%
_{4}^{(0)}=C_{4}^{(0)}+c_{6|k_{0}}C_{67|k_{0}}+c_{4|k_{0}}C_{47|k_{0}}\;%
\Longrightarrow \\
&&c_{4|k_{0}}c_{5|k_{0}}=c_{4|k_{0}}c_{6|k_{0}}=c_{4|k_{0}}\omega _{k_{0}}=0.
\end{eqnarray*}

Doing similarly to previous subsections, we obtain. In general case. there
are three types of the solutions of eq. (\ref{1.1a})

I.%
\begin{equation*}
C^{(0)}(z|f,g)=C^{(0)I}(z|f,g)=N_{c_{4|[\infty ]}}(z|f,g).
\end{equation*}

II.%
\begin{eqnarray}
&&C^{(0)}(z|f,g)=C^{(0)II}(z|f,g)=m_{2| 0}(z|f,g)+\hbar
c_{b|[\infty ]}m_{2|b}(z|f,g)+  \notag \\
&&+\hbar m_{2|7|[\infty ]}(\,z|f,g)-\hbar ^{2}c_{6|[\infty ]}C_{67|[\infty
]},\;b=5,6,  \label{3.5.1}
\end{eqnarray}%
where $c_{4|[\infty ]}$, $c_{a|[\infty ]}$, $m_{2|7|[\infty ]},$ and $%
C_{67|[\infty ]}$ are formal series in $\hbar $.

III.We single out especially the limit case%
\begin{equation*}
C^{(0)}(z|f,g)=C^{(0)III}(z|f,g)=m_{2\backslash 0}(z|f,g).
\end{equation*}

\appen{Solution of eq. (\protect\ref{1.1b})}

Let $C^{(1)}(z|f,g)=C_{1}^{(1)}(z|f,g)+O(\hbar )$, $C_{1}^{(1)}(z|f,g)=%
\theta _{\alpha |1}A_{\alpha |1}^{(1)}(z|f,g)$, $\varepsilon _{A_{\alpha
|1}^{(1)}}=0$, where $\theta _{\alpha |1}$ is a finite set of odd generators
of the Grassmann algebra of parameters. Sometimes, it is convenient to use
for $C_{1}^{(1)}(z|f,g)$ a representation%
\begin{eqnarray*}
&&C_{1}^{(1)}(z|f,g)=\int dvdu(-1)^{\varepsilon
(f)}[-a_{(000)|1}^{(1)}(x|y_{u},y_{v})+\xi \eta
_{u}a_{(110)|1}^{(1)}(x|y_{u},y_{v})- \\
&&-\xi \eta _{v}a_{(110)|1}^{(1)}(x|y_{v},y_{u})+\eta _{u}\eta
_{v}a_{(011)|1}^{(1)}(x|y_{u},y_{v})]f(u)g(v), \\
&&a_{(000)|1}^{(1)}(x|y_{v},y_{u})=-a_{(000)|1}^{(1)}(x|y_{u},y_{v}),%
\;a_{(011)|1}^{(1)}(x|y_{v},y_{u})=a_{(011)|1}^{(1)}(x|y_{u},y_{v}), \\
&&C_{1}^{(1)}(z|\hat{f}_{0},\hat{g}_{0})=a_{(011)|1}^{(1)}(x|f_{0},g_{0}),%
\;C_{1}^{(1)}(z|\hat{f}_{0},\hat{g}_{1})=\xi
a_{(110)|1}^{(1)}(x|f_{0},g_{1}), \\
&&C_{1}^{(1)}(z|\hat{f}_{1},\hat{g}_{0})=-\xi
a_{(110)|1}^{(1)}(x|g_{0},f_{1}),\;C_{1}^{(1)}(z|\hat{f}_{1},\hat{g}%
_{1})=a_{(000)|1}^{(1)}(x|f_{1},g_{1}), \\
&&a_{(?)|1}^{(1)}(x|f_{1},g_{1})=O(\theta )
\end{eqnarray*}

\subappen{First order}

It follows from eq. (\ref{1.1b}) that%
\begin{equation}
J(m_{2|0},C_{1}^{(1)};\,z|f,g,h)=0.  \label{4.1.1}
\end{equation}%
The general solution of eq \ (\ref{4.1.1}) for $f,g\in D_{1}$ was found in
Section \ref{cohom-res}:%
\begin{eqnarray*}
&&C_{1}^{(1)}(z|f,g)_{D}=m_{2}^{(1)}(z|f,g)+m_{2}^{(2)}(z|f,g),\;f,g\in
D_{1}, \\
&&m_{2}^{(1)}(z|f,g)=c_{a|1}m_{2|a}(z|f,g)-M_{1|1}(z|[f,g]),\;a=1,2,3, \\
&&m_{2}^{(2)}(z|f,g)=[M_{1|1}(z|f),g(z)]-(-1)^{(\varepsilon
(f)+1)(\varepsilon (g)+1)}[M_{1|1}(z|g),f(z)]\}, \\
&&M_{1|1}(z|\hat{f}_{0})=\xi \nu _{\alpha (11)|1}(x|f_{0}),\;M_{1|1}(z|\hat{f%
}_{1})=\nu _{(00)|1}(x|f_{1}),
\end{eqnarray*}%
$C_{1}^{(1)}(z|f,g)_{D}$ means a restriction $C_{1}^{(1)}(z|f,g)$ to the
space $f,g\in D_{1}$. The form $m_{2}^{(1)}(z|f,g)$ are well defined for $%
f,g\in ED$. Represent $C_{1}^{(1)}(z|f,g)$ in the form%
\begin{eqnarray}
&&C_{1}^{(1)}(z|f,g)=m_{2}^{(1)}(z|f,g)+M_{2|1}^{(2)}(z|f,g),\;f,g\in ED,
\notag \\
&&M_{2|1}^{(2)}(z|f,g)_{D}=m_{2}^{(2)}(z|f,g),\;f,g\in D_{1}.  \label{4.1.2}
\end{eqnarray}%
Represent $M_{2|1}^{(2)}(z|f,g)$ in the form%
\begin{eqnarray*}
&&M_{2|1}^{(2)}(z|f,g)=\int dvdu(-1)^{\varepsilon (f)}[-\mu
_{(000)|1}^{(2)}(x|y_{u},y_{v})+\xi \eta _{u}\mu _{(110)|1}(x|y_{u},y_{v})-
\\
&&-\xi \eta _{v}\mu _{(110)|1}(x|y_{v},y_{u})+\eta _{u}\eta _{v}\mu
_{(011)|1}(x|y_{u},y_{v})]f(u)g(v), \\
&&M_{2|1}^{(2)}(z|\hat{f}_{0},\hat{g}_{0})=\mu
_{(011)|1}^{(2)}(x|f_{0},g_{0}),\;M_{2|1}^{(2)}(z|\hat{f}_{1},\hat{g}%
_{1})=\mu _{(000)|1}^{(2)}(x|f_{1},g_{1}), \\
&&M_{2|1}^{(2)}(z|\hat{f}_{0},\hat{g}_{1})=\xi \mu
_{(110)|1}^{(2)}(x|f_{0},g_{1}),\;M_{2|1}^{(2)}(z|\hat{f}_{1},\hat{g}%
_{0})=-\xi \mu _{(110)|1}^{(2)}(x|g_{0},f_{1}).
\end{eqnarray*}

Consider eq. (\ref{4.1.2}) in terms of different components.

\subsubappen{$f=\hat{f}_{0}$, $g=\hat{g}_{0}$}

We have%
\begin{equation}
M_{2|1}^{(2)}(z|\hat{f}_{0},\hat{g}_{0})_{D}=\mu
_{(011)|1}^{(2)}(x|f_{0},g_{0})_{D}=\nu _{(11)|1}(x|f_{0})g_{0}^{\prime
}(x)_{D}+f_{0}^{\prime }(x)\nu _{i(11)|1}(x|g_{0})_{D}.  \label{4.1.1.1}
\end{equation}

Let $f_{0}(x)=x$ for $x\in U$ and fixed, $U$ is some vicinity. It follows
from eq. (\ref{4.1.1.1}) for $x\in U$:%
\begin{equation}
\nu _{(11)|1}(x|g_{0})_{D}=\mu _{(011)|1}^{(2)}(x|f_{0},g_{0})_{D}-\nu
_{(11)|1}(x|f_{0})g_{0}^{\prime }(x)_{D}.  \label{4.1.1.2}
\end{equation}%
Rel. (\ref{4.1.1.2}) means that the forms $\nu _{(11)|1}(x|g_{0})_{D}$ can
be extended to the case $g_{0}\in E$ such that we obtain that%
\begin{equation*}
M_{2|1}^{(2)}(z|\hat{f}_{0},\hat{g}_{0})=\mu
_{(011)|1}^{(2)}(x|f_{0},g_{0})=\nu _{(11)|1}(x|f_{0})g_{0}^{\prime
}(x)+f_{0}^{\prime }(x)\nu _{(11)|1}(x|g_{0}).
\end{equation*}

\paragraph{$f=\hat{f}_{1}$, $g=\hat{g}_{1}$}

We have%
\begin{eqnarray*}
&&M_{2|1}^{(2)}(z|\hat{f}_{1},\hat{g}_{1})=\mu
_{(000)|1}^{(2)}(x|f_{1},g_{1})=-\nu _{(00)|1}^{\prime }(x|f_{1})g_{1}(x)+ \\
&&\,+\nu _{(00)|1}^{\prime }(x|g_{1})f_{1}(x).
\end{eqnarray*}

\paragraph{$f=\hat{f}_{0}$, $g=\hat{g}_{1}$}

We have%
\begin{eqnarray*}
M_{2|1}^{(2)}(z|\hat{f}_{0},\hat{g}_{1})_{D} &=&\xi \mu
_{(110)|1}^{(2)}(x|f_{0},g_{1})_{D}=\xi \lbrack \nu
_{(11)|1}(x|f_{0})g_{1}^{\prime }(x)-\nu _{(11)|1}^{\prime
}(x|f_{0})g_{1}(x)]\;\Longrightarrow \\
M_{2|1}^{(2)}(z|\hat{f}_{0},\hat{g}_{1}) &=&\xi \mu
_{(110)|1}^{(2)}(x|f_{0},g_{1})=\xi \lbrack \nu
_{(11)|1}(x|f_{0})g_{1}^{\prime }(x)-\nu _{(11)|1}^{\prime
}(x|f_{0})g_{1}(x)].
\end{eqnarray*}

Thus we obtain that%
\begin{equation*}
C_{1}^{(1)}(z|f,g)=c_{a|1}m_{2|a}(z|f,g)+d_{1}^{\mathrm{as}%
}M_{1|1}(z|f,g),\;f,g\in ED_{1},\;a=1,2,3.
\end{equation*}

Note that the last summand in $C_{1}^{(1)}$ gives the contribution to the
function $C(z|f,g)$ that can be canceled by a similarity transformation with
$T(z|f)=f(z)-\hbar M_{1|1}(\,z|f)+O(\hbar ^{2}\theta _{1}^{2})$ which does
not change the function $C^{(0)}(z|f,g)$.

The function $C^{(1)}(z|f,g)$ can be represented in the form%
\begin{eqnarray*}
&&C^{(1)}(z|f,g)=\Theta _{a|1}m_{2|a}(z|f,g)+\hbar
^{1}C_{2}^{(1)}(z|f,g)+O(\hbar ^{2}),\;a=1,2,3, \\
&&C_{2}^{(1)}(z|f,g)=.\theta _{\alpha }A_{\alpha
|2}^{(1)}(z|f,g),\;\varepsilon _{A_{\alpha |2}^{(1)}}=0,\;\Theta
_{a|1}=\theta _{\alpha }c_{\alpha a|1}.
\end{eqnarray*}

\subappen{I. $C^{(0)}(z|f,g)=C^{(0)I}(z|f,g)$}

Let $c_{4|[\infty ]}=\hbar ^{k_{0}-1}c_{4|k_{0}}+O(\hbar ^{k_{0}})$, .$%
c_{4|k_{0}}\neq 0$ Then we have%
\begin{eqnarray*}
&&C^{(1)I}(z|f,g)=\Theta _{p|[k_{0}]}m_{2|p}(z|f,g)+\hbar
^{k_{0}}C_{k_{0}+1}^{(1)I}(z|f,g)+O(\hbar ^{k_{0}+1}), \\
&&C_{k_{0}+1}^{(1)I}=\theta _{\alpha }A_{\alpha k_{0}+1}^{(1)I},\;\Theta
_{p|[k_{0}]}=\sum_{k=1}^{k_{0}}\hbar ^{k-1}\Theta _{p|k},\;\Theta
_{p|k}=\theta _{\alpha }c_{\alpha p|k},\;p=1,2,3.
\end{eqnarray*}

\subsubappen{ ($k_{0}+1$)-th order}

It follows from eq. (\ref{1.1b}) that%
\begin{equation}
J(m_{2|0},C_{k_{0}+1}^{(1)I})+c_{4|k_{0}}\Theta
_{p|1}J(m_{2|4},m_{2|p})=0,\;p=1,2,3.  \label{4.2.1.1}
\end{equation}

Let $f=\hat{f}_{1}$, $g=\hat{g}_{1}$, $h=\hat{h}_{1}$

In this case, we have $J(m_{2|4},m_{2|3};z|\hat{f}_{1},\hat{g}_{1},\hat{h}%
_{1})\equiv 0$. It follows from eq.(\ref{4.2.1.1})%
\begin{eqnarray*}
&&J(m_{2|0},C_{k_{0}+1}^{(1)I};z|\hat{f}_{1},\hat{g}_{1},\hat{h}%
_{1})+c_{4|k_{0}}\Theta _{q|1}J(m_{2|4},m_{2|a};z|\hat{f}_{1},\hat{g}_{1},%
\hat{h}_{1})=0,\;q=1,2\;\Longrightarrow \\
&&a_{(000)|k_{0}+1}^{(1)I}(x|f_{1}g_{1}^{\prime }-f_{1}^{\prime
}g_{1},h_{1})-[\partial _{x}a_{(000)|k_{0}+1}^{(1)I}(x|f_{1},g_{1})]h_{1}(x)+%
\mathrm{cycle}(f_{1},g_{1},h_{1})= \\
&&\,=-.c_{4|k_{0}}\Theta _{q|1}J(m_{2|4},m_{2|q};z|\hat{f}_{1},\hat{g}_{1},%
\hat{h}_{1}).
\end{eqnarray*}

Let
\begin{equation*}
\lbrack x\cup \mathrm{supp}(h_{1})]\cap \lbrack \mathrm{supp}(f_{1})\cup
\mathrm{supp}(g_{1})]=x\cap \mathrm{supp}(h_{1})=\varnothing .
\end{equation*}%
We have
\begin{equation*}
\hat{a}_{(000)|k_{0}+1}^{(1)I}(x|f_{1}g_{1}^{\prime }-f_{1}^{\prime
}g_{1},h_{1})=0\;\Longrightarrow \;\hat{a}_{\alpha
(000)|k_{0}+1}^{(1)I}(x|f_{1},h_{1})=0\;\Longrightarrow
\end{equation*}%
\begin{eqnarray*}
&&a_{(000)|k_{0}+1}^{(1)I}(x|f_{1},h_{1})=a_{1}(x|f_{1},h_{1})+a_{2}(x|f_{1},h_{1}),
\\
&&a_{1}(x|f_{1},h_{1})=\sum_{q=0}^{Q}\{\partial
_{x}^{q}f_{1}(x)a_{1}^{q}(x|h_{1})-a_{1}^{q}(x|f_{1})\partial
_{x}^{q}h_{1}(x)\}, \\
&&a_{2}(x|f_{1},h_{1})=\sum_{l=0}^{L}a_{2}^{2l+1}(x|\{\partial
^{2l+1}f_{1}\}h_{1}-f_{1}\partial ^{2l+1}h_{1}).
\end{eqnarray*}

Let
\begin{equation*}
\lbrack x\cup \mathrm{supp}(h_{1})]\cap \lbrack \mathrm{supp}(f_{1})\cup
\mathrm{supp}(g_{1})]=\varnothing .
\end{equation*}%
We have
\begin{eqnarray*}
&&[\partial
_{x}a_{(000)|k_{0}+12}^{(1)I}(x|f_{1},g_{1})h_{1}(x)+a_{(000)|k_{0}+12}^{(1)I}(x|f_{1}^{\prime }g_{1}-f_{1}g_{1}^{\prime },h_{1})]=
\\
&&\,=c_{4|k_{0}}\left( \Theta _{1|1}\overline{f_{1}^{\prime \prime \prime
}g_{1}}+2\Theta _{2|1}\overline{\theta _{x}f_{1}^{\prime \prime \prime }g_{1}%
}\right) h_{1}^{\prime }(x).
\end{eqnarray*}%
or
\begin{eqnarray*}
&&\sum_{l=0}^{L}[\partial _{x}\hat{a}_{2}^{2l+1}(x|\{\partial
^{2l+1}f_{1}\}h_{1}-f_{1}\partial ^{2l+1}h_{1})]h_{1}(x)+\sum_{q=0}^{Q}\hat{a%
}_{1}^{q}(x|f_{1}g_{1}^{\prime }-f_{1}^{\prime }g_{1})\partial
_{x}^{q}h_{1}(x)= \\
&&\,=c_{4|k_{0}}\left( \Theta _{1|1}\overline{f_{1}^{\prime \prime \prime
}g_{1}}+2\Theta _{2|1}\overline{\theta _{x}f_{1}^{\prime \prime \prime }g_{1}%
}\right) h_{1}^{\prime }(x)\;\Longrightarrow
\end{eqnarray*}%
\begin{equation*}
\hat{a}_{1}^{1}(x|f_{1}g_{1}^{\prime }-f_{1}^{\prime
}g_{1})=c_{4|k_{0}}\left( \Theta _{1|1}\overline{f_{1}^{\prime \prime \prime
}g_{1}}+2\Theta _{2|1}\overline{\theta _{x}f_{1}^{\prime \prime \prime }g_{1}%
}\right) .
\end{equation*}

Let $f_{1}(y)=1$ for $y\in \mathrm{supp}g_{1}$ such that we obtain that%
\begin{equation*}
\hat{a}_{1}^{1}(x|g_{1}^{\prime })=0.
\end{equation*}%
Let $f_{1}(y)=y$ for $y\in \mathrm{supp}g_{1}$ such that we obtain that%
\begin{equation*}
\hat{a}_{1}^{1}(x|(yg_{1})^{\prime }-2g_{1})=0\;\Longrightarrow \;\hat{a}%
_{1}^{1}(x|g_{1})=0\;\Longrightarrow .
\end{equation*}%
\begin{eqnarray}
&&c_{4|k_{0}}\left( c_{1|1}\overline{f_{1}^{\prime \prime \prime }g_{1}}%
+2c_{2|1}\overline{\theta _{x}f_{1}^{\prime \prime \prime }g_{1}}\right)
=0\;\Longrightarrow  \notag \\
&&\,\Longrightarrow \;c_{4|k_{0}}\Theta _{1|1}=c_{4|k_{0}}\Theta
_{2|1}=0\;\Longrightarrow  \label{4.2.1.1a}
\end{eqnarray}%
\begin{equation*}
\Theta _{1|1}=\Theta _{2|1}=0
\end{equation*}

Eq. (\ref{4.2.1.1}) reduces to the form%
\begin{equation}
J(m_{2|0},C_{k_{0}+1}^{(1)I})=-c_{4|k_{0}}\Theta _{3|1}J(m_{2|4},m_{2|3})=%
\mathrm{loc}.  \label{4.2.1.2}
\end{equation}

In the case when the support of some function does not intersect with the
support of another function or some neighborhood of $x$, eq. (\ref{4.2.1.2})
reduces to%
\begin{equation}
J(m_{2|0},C_{k_{0}+1}^{(1)I})=0.  \label{4.2.1.3}
\end{equation}

For the case $f,g,h\in D$, eq. (\ref{4.2.1.3}) was solved in \cite{JMP} and \cite{apoi} and we have%
\begin{equation*}
C_{k_{0}+1}^{(1)I}=\Theta _{p|k_{0}+1}m_{2|p}+d_{1}^{\mathrm{as}%
}M_{1|k_{0}+1}+C_{|k_{0}+1\mathrm{loc}}^{(1)},\;p=1,2,3.
\end{equation*}

Eq. (\ref{4.2.1.2}) reduces to%
\begin{equation*}
J(m_{2|0},C_{k_{0}+1\mathrm{loc}}^{(1)I})=-c_{4|k_{0}1}\Theta
_{3|1}J(m_{2|4},m_{2|3}).
\end{equation*}%
Such equation was analized in \cite{apoi}. It follows from the results of \cite{apoi} that%
\begin{equation}
c_{4|k_{0}}\Theta _{3|1}=0\;\Longrightarrow \;\Theta _{3|1}=0,
\label{4.2.1.4}
\end{equation}%
and so on. We obtain finally $C^{(1)I}=0$ (more exactly, $C^{(1)I}$ can be
canceled by a similarity transformation).

\subappen{II. $C^{(0)}(z|f,g)=C^{(0)II}(z|f,g)$}

Let $c_{b|[\infty ]}=c_{b|1}+O(\hbar )$, $b=5,6$, $m_{2|7|[\infty
]}=m_{2|7|1}+O(\hbar )$, at least one of the quantities $c_{b|1}$, $%
m_{2|7|1} $ is not equal to zero.Then we have%
\begin{equation*}
C^{(1)II}(z|f,g)=\Theta _{p|1}m_{2|p}(z|f,g)+\hbar
C_{2}^{(1)II}(z|f,g)+O(\hbar ^{2}),\;p`=1,2,3
\end{equation*}

\subsubappen{Second order}

Using the relations%
\begin{eqnarray}
J(m_{2|1,}m_{2|d}) &=&J(m_{2|2,}m_{2|5})=0,\;d=5,6,7,  \label{4.3.1.0a} \\
J(m_{2|2},m_{2|6})
&=&J(m_{2|0},C_{26}),\;J(m_{2|2},m_{2|7})=J(m_{2|0},C_{27}),
\label{4.3.1.0b}
\end{eqnarray}%
we obtain from eq. (\ref{1.1b}) that%
\begin{eqnarray}
&&J(m_{2|0},\tilde{C}_{2}^{(1)II})+\Theta
_{3|1}[c_{b|1}J(m_{2|b},m_{2|3})+J(m_{2|7|1},m_{2|3})]=0,  \label{4.3.1.1} \\
&&\tilde{C}_{2}^{(1)II}=C_{2}^{(1)II}+\Theta
_{2|1}[c_{6|1}C_{26}+C_{27|1}],\;b=5,6,  \notag \\
&&\tilde{C}^{(1)II}(z|f,g)=\int dvdu(-1)^{\varepsilon
(f)}[-a_{(000)}^{(1)II}(x|y_{u},y_{v})+\xi \eta
_{u}a_{(110)}^{(1)II}(x|y_{u},y_{v})-  \notag \\
&&-\xi \eta _{v}a_{(110)}^{(1)II}(x|y_{v},y_{u})+\eta _{u}\eta
_{v}a_{(011)}^{(1)II}(x|y_{u},y_{v})]f(u)g(v).  \notag
\end{eqnarray}

Let $f=\hat{f}_{0}$, $g=\hat{g}_{0}$, $h=\hat{h}_{1}$. It follows from (\ref%
{4.3.1.1}) that%
\begin{eqnarray}
&&[\partial
_{x}a_{(011)2}^{(1)II}(x|f_{0},g_{0})]h_{1}(x)+[a_{(110)|2}^{(1)II}(x|f_{0},h_{1})g_{0}^{\prime }(x)-
\notag \\
&&\,-a_{(011)|2}^{(1)II}(x|f_{0}^{\prime }h_{1},g_{0})+(f_{0}\leftrightarrow
g_{0})]=\Theta _{3|1}[c_{b|1}J(m_{2|b},m_{2|3};z|\hat{f}_{0},\hat{g}_{0},%
\hat{h}_{1})+  \notag \\
&&\,+J(m_{2|7|1},m_{2|3};z|\hat{f}_{0},\hat{g}_{0},\hat{h}_{1})],\;b=5,6.
\label{4.3.1.2}
\end{eqnarray}

Let
\begin{equation*}
\lbrack x\cup \mathrm{supp}(h_{1})]\cap \lbrack \mathrm{supp}(f_{0})\cup
\mathrm{supp}(g_{0})]=\varnothing
\end{equation*}%
and $f_{0},g_{0}\in D$. It follows from (\ref{4.3.1.2}) that%
\begin{eqnarray*}
&&\widehat{a_{(011)|2}^{(1)II\prime }}(x|f_{0},g_{0})=0\;\Longrightarrow \\
&&a_{(011)|2}^{(1)II\prime }(x|f_{0},g_{0})=\partial
_{x}\sum_{q=0}^{Q}a_{q}(x|f_{0})\partial _{x}^{q}g_{0}(x)+b(x|f_{0})g_{0}(x)+
\\
&&\,+(f_{0}\leftrightarrow g_{0})\;\Longrightarrow
\end{eqnarray*}%
\begin{eqnarray*}
&&a_{(011)2}^{(1)II}(x|f_{0},g_{0})=\sum_{q=0}^{Q}a_{q}(x|f_{0})\partial
_{x}^{q}g_{0}(x)+ \\
&&\,+\int dy\theta
(x-y)b(y|f_{0})g_{0}(y)+d(f_{0},g_{0})+(f_{0}\leftrightarrow
g_{0}),\;f_{0},g_{0}\in D_{1}.
\end{eqnarray*}

Let
\begin{equation*}
\lbrack x\cup \mathrm{supp}(g_{0})]\cap \lbrack \mathrm{supp}(f_{0})\cup
\mathrm{supp}(h_{1})]=x\cap \mathrm{supp}(g_{0})=\varnothing ,
\end{equation*}%
$f_{0},g_{0}\in D$ and $f_{0}(y)=y$ for $y\in \mathrm{supp}h_{1}$. It
follows from (\ref{4.3.1.2}) that%
\begin{eqnarray*}
&&\widehat{\tilde{c}_{(011)2}^{(1)II}}(x|h_{1},g_{0})=0\;\Longrightarrow
\int dy\theta (x-y)b(y|h_{1})g_{0}(y)+\hat{d}(h_{1},g_{0})=0\;\Longrightarrow
\\
&&\hat{d}(h_{1},g_{0})=0\;\Longrightarrow b(y|h_{1})=0.
\end{eqnarray*}

Let
\begin{equation*}
\lbrack x\cup \mathrm{supp}(g_{0})]\cap \lbrack \mathrm{supp}(f_{0})\cup
\mathrm{supp}(h_{1})]=\varnothing
\end{equation*}%
and $f_{0},g_{0}\in D$. It follows from eq. (\ref{4.3.1.2}) that%
\begin{eqnarray*}
&&\widehat{a_{(110)|2}^{(1)II}}(x|f_{0},h_{1})g_{0}^{\prime
}(x)-\sum_{q=0}^{Q}\hat{a}_{q}(x|f_{0}^{\prime }h_{1})\partial
_{x}^{q}g_{0}(x)= \\
&&\,=\Theta _{3|1}[c_{5|1}\overline{f_{0}^{\prime }h_{1}^{\prime }}+c_{6|1}%
\overline{\theta _{x}f_{0}^{\prime }h_{1}^{\prime }}+\omega
_{1}(f_{0}^{\prime }h_{1})]g_{0}(x)\;\Longrightarrow .
\end{eqnarray*}%
\begin{equation*}
\hat{a}_{0}(x|f_{0}^{\prime }h_{1})=\Theta _{1}^{3}[c_{5|1}\overline{%
f_{0}^{\prime }h_{1}^{\prime }}+c_{6|1}\overline{\theta _{x}f_{0}^{\prime
}h_{1}^{\prime }}+\omega _{1}(f_{0}^{\prime }h_{1})]\;\Longrightarrow
\end{equation*}%
\begin{equation*}
\hat{a}_{0}(x|f_{0}^{\prime }h_{1})=\Theta _{3|1}[c_{5|1}\overline{%
f_{0}^{\prime }h_{1}^{\prime }}+\omega _{1}(f_{0}^{\prime }h_{1})],\;\Theta
_{3|1}c_{6|1}\overline{\theta _{x}f_{0}^{\prime }h_{1}^{\prime }}%
=0\;\Longrightarrow
\end{equation*}%
\begin{equation*}
\Theta _{3|1}c_{5|1}=\Theta _{3|1}c_{6|1}=0.
\end{equation*}

Eq. (\ref{4.3.1.1}) takes the form%
\begin{equation*}
J(m_{2|0},\tilde{C}_{2}^{(1)II})+\Theta _{3|1}J(m_{2|7|1},m_{2|3})=0.
\end{equation*}

Let $f_{0},g_{0},h_{0}\in D$. The we have%
\begin{eqnarray}
&&J(m_{2|7|1},m_{2|3};z|f,g,h)=J(m_{2|0},C_{37|1};z|f,g,h)\;\Longrightarrow
\notag \\
&&J(m_{2|0},\tilde{C}_{2}^{(1)II}+\Theta _{3|1}C_{37|1})=0\;\Longrightarrow
\notag \\
&&\tilde{C}_{2}^{(1)II}+\Theta _{3|1}C_{37|1}=m_{2,2}^{(1)}+m_{2,2}^{(2)},
\label{4.3.1.3} \\
&&m_{2,2}^{(1)}(z|f,g)=\Theta
_{a|2}m_{2|a}(z|f,g)-M_{1|2}(z|[f,g]),\;a=1,2,3,  \notag \\
&&m_{2,2}^{(2)}=\{[M_{1|2}(z|f),g(z)]-(-1)^{(\varepsilon (f)+1)(\varepsilon
(g)+1)}[M_{1|2}(z|g),f(z)]\},  \notag \\
&&M_{1|2}(z|\hat{f}_{0})=\xi \nu _{(11)|2}(x|f_{0}),\;M_{1|2}(z|\hat{f}%
_{1})=\nu _{(00)|2}(x|f_{1}).  \notag
\end{eqnarray}

Rewrite eq. (\ref{4.3.1.3}) in the form%
\begin{eqnarray*}
P(z|f,g) &=&m_{2,2}^{(2)}(z|f,g)-\Theta
_{3|1}C_{37|1}(z|f,g),\;f_{0},g_{0}\in D, \\
P(z|f,g) &=&\tilde{C}_{2}^{(1)II}(z|f,g)-m_{2,2}^{(1)}(z|f,g),
\end{eqnarray*}%
the form $P(z|f,g)$ can be extended to $f_{0},g_{0}\in E$.

Let $f=\hat{f}_{0}$, $g=\hat{g}_{0}$ and $y<x$ for $y\in \mathrm{supp}g_{0}$
and .$y\in \mathrm{supp}g_{0}$ We have%
\begin{equation}
\Theta _{3|1}\omega _{1}(f_{0}g_{0})=P(z|\hat{f}_{0},\hat{g}_{0}).
\label{4.3.1.4}
\end{equation}%
It follows from eq. (\ref{4.3.1.4}) that the form $\Theta _{3|1}\omega
_{1}(f_{0}g_{0})$ can be extended to $f_{0},g_{0}\in E$ what is possible for
the case $\Theta _{3|1}\omega _{1}=0$ only. Thus we obtain that%
\begin{equation*}
\Theta _{3|1}c_{5|1}=\Theta _{3|1}c_{6|1}=\Theta _{3|1}\omega
_{1}=0\;\Longrightarrow \Theta _{3|1}=0\;\Longrightarrow
\end{equation*}%
\begin{eqnarray*}
&&J(m_{2|0},\tilde{C}_{2}^{(1)II})=0\;\Longrightarrow \\
&&C_{2}^{(1)II}=\Theta _{p|2}m_{2|p}-\Theta
_{2|1}[c_{6|1}C_{26}+C_{27|1}],\;p=1,2,3.
\end{eqnarray*}

Represent $C^{(1)II}$ in the form%
\begin{eqnarray*}
&&C^{(1)II}(z|f,g)=\Theta _{p|[2]}m_{2|p}(z|f,g)-\hbar \Theta
_{2|1}[c_{6|1}C_{26}+C_{27|1}]+ \\
&&\,+\hbar ^{2}C_{3}^{(1)II}(z|f,g)+O(\hbar ^{3}),\;p=1,2,3,\;\Theta _{3|1}=0
\end{eqnarray*}

\subsubappen{ Third order}

It follows from eq. (\ref{1.1b}) that%
\begin{eqnarray*}
&&J(m_{2|0},\tilde{C}_{3}^{(1)II})+\Theta
_{3|2}[c_{b|1}J(m_{2|b},m_{2|3})+J(m_{2|7|1},m_{2|3})]=0, \\
&&\tilde{C}_{3}^{(1)II}=C_{3}^{(1)II}+\hbar ^{-1}\left( \left. \Theta
_{2|[2]}[c_{6|[2]}C_{26}+C_{27|[2]}]\right| _{1}\right) ,\;b=5,6,
\end{eqnarray*}%
where we used relations (\ref{4.3.1.0a}), (\ref{4.3.1.0b}), and%
\begin{eqnarray}
&&J(m_{2|1},C_{67})=J(m_{2|b},C_{26})=J(m_{2|b},C_{27})=J(m_{2|7},C_{27})=
\notag \\
&&\,=J(m_{2|7})=0,\;J(m_{2|7},C_{26})+J(m_{2|2},C_{67})=0,\;b=5,6,
\label{4.3.2.1}
\end{eqnarray}%
and $\left. A(\hbar )\right| _{n}$ means the term of the $\hbar ^{n}$-order
of the Taylor series of $A(\hbar )$.

Using the results of previous subsubsec., we find%
\begin{eqnarray*}
&&\Theta _{3|2}=0, \\
&&C_{3}^{(1)II}=\Theta _{a|3}m_{2|a}-\hbar ^{-1}\left( \left. \Theta
_{2|[2]}[c_{6|[2]}C_{26}+C_{27|[2]}]\right| _{1}\right) ,\;a=1,2,3.
\end{eqnarray*}

Represent $C^{(1)II}$ in the form%
\begin{eqnarray*}
&&C^{(1)II}(z|f,g)=\Theta _{p|[3]}m_{2|p}(z|f,g)-\hbar \left( \left. \left[
\Theta _{2|[2]}(c_{6[2]}C_{26}+C_{27|[2]})\right] \right| _{[1]}\right) + \\
&&\,+\hbar ^{3}C_{4}^{(1)II}(z|f,g)+O(\hbar ^{4}),\;p=1,2,3,\;\Theta
_{3|[2]}=0,
\end{eqnarray*}%
where $\left. A(\hbar )\right| _{[n]}$ means the part of the Taylor series
of $A(\hbar )$ up to terms of $\hbar ^{n}$-order.

\subsubappen{ Fourth order}

It follows from eq. (\ref{1.1b}) that%
\begin{eqnarray}
&&J(m_{2|0},\tilde{C}_{4}^{(1)II})+\Theta
_{3|3}[c_{b|1}J(m_{2|b},m_{2|3})+J(m_{2|7|1},m_{2|3})]=0,  \label{4.3.3.1} \\
&&\tilde{C}_{4}^{(1)II}=C_{4}^{(1)II}+\hbar ^{-2}\left( \left. \Theta
_{\lbrack 3]}^{2}[c_{6|[3]}C_{26}+C_{27|[3]}]\right| _{2}\right) +\Theta
_{2|1}c_{6|1}^{2}C_{2667|1},\;b=5,6,  \notag
\end{eqnarray}%
where we used relations (\ref{4.3.1.0a}), (\ref{4.3.1.0b}), (\ref{4.3.2.1}),
and%
\begin{equation}
J(C_{26},C_{67})=J(m_{2|0},C_{2667}),\;J(C_{27},C_{67})=0.  \label{4.3.3.2}
\end{equation}

It follows from (\ref{4.3.3.1}) that $C^{(1)II}$ can be represented in the
form%
\begin{eqnarray*}
&&C^{(1)II}(z|f,g)=\Theta _{p|[4]}m_{2|p}(z|f,g)-\hbar \left( \left. \left[
\Theta _{2|[3]}(c_{6[3]}C_{26}+C_{27|[3]})\right] \right| _{[2]}\right) - \\
&&\,-\hbar ^{3}\Theta _{2|1}c_{6|1}^{2}C_{2667|1}+\hbar
^{4}C_{5}^{(1)II}(z|f,g)+O(\hbar ^{5}),\;p=1,2,3,\;\Theta _{3|[3]}=0,
\end{eqnarray*}

\subsubappen{Fifth order}

It follows from eq. (\ref{1.1b}) that%
\begin{eqnarray}
&&J(m_{2|0},\tilde{C}_{5}^{(1)II})+\Theta
_{3|4}[c_{b|1}J(m_{2|b},m_{2|3})+J(m_{2|7|1},m_{2|3})]=0,  \label{4.3.4.1} \\
&&\tilde{C}_{5}^{(1)II}=C_{5}^{(1)II}+\hbar ^{-3}\left( \left. \Theta
_{2|[4]}[c_{6|[4]}C_{26}+C_{27|[4]}]\right| _{3}\right) +  \notag \\
&&\,+\hbar ^{-1}\left( \left. \Theta
_{2|[4]}c_{6|[4]}^{2}C_{2667|[4]}\right| _{1}\right) ,\;b=5,6,  \notag
\end{eqnarray}%
where we used relations (\ref{4.3.1.0a}), (\ref{4.3.1.0b}), (\ref{4.3.2.1}),
(\ref{4.3.3.2}) and%
\begin{equation*}
J(m_{2|d},C_{2667})=J(C_{67},C_{2667})=0,\;d=5,6,7.
\end{equation*}

It follows from (\ref{4.3.4.1}) that $C^{(1)II}$ can be represented in the
form%
\begin{eqnarray*}
&&C^{(1)II}(z|f,g)=\Theta _{p|[5]}m_{2|p}(z|f,g)-\hbar \left( \left. \left[
\Theta _{2|[4]}(c_{6[4]}C_{26}+C_{27|[4]})\right] \right| _{[3]}\right) - \\
&&\,-\hbar ^{3}\left( \left. \Theta 2|_{[4]}c_{6|[4]}^{2}C_{2667|[4]}\right|
_{[1]}\right) +\hbar ^{5}C_{6}^{(1)II}(z|f,g)+O(\hbar
^{6}),\;p=1,2,3,\;\Theta _{3|[4]}=0,
\end{eqnarray*}%
and so on.

Finally, we obtain: general solution of eq. (\ref{1.1b}) for the case $%
C^{(0)}=C^{(0)II}$ \ and at least one of the quantities $c_{b|1}$, $b=5,6$, $%
m_{2|7|1}$ is not equal to zero is%
\begin{eqnarray}
&&C^{(1)II}(z|f,g)=\Theta _{q|[\infty ]}m_{2|q}(z|f,g)-\hbar \Theta
_{2|[\infty ]}(c_{6[\infty ]}C_{26}+C_{27|[\infty ]})-  \notag \\
&&\,-\hbar ^{3}\Theta _{2|[\infty ]}c_{6|[\infty ]}^{2}C_{2667|[\infty
]},\;q=1,2.  \label{4.3.4.2}
\end{eqnarray}%
It can be analogously provn that general solution of eq. (\ref{1.1b}) has as
well form (\ref{4.3.4.2}) in the case $c_{a|[\infty ]}=\hbar
^{k_{0}-1}c_{a|k_{0}}+O(\hbar ^{k_{0}})$, $m_{2|7|[\infty ]}=\hbar
^{k_{0}-1}m_{2|7|k_{0}}+O(\hbar ^{k_{0}})$, $k_{0}>1$, and at least one of
the quantities $c_{a|k_{0}}$, $m_{2|7|k_{0}}$ is not equal to zero.

\subappen{III. $C^{(0)}(z|f,g)=C^{(0)III}(z|f,g)$}

In this case, eq. (\ref{1.1b}) takes the form%
\begin{equation*}
J(m_{2|0},C^{(1)III})=0,
\end{equation*}%
such that we obtain%
\begin{equation*}
C^{(1)III}(z|f,g)=\Theta _{a|[\infty ]}m_{2|a}(z|f,g),\;a=1,2,3.
\end{equation*}

\appen{Extension to exact solution}

\subappen{Case $I$}

In this case, the form $C=C^{(0)I}=\mathcal{N}_{c_{4|[\infty ]}}$ is an
exact solution.

\subappen{Case $II$}

In this case, the form $C=C^{(0)II}+\hbar C^{(1)II}$ is an exact solution.
When proving, we used the relations%
\begin{eqnarray*}
&&J(m_{2|b},m_{2|b^{\prime
}})=J(m_{2|5},m_{2|7})=0,\;J(m_{2|6},m_{2|7})=J(m_{2|0},C_{67}), \\
&&J(m_{2|b},m_{2|1})=J(m_{2|5},m_{2|2})=0,%
\;J(m_{2|6},m_{2|2})=J(m_{2|0},C_{26}), \\
&&J(m_{2|b},C_{26})=J(m_{2|b},C_{27})=J(m_{2|b},C_{2667})=0, \\
&&J(m_{2|7})=J(m_{2|7},C_{67})=J(m_{2|7},m_{2|1})=0,%
\;J(m_{2|7},m_{2|2})=J(m_{2|0},C_{27}), \\
&&J(m_{2|7},C_{27})=J(m_{2|7},C_{2667})=0,\;J(C_{67})=J(C_{67},m_{2|1})=0 \\
&&J(m_{2|7},C_{26})+J(C_{67},m_{2|2})=0,%
\;J(C_{67},C_{26})=J(m_{2|0},C_{2667}), \\
&&J(C_{67},C_{27})=J(C_{67},C_{2667})=0, \\
&&J(m_{2|1},C_{26})=J(m_{2|1},C_{27})=J(m_{2|1},C_{2667})=0.
\end{eqnarray*}

\subappen{Case $III$}

We find a solution in the form%
\begin{eqnarray*}
&&C^{III}=m_{2|0}+\hbar \Theta _{p|[\infty ]}m_{2|p}+\hbar
^{2}C^{(2)III},\;C^{(2)III}=\gamma ^{i}C_{i}^{(2)III},\;p,i=1,2,3, \\
&&\gamma ^{1}=\Theta _{2|[\infty ]}\Theta _{3|[\infty ]},\;\gamma
^{2}=\Theta _{3|[\infty ]}\Theta _{1|[\infty ]},\;\gamma ^{3}=\Theta
_{1|[\infty ]}\Theta _{2|[\infty ]},\;\varepsilon _{C_{i}^{(2)III}}=1.
\end{eqnarray*}

We have%
\begin{equation*}
J(m_{2|0},C^{(2)III})+J(\Theta _{b}m_{2|q},\Theta _{3}m_{2|3})=0,\;q=1,2,
\end{equation*}%
or%
\begin{equation}
\gamma ^{i}J(m_{2|0},C_{i}^{(2)III})+\gamma ^{2}J(m_{2|1},m_{2|3})+\gamma
^{1}J(m_{2|3},m_{2|2})=0,  \label{5.3.1}
\end{equation}%
where we used relation $J(m_{2|1},m_{2|2})=0$.

Consider eq. (\ref{5.3.1}) for the functions $f=\hat{f}_{0}$, $g=\hat{g}_{1}$%
, $h=\hat{h}_{1}$,%
\begin{equation*}
\lbrack x\cup \mathrm{supp}(f_{0})]\cap \lbrack \mathrm{supp}(g_{1})\cup
\mathrm{supp}(h_{1})]=\varnothing ,
\end{equation*}%
and $f_{0}(x)=1$. We have%
\begin{equation}
\gamma ^{i}\hat{C}_{i}^{(2)III}(z|\eta (g_{1}h_{1}^{\prime }-g_{1}^{\prime
}h_{1}),\hat{f}_{0})+\gamma ^{2}\overline{g_{1}^{\prime \prime \prime }h_{1}}%
-2\gamma ^{1}\overline{\theta _{x}g_{1}^{\prime \prime \prime }h_{1}}=0
\label{5.3.2}
\end{equation}

Let $g(y)=1$ for $y\in \mathrm{supp}(h_{1})$. It follows from eq. (\ref%
{5.3.2}) that%
\begin{equation*}
\gamma ^{i}\hat{C}_{i}^{(2)III}(z|\eta h_{1}^{\prime },\hat{f}_{0})=0.
\end{equation*}

Let $g(y)=y$ for $y\in \mathrm{supp}(h_{1})$. It follows from eq. (\ref%
{5.3.2}) that%
\begin{eqnarray*}
&&\gamma ^{i}\hat{C}_{i}^{(2)III}(z|\eta (yh_{1}^{\prime }-h_{1}),\hat{f}%
_{0})=\gamma ^{i}\hat{C}_{i}^{(2)III}(z|\eta \lbrack (yh_{1})^{\prime
}-2h_{1}],\hat{f}_{0})=0\;\Longrightarrow \\
&&\gamma ^{i}\hat{C}_{i}^{(2)III}(z|\eta h_{1}),\hat{f}_{0})=0\;%
\Longrightarrow \;\gamma ^{2}\overline{g_{1}^{\prime \prime \prime }h_{1}}%
-2\gamma ^{1}\overline{\theta _{x}g_{1}^{\prime \prime \prime }h_{1}}%
=0\;\Longrightarrow \\
&&\gamma ^{1}=\gamma ^{2}=0,
\end{eqnarray*}%
or%
\begin{equation}
\Theta _{1|[\infty ]}\Theta _{3|[\infty ]}=\Theta _{2|[\infty ]}\Theta
_{3|[\infty ]}=0  \label{5.3.3}
\end{equation}

Eq. (\ref{5.3.3}) has two types of solutions.

$III_{1}$: $\Theta _{3|[\infty ]}=0$, $\Theta _{1|[\infty ]}$ and $\Theta
_{2|[\infty ]}$ are arbitrary.

$III_{2}$:%
\begin{equation*}
\Theta _{3|[\infty ]}\neq 0,\;\Theta _{q|[\infty ]}=c_{q|[\infty ]}\Theta
_{3|[\infty ]},\;q=1,2
\end{equation*}

Respectively, we have two types of exact solutions.

$III_{1}$:
\begin{equation*}
C^{III_{1}}=m_{2|0}+\hbar \Theta _{q|[\infty ]}m_{2|q},\;q=1,2.
\end{equation*}

$III_{2}$:%
\begin{equation*}
C^{III_{2}}=m_{2|0}+\hbar \Theta _{3|[\infty ]}(c_{q|[\infty
]}m_{2|q}+m_{2|3}),\;q=1,2.
\end{equation*}

\appen{Evaluation of general solution of eq. (\protect\ref{2.1})}\label{app6}

\subappen{Case $I$}

(2). In this case, we can represent the form $C$ as $C=C^{I}=C^{(0)I}+\hbar
^{2}C^{(2)I}+O(\theta ^{3})=\mathcal{N}_{c_{4|[\infty ]}}+\hbar
^{2}C^{(2)I}+O(\theta ^{3})$ where $C^{(2)I}$ has the properties $%
\varepsilon _{C^{(2)I}}=1$, $C^{(2)I}=O(\theta ^{2})$, and satisfies the
equation%
\begin{equation*}
J(\mathcal{N}_{c_{4|[\infty ]}},C^{(2)I})=0.
\end{equation*}

Let $c_{4|[\infty ]}=\hbar ^{k_{0}-1}c_{4|k_{0}}+O(\hbar ^{k_{0}})$, .$%
c_{4|k_{0}}\neq 0$ Then, in orders $k\leq k_{0},$ $C_{k}^{(2)I}$ satisfy the
equation%
\begin{equation*}
J(m_{2|0},C_{k}^{(2)I})=0
\end{equation*}%
and we have%
\begin{eqnarray*}
&&C^{(2)I}=\Theta _{a|[k_{0}]}^{(2)}m_{2|a}+m_{2|7}^{(2)|[k_{0}]}+\hbar
^{k_{0}}C_{k_{0}+1}^{(2)I}+O(\hbar ^{k_{0}+1}), \\
&&\Theta _{a|[k_{0}]}^{(2)}=\sum_{k=1}^{k_{0}}\hbar ^{k-1}\Theta
_{a|k}^{(2)},\;\Theta _{a|k}^{(2)}=O(\theta ^{2}),\;a=4,5,6. \\
&&m_{2|7}^{(2)||[k_{0}]}=\sum_{k=1}^{k_{0}}\hbar
^{k-1}m_{2|7}^{(2)|k},\;m_{2|7}^{(2)k}=O(\theta ^{2})
\end{eqnarray*}%
Represent $C^{I}$ in the form%
\begin{eqnarray*}
&&C^{I}\!
=\mathcal{N}_{c_{4|[\infty ]}^{[2]|[k_{0}]}}%
\!%
+\hbar ^{2}\Theta
_{b|[k_{0}]}^{(2)}m_{2|b}(z|f,g)+\hbar ^{2}m_{2|7}^{(2)|[k_{0}]}
\!%
+\hbar
^{k_{0}+2}C_{k_{0}+1}^{(2)I}(z|f,g)+O(\hbar ^{k_{0}+3})+O(\theta ^{3}), \\
&&c_{4|[\infty ]}^{[2]|[k_{0}]}=c_{4|[\infty ]}+\hbar ^{2}\Theta
_{4|[k_{0}]}^{(2)},\;b=5,6,\;J(N_{c_{4|[\infty ]}^{(2)|[k_{0}]}})=0.
\end{eqnarray*}

\subsubappen{ ($k_{0}+1$)-th order}

It follows from eq. (\ref{2.1}) that%
\begin{equation}
J(m_{2|0},C_{k_{0}+1}^{(2)I})+c_{4|k_{0}}\Theta
_{b|1}^{(2)}J(m_{2|4},m_{2|b})+c_{4|k_{0}}J(m_{2|4},m_{2|7}^{(2)|1})=0,%
\;b=5,6..  \label{6.1.1.1}
\end{equation}

As we seen above, see subsec 2 of sec 3, it follows from eq. (\ref{6.1.1.1})
that%
\begin{equation*}
\Theta _{b|1}^{(2)}=m_{2|7}^{(2)|1}=0,\;C_{k_{0}+1}^{(2)I}=\Theta
_{b|k_{0}+1}^{(2)}m_{2|b}+m_{2|7}^{(2)|k_{0}+1}.
\end{equation*}

Represent $C^{I}$ in the form%
\begin{eqnarray*}
&&C^{I}=\mathcal{N}_{c_{4|[\infty ]}^{(2)|[k_{0}+1]}}+\hbar ^{2}\Theta
_{b|[k_{0}+1]}^{(2)}m_{2|b}(z|f,g)+\hbar ^{2}m_{2|7}^{(2)|[k_{0}+1]}+\hbar
^{k_{0}+3}C_{k_{0}+2}^{(2)I}(z|f,g)+ \\
&&\,+O(\hbar ^{k_{0}+4})+O(\theta ^{3}),\;c_{4|[\infty
]}^{(2)|[k_{0}+1]}=c_{4|[\infty ]}+\hbar ^{2}\Theta
_{4|[k_{0}+1]}^{(2)},\;b=5,6, \\
&&J(N_{c_{4|[\infty ]}^{(2)|[k_{0}+1]}})=0.\;\Theta
_{b|1}^{(2)}=m_{2|7}^{(2)|1}=0,
\end{eqnarray*}%
and so on.

Finally we have%
\begin{equation*}
C^{I}=\mathcal{N}_{c_{4|[\infty ]}^{(2)|[\infty ]}}+\hbar
^{3}C^{(3)I}+O(\theta ^{4}),
\end{equation*}

(3). $C^{(3)I}$ satisfies the equation%
\begin{equation*}
J(\mathcal{N}_{c_{4|[\infty ]}},C^{(3)I})=0.
\end{equation*}

We have%
\begin{eqnarray*}
&&C^{(3)I}=\Theta _{p|[k_{0}]}^{(3)}m_{2|p}+\hbar
^{k_{0}}C_{k_{0}+1}^{(3)I}+O(\hbar ^{k_{0}+1}), \\
&&\Theta _{p|[k_{0}]}^{(3)}=\sum_{k=1}^{k_{0}}\hbar ^{k-1}\Theta
_{p|k}^{(3)},\;\Theta _{p|k}^{(3)}=O(\theta ^{3}),\;p=1.2.3.
\end{eqnarray*}%
In ($k_{0}+1$)-th order we obtain

\begin{equation}
J(m_{2|0},C_{k_{0}+1}^{(3)I})+c_{4|k_{0}}\Theta
_{p|1}^{(3)}J(m_{2|4},m_{2|p})=0,\;p=1,2,3.  \label{6.1.1.2}
\end{equation}

As we seen above, see subsubsec. 1 of subsec.2 of sec 4, it follows from eq.
(\ref{6.1.1.2}) that%
\begin{equation*}
\Theta _{p|1}^{(3)}=0,\;C_{k_{0}+1}^{(3)I}=\Theta _{p|k_{0}+1}^{(3)}m_{2|p}.
\end{equation*}

and so on. Thus we find $C^{(3)I}=0$ and

\begin{equation*}
C^{I}=\mathcal{N}_{c_{4|[\infty ]}^{(2)|[\infty ]}}+\hbar
^{4}C^{(4)I}+O(\theta ^{5}),\;J(\mathcal{N}_{c_{4|[\infty ]}},C^{(4)I})=0.
\end{equation*}

Finally we obtain%
\begin{eqnarray*}
&&C^{I}=\mathcal{N}_{c_{4|[\infty ]}^{[\infty ]}},\;c_{4|[\infty ]}^{[\infty
]}=\sum_{k,l=0}^{\infty }\hbar ^{k+2l-1}\theta ^{2l}c_{4|k}^{l}, \\
&&c_{4|k}^{0}=0,\;0\leq k\leq k_{0}-1,\;c_{4|k_{0}}^{0}=c_{4|k_{0}}\neq 0.
\end{eqnarray*}

\subappen{Case $II$}

(2). In this case, we can represent the form $C$ as $C=C^{II}=C^{(0)II}+%
\hbar C^{(1)II}+\hbar ^{2}C^{(2)II}+O(\theta ^{3})$, where $C^{(0)II}$ is
given by eq. (\ref{3.5.1}), $C^{(1)II}$ is given by eq. (\ref{4.3.4.2}), $%
J(C^{(0)II}+C^{(1)II})=0$, $C^{(2)II}$ has the properties $\varepsilon
_{C^{(2)II}}=1$, $C^{(2)II}=O(\theta ^{2})$, and satisfies the equation%
\begin{equation*}
J(C^{(0)II},C^{(2)II})=0.
\end{equation*}

Introduce notation%
\begin{eqnarray*}
&&C^{(0)II}+\hbar C^{(1)II}=\mathcal{N}^{II}(c_{b|[\infty ]},m_{2|7|[\infty
]},\Theta _{q|[\infty ]}), \\
&&b=5,6,\;q=1,2.
\end{eqnarray*}

Let $c_{b|[\infty ]}=\hbar ^{k_{0}-1}c_{b|k_{0}}+O(\hbar ^{k_{0}})$, $%
m_{2|7|[\infty ]}=\hbar ^{k_{0}-1}m_{2|7|k_{0}}+O(\hbar ^{k_{0}})$, $%
k_{0}\geq 1$, $b=5,6$, and at least one of the quantities $c_{b|k_{0}}$, $%
m_{2|7|k_{0}}$ is not equal to zero. Then, in orders $k\leq k_{0},$ $%
C_{k}^{(2)II}$ satisfy the equation%
\begin{equation*}
J(m_{2|0},C_{k}^{(2)II})=0
\end{equation*}%
and we have%
\begin{eqnarray*}
&&C^{(2)II}=\Theta _{a|[k_{0}]}^{(2)}m_{2|a}+m_{2|7}^{(2)|[k_{0}]}+\hbar
^{k_{0}}C_{k_{0}+1}^{(2)I}+O(\hbar ^{k_{0}+1}), \\
&&\Theta _{a|[k_{0}]}^{(2)}=\sum_{k=1}^{k_{0}}\hbar ^{k-1}\Theta
_{a|k}^{(2)},\;\Theta _{a|k}^{(2)}=O(\theta ^{2}),\;a=4,5,6. \\
&&m_{2|7}^{(2)||[k_{0}]}=\sum_{k=1}^{k_{0}}\hbar
^{k-1}m_{2|7}^{(2)|k},\;m_{2|7}^{(2)k}=O(\theta ^{2})
\end{eqnarray*}%
Represent $C^{II}$ in the form%
\begin{eqnarray*}
&&C^{II}\!=\!\mathcal{N}^{II}(c_{b|[\infty ]}^{[2]|[k_{0}]},m_{2|7|[\infty
]}^{[2]|[k_{0}]},\Theta _{q|[\infty ]})\!+\!\hbar ^{2}\Theta
_{4|[k_{0}]}^{(2)}m_{2|4}\!+\!\hbar ^{k_{0}\!+\!2}C_{k_{0}+1}^{(2)I}(z|f,g)\!+\!O(\hbar
^{k_{0}+3})\!+\!O(\theta ^{3}), \\
&&c_{b|[\infty ]}^{(2)|[k_{0}]}=c_{b|[\infty ]}+\hbar ^{2}\Theta
_{b|[k_{0}]}^{(2)|[k_{0}]},\;m_{2|7|[\infty ]}^{[2]|[k_{0}]}=m_{2|7|[\infty
]}+\hbar ^{2}m_{2|7}^{(2)|[k_{0}]},\;b=5,6.
\end{eqnarray*}

\subsubappen{ ($k_{0}+1$)-th order}

It follows from eq. (\ref{2.1}) that%
\begin{equation}
J(m_{2|0},C_{k_{0}+1}^{(2)II})+c_{b|k_{0}}\Theta
_{4|1}^{(2)}J(m_{2|b},m_{2|4})+\Theta
_{4|1}^{(2)}J(m_{2|7|k_{0}},m_{2|4})=0,\;b=5,6..  \label{6.2.1.1}
\end{equation}

As we seen above, see subsubsec. 2 of subsec. 4 of sec. 3, it follows from
eq. (\ref{6.2.1.1}) that%
\begin{equation*}
\Theta _{4|1}^{(2)}=0,\;C_{k_{0}+1}^{(2)II}=\Theta
_{a|k_{0}+1}^{(2)}m_{2|a}+m_{2|7}^{(2)|k_{0}+1}.
\end{equation*}

Represent $C^{II}$ in the form%
\begin{eqnarray*}
&&C^{II}=\mathcal{N}^{II}(c_{b|[\infty ]}^{[2]|[k_{0}+1]},m_{2|7|[\infty
]}^{[2]|[k_{0}+1]},\Theta _{q|[\infty ]})+\hbar ^{2}\Theta
_{4|[k_{0}+1]}^{(2)}m_{2|4}+ \\
&&+\hbar ^{k_{0}+3}C_{k_{0}+2}^{(2)II}(z|f,g)+O(\hbar ^{k_{0}+4})+O(\theta
^{3}), \\
&&c_{b|[\infty ]}^{(2)|[k_{0}+1]}=c_{b|[\infty ]}+\hbar ^{2}\Theta
_{b|[k_{0}]}^{(2)|[k_{0}+1]}, \\
&&m_{2|7|[\infty ]}^{[2]|[k_{0}+1]}=m_{2|7|[\infty ]}+\hbar
^{2}m_{2|7}^{(2)|[k_{0}+1]},\;b=5,6,\;\Theta _{4|1}^{(2)}=0.
\end{eqnarray*}%
and so on.

Finally we have%
\begin{equation*}
C^{II}=\mathcal{N}^{II}(c_{b|[\infty ]}^{[2]|[\infty ]},m_{2|7|[\infty
]}^{[2]|[\infty ]},\Theta _{q|[\infty ]})+\hbar ^{3}C^{(3)I}+O(\theta ^{4}),
\end{equation*}

(3). $C^{(3)II}$ satisfies the equation%
\begin{equation*}
J(C^{(0)II},C^{(3)II})=0.
\end{equation*}

We have%
\begin{eqnarray*}
&&C^{(3)I}=\Theta _{p|[k_{0}]}^{(3)}m_{2|p}+\hbar
^{k_{0}}C_{k_{0}+1}^{(3)I}+O(\hbar ^{k_{0}+1}), \\
&&\Theta _{p|[k_{0}]}^{(3)}=\sum_{k=1}^{k_{0}}\hbar ^{k-1}\Theta
_{p|k}^{(3)},\;\Theta _{p|k}^{(3)}=O(\theta ^{3}),\;p=1.2.3.
\end{eqnarray*}

Represent $C^{II}$ in the form%
\begin{eqnarray*}
&&C^{II}=\mathcal{N}^{II}(c_{b|[\infty ]}^{[2]|[\infty ]},m_{2|7|[\infty
]}^{[2]|[\infty ]},\Theta _{q|[\infty ]}^{(3)|[k_{0}]})+\Theta
_{3|[k_{0}]}^{(3)}m_{2|3}+ \\
&&+\hbar ^{k_{0}+3}C_{k_{0}+1}^{(3)II}+O(\hbar ^{k_{0}+4})+O(\theta ^{4}), \\
&&\Theta _{q|[\infty ]}^{(3)|[k_{0}]}=\Theta _{q|[\infty ]}+\Theta
_{q|[k_{0}]}^{(3)},\;q=1,2.
\end{eqnarray*}%
In ($k_{0}+1$)-th order we obtain

\begin{eqnarray}
&&J(m_{2|0},C_{k_{0}+1}^{(3)II})+c_{b|k_{0}}\Theta
_{3|1}^{(3)}J(m_{2|b},m_{2|3})+\Theta _{3|1}^{(3)}J(m_{2|7|k_{0}},m_{2|3})=
\notag \\
&&\,=0,\;5,6..  \label{6.2.1.2}
\end{eqnarray}

As we seen above, see subsubsec. 1 of subsec. 3 of sec 4, it follows from
eq. (\ref{6.2.1.2}) that%
\begin{eqnarray*}
&&\Theta _{3|1}^{(3)}=0,\;C_{k_{0}+1}^{(3)II}=\Theta
_{p|k_{0}+1}^{(3)}m_{2|p}\;\Longrightarrow \\
&&C^{II}=\mathcal{N}^{II}(c_{b|[\infty ]}^{[2]|[\infty ]},m_{2|7|[\infty
]}^{[2]|[\infty ]},\Theta _{q|[\infty ]}^{(3)|[k_{0}+1]})+\Theta
_{3|[k_{0}+1]}^{(3)}m_{2|3}+ \\
&&\,+\hbar ^{k_{0}+4}C_{k_{0}+2}^{(3)II}+O(\hbar ^{k_{0}+5})+O(\theta
^{4}),\;\Theta _{3|1}^{(3)}=0,
\end{eqnarray*}%
and so on, such that we obtain.%
\begin{eqnarray*}
&&C^{II}=\mathcal{N}^{II}(c_{b|[\infty ]}^{[2]|[\infty ]},m_{2|7|[\infty
]}^{[2]|[\infty ]},\Theta _{q|[\infty ]}^{(3)|[\infty ]})+\hbar
^{4}C^{(4)II}+O(\theta ^{5}), \\
&&J(C^{(0)II},C^{(4)II})=0
\end{eqnarray*}

Finally we find%
\begin{eqnarray*}
&&C^{II}=\mathcal{N}^{II}(c_{b|[\infty ]}^{[\infty ]},m_{2|7|[\infty
]}^{[\infty ]},\Theta _{q|[\infty ]}^{[\infty ]}), \\
&&c_{b|[\infty ]}^{[\infty ]}=\sum_{k=1}^{\infty }\sum_{l=0}^{\infty }\hbar
^{k+2l-1}\theta ^{2l}c_{b|k}^{l},\;c_{b|k}^{0}=0,\;0\leq k\leq k_{0}-1, \\
&&m_{2|7|[\infty ]}^{[\infty ]}=\sum_{k=1}^{\infty }\sum_{l=0}^{\infty
}\hbar ^{k+2l-1}\theta ^{2l}m_{2|7|k}^{l},\;m_{2|7|k}^{0}=0,\;0\leq k\leq
k_{0}-1, \\
&&\Theta _{q|[\infty ]}^{(2)[\infty ]}=\sum_{k=1}^{\infty
}\sum_{l=0}^{\infty }\hbar ^{k+2l-1}\theta
^{2l+1}c_{q|k}^{l},\;q=1,2,\;b=5,6,
\end{eqnarray*}%
where, at least, one of the numbers $c_{b|k_{0}}^{0}$, $m_{2|7|k_{0}}^{0}$
is not equal to zero.

\subappen{Case $III$, only three Grassmann generators $\protect\theta %
_{p}, $ $p=1,2,3$}

In this case, we have%
\begin{eqnarray*}
C &=&C^{III}=m_{2|0}+\theta _{p}c_{p|p^{\prime }}m_{2|p^{\prime }}+\gamma
^{i}C_{i}^{III(2)}+\beta C^{III(3)}, \\
\gamma ^{i} &=&\frac{1}{2}\varepsilon ^{ijk}\theta _{j}\theta _{k},\;\beta
=\theta _{1}\theta _{2}\theta _{3},\;p,p^{\prime },i,j,k=1,2,3.
\end{eqnarray*}

It follows from eq. (\ref{2.1}) that%
\begin{eqnarray}
&&J(m_{2|0},C_{i}^{III(2)})+\varepsilon
^{ijk}c_{j|q}c_{k|3}J(m_{2|q},m_{2|3})=0,  \label{6.3.4.1a} \\
&&J(m_{2|0},C^{III(3)})+c_{i|p}J(m_{2|p},C_{i}^{III(2)})=0,  \label{6.3.4.1b}
\end{eqnarray}%
where we used the equalities $\theta _{i}\theta _{j}=\varepsilon
^{ijk}\gamma ^{k}$, $\theta _{i}\gamma ^{k}=\delta _{i}^{k}\beta $.

It follows from eq. (\ref{6.3.4.1a}) (see subsec. 3 of sec. 5)%
\begin{eqnarray*}
&&\varepsilon ^{ijk}c_{j|q}c_{k|3}=0, \\
&&J(m_{2|0},C_{i}^{III(2)})\;\Longrightarrow
\;C_{i}^{III(2)}=d_{i|a}m_{2|a}+m_{2|7,i},\;a=4,5,6.
\end{eqnarray*}

Eq. (\ref{6.3.4.1b}) reduces to the form%
\begin{eqnarray*}
&&J(m_{2|0},\tilde{C}^{III(3)})+c_{i|q}d_{i|4}J(m_{2|q},m_{2|4})+ \\
&&\;+c_{i|3}J(m_{2|3},d_{i|a}m_{2|a}+m_{2|7,i})=0,\;q=1,2,\;a=4,5,6, \\
&&\tilde{C}^{III(3)}=C^{III(3)}+c_{i|2}(d_{i|6}C_{26}+C_{27,i}).
\end{eqnarray*}

Let $f=\hat{f}_{1}$, $g=\hat{g}_{1}$, $h=\hat{h}_{1}$. In this case $%
J(m_{2|3},d_{i|a}m_{2|a}+m_{2|7,i})=0$ and we obtain that%
\begin{equation*}
J(m_{2|0},\tilde{C}^{III(3)})+c_{i|q}d_{i|4}J(m_{2|q},m_{2|4})=0,
\end{equation*}%
and we find that (see subsubsec. 1 of subsec. 2 of sec.4, eq. (\ref{4.2.1.1a}%
))%
\begin{eqnarray*}
&&c_{i|q}d_{i|4}=0, \\
&&J(m_{2|0},\tilde{C}^{III(3)})+c_{i|3}J(m_{2|3},d_{i|a}m_{2|a}+m_{2|7,i})=0.
\end{eqnarray*}

Let $f=\hat{f}_{0}$, $g=\hat{g}_{0}$, $h=\hat{h}_{1}$ and their domains have
the properties $[x\cup \mathrm{supp}(h_{1})]\cap \lbrack \mathrm{supp}%
(f_{0})\cup \mathrm{supp}(g_{0})]=\varnothing $, or $[x\cup \mathrm{supp}%
(g_{0})]\cap \lbrack \mathrm{supp}(f_{0})\cup \mathrm{supp}(h_{1})]=x\cap
\mathrm{supp}(g_{0})=\varnothing $, or $x\cup \mathrm{supp}(g_{0})]\cap
\lbrack \mathrm{supp}(f_{0})\cup \mathrm{supp}(h_{1})]=\varnothing $. In
this case $J(m_{2|3},m_{2|4})=0$ and we obtain that%
\begin{equation}
J(m_{2|0},\tilde{C}^{III(3)})+c_{i|3}J(m_{2|3},d_{i|b}m_{2|b}+m_{2|7,i})=0,%
\;b=5,6.  \label{6.3.4.2}
\end{equation}

It follows from (\ref{6.3.4.2}) (see subsubsec. 1 of subsec. 3 of sec.4) that%
\begin{eqnarray}
&&c_{i|3}d_{i|b}=c_{i|3}\omega _{i}=0,  \notag \\
&&J(m_{2|0},\tilde{C}^{III(3)})+c_{i|3}d_{i|4}J(m_{2|3},m_{2|4})=0.
\label{6.3.4.3}
\end{eqnarray}

It follows from (\ref{6.3.4.3}) (see subsubsec. 1 of subsec. 2 of sec.4, eq.
(\ref{4.2.1.4})) that%
\begin{eqnarray*}
&&c_{i|3}d_{i|4}=0, \\
&&J(m_{2|0},\tilde{C}^{III(3)})=0\;\Longrightarrow \;\tilde{C}%
^{III(3)}=e_{p}m_{2|p^{\prime }}.
\end{eqnarray*}%
Thus, we obtain that the general solution of the type of $C^{III}$ with only
three generators $\theta _{i}$ has the form%
\begin{eqnarray*}
&&C^{III}=m_{2|0}+\hbar \theta _{p}c_{p|p^{\prime }}m_{2|p^{\prime }}+\hbar
^{2}\gamma ^{i}(d_{i|a}m_{2|a}+m_{2|7,i})+ \\
&&+\hbar ^{3}\beta \lbrack e_{p}m_{2|p}-c_{i|2}(d_{i|6}C_{26}+C_{27,i})], \\
&&p,p^{\prime }=1,2,3,\;q=1,2,\;a=4,5,6,
\end{eqnarray*}%
where $c_{p|p^{\prime }}$, $d_{i|a}$, $e_{p}$, and $\omega _{i}$ are
arbitrary series in $\hbar $ satisfying the conditions%
\begin{equation*}
\varepsilon ^{ijk}c_{j|q}c_{k|3}=c_{i|q}d_{i|4}=c_{i|3}d_{i|a}=c_{i|3}\omega
_{i}=0.
\end{equation*}

%\appen{qq2}\label{app2}

%\subappen{General solution}

\end{document}